\theoremstyle{plain}
\newtheorem{thm}{Theorem}[section]
\newtheorem{cor}[thm]{Corollary}
\newtheorem{lem}[thm]{Lemma}
\newtheorem{prop}[thm]{Proposition}
\theoremstyle{definition}
\newtheorem{rem}[thm]{Remark}
\DeclarePairedDelimiter{\paren}{\lparen}{\rparen}
\DeclarePairedDelimiter{\abs}{\lvert}{\rvert}
\DeclarePairedDelimiter{\bracket}{[}{]}
\DeclarePairedDelimiter{\curly}{\{}{\}}
\DeclarePairedDelimiter{\norm}{\lVert}{\rVert}
\DeclarePairedDelimiter{\inner}{\langle}{\rangle}
\newcommand{\mbb}[1]{\mathbb{#1}}
\newcommand{\mcal}[1]{\mathcal{#1}}
\newcommand{\mscr}[1]{\mathscr{#1}}
\newcommand{\cross}{\times}
\newcommand{\Id}{\mathrm{Id}}
\newcommand{\intersect}{\cap}
\newcommand{\union}{\cup}
\xpatchcmd\ALG@step{\arabic{ALG@line}}{\fmtlinenumber{ALG@line}}{}{}
\let\fmtlinenumber\arabic % by default, line numbers are arabic numbers
\newcommand{\mathalph}[1]{$\Alph{#1}$}
\begin{document}

\begin{frontmatter}
%%%%%%%%%%%%%%%%%%%%%%%%%%%%%%%%%%%%%%%%%%%%%%
%%                                          %%
%% Enter the title of your article here     %%
%%                                          %%
%%%%%%%%%%%%%%%%%%%%%%%%%%%%%%%%%%%%%%%%%%%%%%
\title{Extensions of the solidarity principle of the spectral gap for Gibbs samplers to their blocked and collapsed variants}
%\title{A sample article title with some additional note\thanksref{T1}}
\runtitle{Extensions of the solidarity principle}
%\thankstext{T1}{A sample of additional note to the title.}

\begin{aug}
%%%%%%%%%%%%%%%%%%%%%%%%%%%%%%%%%%%%%%%%%%%%%%%
%% Only one address is permitted per author. %%
%% Only division, organization and e-mail is %%
%% included in the address.                  %%
%% Additional information such as            %%
%% identifying the corresponding author must %%
%% be included in in the Acknowledgments     %%
%% section if necessary.                     %%
%% ORCID can be inserted by command:         %%
%% \orcid{0000-0000-0000-0000}               %%
%%%%%%%%%%%%%%%%%%%%%%%%%%%%%%%%%%%%%%%%%%%%%%%
\author[A]{\fnms{Xavier}~\snm{Mak}\ead[label=e1]{xaviermak@ufl.edu}}
\and
\author[A]{\fnms{James P.}~\snm{Hobert}\ead[label=e2]{jhobert@stat.ufl.edu}}
%%%%%%%%%%%%%%%%%%%%%%%%%%%%%%%%%%%%%%%%%%%%%%
%% Addresses                                %%
%%%%%%%%%%%%%%%%%%%%%%%%%%%%%%%%%%%%%%%%%%%%%%
\address[A]{Department of Statistics,
	University of Florida\printead[presep={,\ }]{e1,e2}}
\runauthor{X. Mak and J. P. Hobert}
\end{aug}

\begin{abstract}
Connections of a spectral nature are formed between Gibbs samplers and their blocked and collapsed variants. The solidarity principle of the spectral gap for full Gibbs samplers is generalized to different cycles and mixtures of Gibbs steps. This generalized solidarity principle is employed to establish that every cycle and mixture of Gibbs steps, which includes blocked Gibbs samplers and collapsed Gibbs samplers, inherits a spectral gap from a full Gibbs sampler. Exact relations between the spectra corresponding to blocked and collapsed variants of a Gibbs sampler are also established. An example is given to show that a blocked or collapsed Gibbs sampler does not in general inherit geometric ergodicity or a spectral gap from another blocked or collapsed Gibbs sampler. 
\end{abstract}

%\begin{keyword}[class=MSC]
%\kwd[Primary ]{???}
%\kwd{???}
%\kwd[; secondary ]{???}
%\end{keyword}

\begin{keyword}
\kwd{alternating projections}
\kwd{deterministic-scan}
\kwd{geometric ergodicity}
\kwd{Markov chain Monte Carlo}
\kwd{random-scan}
\kwd{simultaneous projections}
\end{keyword}

\end{frontmatter}

%%%% Main text entry area:

\section{Introduction}

Gibbs samplers (\cite{Geman:Geman:1984}, \cite{Gelfand:Smith:1990}) are commonly used Markov chain Monte Carlo (MCMC) algorithms to sample from intractable multi-dimensional probability distributions. The fundamental strategy that Gibbs samplers employ is to update a given vector by sampling one variable or a block of variables at a time through draws from lower-dimensional conditional distributions associated with the target distribution. The nature of these component-wise sampling algorithms can be seen by viewing the Markov kernels of Gibbs samplers as combinations of simpler Markov kernels that each correspond to a single Gibbs step or update. In this perspective, Gibbs samplers are members of a larger family of component-wise algorithms: one comprising of any algorithm with enough of these Gibbs steps assembled so that each coordinate is updated. By varying the choice of Gibbs steps and how these Gibbs steps are combined, there are a number of algorithms to use; some may be more computationally feasible, while others may be more theoretically tractable. 

To better describe the different algorithms, let us informally define the setting in which we view them. (The formal account is provided in Section~\ref{section: Preliminaries}.) Let \(\mcal{X} = \mcal{X}_1 \cross \cdots \cross \mcal{X}_K\), where \(K \geq 2\), denote the state space on which a probability measure \(\pi\) of interest is defined.  On \(L^2(\pi)\), the space of square-integrable functions with respect to \(\pi\), each Gibbs step defines an orthogonal projection \(P_F\) onto some intersection \(F\) of the closed subspaces \(E_i\), where \(i = 1, \ldots, K\), of functions that are constant with respect to the \(i\)th coordinate. Combinations of these Gibbs steps are formed through a \emph{cycle} or a \emph{mixture}: a product or convex combination, respectively, of Gibbs steps such that the resulting sampling algorithm updates every coordinate with positive probability. Consider the following examples.

Two widely implemented Gibbs samplers are the \emph{deterministic-scan Gibbs sampler} and the \emph{random-scan Gibbs sampler}. The deterministic-scan Gibbs sampler updates each coordinate one at a time and has operator
\begin{equation*}
	P_{E_{\rho(1)}} \cdots P_{E_{\rho(K)}},
\end{equation*}
where \(\rho\) is some permutation of \(\{1, \ldots, K\}\). The random-scan Gibbs sampler randomly chooses a coordinate to update in each iteration and has operator 
\begin{equation*}
	\sum_{i = 1}^K w_i P_{E_i},
\end{equation*}
where \(w_1, \ldots, w_K\) are positive weights satisfying \(\sum_{i = 1}^K w_i = 1\). The sampling algorithms of these two samplers are delineated in Algorithms~\ref{alg: deterministic-scan Gibbs sampler corresponding to a permutation rho} and \ref{alg: random-scan Gibbs sampler with weights w}. Collectively, these two types of Gibbs samplers are sometimes called \emph{full} Gibbs samplers.

\begin{algorithm}[H]
	\caption{Deterministic-scan Gibbs sampler corresponding to a permutation \(\rho\) of \(\{1, \ldots, K\}\)} 
	\label{alg: deterministic-scan Gibbs sampler corresponding to a permutation rho}
	\begin{algorithmic}[1]
		\setcounter{ALG@line}{-1}
		\State {\it Input:} Current value \(X^{(n)} = (x_{\rho(1)}^{(n)}, \ldots, x_{\rho(K)}^{(n)})\).
		\State Draw \(X^{(n+1)}_{\rho(1)} \sim \pi(\cdot|x_{\rho(2)}^{(n)}, \ldots, x_{\rho(K)}^{(n)})\), and call the observed value \(x_{\rho(1)}^{(n+1)}\).
		\State Draw \(X^{(n+1)}_{\rho(2)} \sim \pi(\cdot|x_{\rho(1)}^{(n+1)}, x_{\rho(3)}^{(n)}, \ldots, x_{\rho(K)}^{(n)})\), and call the observed value \(x_{\rho(2)}^{(n+1)}\). 
		
		\quad \vdots
		\setcounter{ALG@line}{10}
		{\let\fmtlinenumber\mathalph % switch to alphabetic labels
			\State Draw \(X^{(n+1)}_{\rho(K)} \sim \pi(\cdot|x_{\rho(1)}^{(n+1)}, \ldots, x_{\rho(K-1)}^{(n+1)})\), and call the observed value \(x_{\rho(K)}^{(n+1)}\).
		}
		
		\setcounter{ALG@line}{10}
		\renewcommand\mathalph[1]{$\Alph{#1} + 1$}
		{\let\fmtlinenumber\mathalph 
			\State Set \(n = n+1\).
		}
	\end{algorithmic}
\end{algorithm}

\begin{algorithm}[H]
	\caption{Random-scan Gibbs sampler with weights \(w = (w_1, \ldots, w_K)\)} 
	\label{alg: random-scan Gibbs sampler with weights w}
	\begin{algorithmic}[1]
		\setcounter{ALG@line}{-1}
		\State {\it Input:} Current value \(X^{(n)} = (x_1^{(n)}, \ldots, x_K^{(n)})\).
		
		\State Draw a random variable \(\mscr{I}\) with \(\Pr(\mscr{I} = k) = w_k\) for \(k = 1, \ldots, K\), and call the observed value \(i\).
		
		\State Draw \(X^{(n+1)}_i \sim \pi(\cdot|x_1^{(n)}, \ldots, x_{i - 1}^{(n)}, x_{i+1}^{(n)}, \ldots x_K^{(n)})\), and call the observed value \(x_i^{(n+1)}\).
		
		\State Set \(X^{(n+1)} = (x_1^{(n)}, \ldots, x_{i-1}^{(n)}, x_i^{(n+1)}, x_{i+1}^{(n)}, \ldots, x_K^{(n)})\).
		
		\State Set \(n = n + 1\).
	\end{algorithmic}
\end{algorithm}

In contrast to full Gibbs samplers, \emph{blocked} Gibbs samplers jointly update two or more coordinates. For instance, a blocked Gibbs sampler with operator \(P_{E_1 \intersect E_2} P_{E_3} \cdots P_{E_K}\) has sampling algorithm given by Algorithm~\ref{alg: example of a blocked deterministic-scan Gibbs sampler}. Gibbs steps can also be combined so that a coordinate is updated more than once in one pass of the resulting algorithm; for example, if \(K = 3\), then the cycle \(P_{E_1}P_{E_2}P_{E_1}P_{E_2}P_{E_3}\) updates the first two coordinates twice in each iteration of its algorithm. We additionally consider samplers that target marginal distributions of \(\pi\) formed via cycles and mixtures of \emph{collapsed} Gibbs steps. 

\begin{algorithm}[H]
	\caption{A blocked deterministic-scan Gibbs sampler} 
	\label{alg: example of a blocked deterministic-scan Gibbs sampler}
	\begin{algorithmic}[1]
		\setcounter{ALG@line}{-1}
		\State {\it Input:} Current value \(X^{(n)} = (x_{1}^{(n)}, \ldots, x_{K}^{(n)})\).
		\State Draw \((X^{(n+1)}_1, X^{(n+1)}_2) \sim \pi(\cdot|x_3^{(n)}, \ldots, x_{K}^{(n)})\), and call the observed value \((x_{1}^{(n+1)}, x_2^{(n+1)})\).
		\State Draw \(X^{(n+1)}_{3} \sim \pi(\cdot|x_{1}^{(n+1)}, x_{2}^{(n+1)}, x_4^{(n)}, \ldots, x_{K}^{(n)})\), and call the observed value \(x_{3}^{(n+1)}\). 
		
		\quad \vdots
		\setcounter{ALG@line}{10}
		\renewcommand\mathalph[1]{$\Alph{#1} - 1$}
		{\let\fmtlinenumber\mathalph % switch to alphabetic labels
			\State Draw \(X^{(n+1)}_{K} \sim \pi(\cdot|x_{1}^{(n+1)}, \ldots, x_{K-1}^{(n+1)})\), and call the observed value \(x_{K}^{(n+1)}\).
		}
		
		\setcounter{ALG@line}{10}
		\renewcommand\mathalph[1]{$\Alph{#1}$}
		{\let\fmtlinenumber\mathalph 
			\State Set \(n = n+1\).
		}
	\end{algorithmic}
\end{algorithm}

While updating in blocks or collapsing down variables may not always be practically feasible, the general belief is that blocking and collapsing can help to remove unwanted autocorrelation in the Markov chain and speed up mixing (\cite{Liu:1994}, \cite{Liu:Wong:Kong:1994}). Beware, however, that such a strategy is not foolproof: examples of Gibbs samplers targeting Gaussian distributions where blocking results in slower convergence to stationarity are given in \cite{Liu:Wong:Kong:1994} and \cite{Roberts:Sahu:1997}.

In this paper, we establish relationships of a spectral nature between different cycles and mixtures of Gibbs steps. Generalizing the solidarity result of \cite{Chlebicka:Latuszynski:Miasojedow:2025} regarding full Gibbs samplers, we show that every cycle and mixture corresponding to the same choice of Gibbs steps has a spectral gap or none of them do. The solidarity result of \cite{Chlebicka:Latuszynski:Miasojedow:2025} is the special case between cycles and mixtures of \(P_{E_1}, \ldots, P_{E_K}\): the presence of a spectral gap, or lack thereof, is shared between all deterministic-scan Gibbs samplers with any choice of permutation and all random-scan Gibbs samplers with any choice of weights. Our generalized solidarity principle allows us to establish the following qualitative hierarchy between full Gibbs samplers and cycles and mixtures of Gibbs steps. 

\begin{thm}
	\label{thm: introduction, inheritence of a spectral gap}
	Every cycle and mixture of Gibbs steps, collapsed or otherwise, has a spectral gap whenever a full Gibbs sampler has a spectral gap.
\end{thm}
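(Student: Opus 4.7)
The plan is to use the generalized solidarity principle twice. Given an arbitrary cycle or mixture assembled from Gibbs steps \(P_{F_1}, \ldots, P_{F_m}\)---where each \(F_j\) is an intersection of some \(E_i\)'s, collectively satisfying the coordinate-covering condition---it suffices to exhibit one cycle or mixture of these steps with a spectral gap, since the generalized solidarity principle then transports the gap to every other cycle and mixture built from the same \(P_{F_j}\)'s. I would choose the uniform mixture \(M' = \tfrac{1}{m}\sum_{j=1}^{m} P_{F_j}\) for its self-adjointness. On the other side, the hypothesis combined with the original solidarity principle of \cite{Chlebicka:Latuszynski:Miasojedow:2025} supplies a spectral gap \(\gamma > 0\) for the uniform random-scan full Gibbs sampler \(M = \tfrac{1}{K}\sum_{i=1}^{K} P_{E_i}\). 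Thus everything reduces to comparing the Dirichlet forms of \(M'\) and \(M\).

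\textbf{Dirichlet form comparison.} Since \(M\) and \(M'\) are self-adjoint on \(L^2(\pi)\), the identity \(\langle (I - P_F) f, f\rangle = \|f - P_F f\|^2\) gives Dirichlet forms \(\tfrac{1}{m}\sum_{j} \|f - P_{F_j} f\|^2\) for \(M'\) and \(\tfrac{1}{K}\sum_{i} \|f - P_{E_i} f\|^2\) for \(M\). The coordinate-covering condition supplies, for each \(i\), some index \(j(i)\) with \(F_{j(i)} \subseteq E_i\), and because projection onto a smaller subspace leaves a larger residual we have \(\|f - P_{F_{j(i)}} f\|^2 \geq \|f - P_{E_i} f\|^2\). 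Summing over \(i\) and noting that each \(j\) is picked at most \(K\) times yields
\begin{equation*}
\sum_{j=1}^m \|f - P_{F_j}f\|^2 \;\geq\; \frac{1}{K}\sum_{i=1}^K \|f - P_{E_i}f\|^2, \quad\text{hence}\quad \langle(I - M')f, f\rangle \;\geq\; \tfrac{1}{m}\,\langle(I - M)f, f\rangle.
\end{equation*}
On \(L^2_0(\pi)\) the right-hand side is at least \((\gamma/m)\|f\|^2\), so \(M'\) has a spectral gap of at least \(\gamma/m\). The generalized solidarity principle then promotes this gap to every cycle and mixture of \(P_{F_1}, \ldots, P_{F_m}\).

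\textbf{Collapsed samplers and main obstacle.} For collapsed Gibbs steps, the operators live on \(L^2(\bar\pi)\) for a marginal \(\bar\pi\) of \(\pi\) rather than on \(L^2(\pi)\) itself, so the Dirichlet-form comparison above cannot be applied verbatim. The hard part will be transporting a spectral gap into the marginal setting. My plan is to identify \(L^2(\bar\pi)\) isometrically with the subspace of \(L^2(\pi)\) consisting of functions constant in the collapsed coordinates, translate each collapsed operator into a corresponding blocked operator on \(L^2(\pi)\), and then apply the non-collapsed argument on the marginal. The delicacy is that this subspace need not be invariant under every full Gibbs step on \(L^2(\pi)\), so one cannot simply restrict; instead, one must argue directly---perhaps by showing that a full Gibbs sampler on the marginal is itself a mixture whose Dirichlet form is controlled by the Dirichlet form on \(L^2(\pi)\)---that a spectral gap for the full sampler on \(L^2(\pi)\) yields one for a full sampler on \(L^2(\bar\pi)\), after which the earlier comparison applies.
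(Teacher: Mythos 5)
Your argument for the non-collapsed half is correct and takes a genuinely different route from the paper's. The paper proves Theorem~\ref{thm: inheritance of a spectral gap from the full Gibbs sampler} by repeatedly applying Lemma~\ref{lem: norm(P_M T) <= norm(P_(M_1) ... P_(M_N) T)} to replace each blocked projection \(P_{F_d}\) by the product \(\prod_{i \in I_d} P_{E_i}\), shuttling factors around with Theorem~\ref{thm: equivalent conditions of norm convergence} until a full deterministic-scan cycle appears; you instead compare Dirichlet forms of the two uniform mixtures, using \(F_{j(i)} \subseteq E_i\) to dominate residuals, and then let Corollary~\ref{cor: generalized solidarity principle of the spectral gap} spread the gap to every cycle and mixture of the same steps. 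The steps check out: \(\inner{(\Id - P_F)f, f} = \norm{f - P_F f}^2\) for an orthogonal projection, residual monotonicity under inclusion of ranges, positivity of the mixture restricted to the invariant subspace \(E^\perp\), and the counting bound (each index \(j\) chosen at most \(K\) times). Your route even yields an explicit quantitative bound \(\gamma/m\), which the paper's qualitative argument does not, and it pushes the operator-ordering idea of Proposition~\ref{prop: inequality between spectral radii of two mixtures of Gibbs steps} further by comparing mixtures of different lengths.

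The collapsed half, however, is a genuine gap: you only sketch a plan and explicitly defer the hard step. The missing ingredient is the spectral bridge the paper builds in Propositions~\ref{prop: L^2(pi_I) and bigcap_-I E_i are isometrically isomorphic}--\ref{prop: spectral connection between collapsed Gibbs and Gibbs}: for collapsed steps \(P_{F_d^{(I)}}\) on \(L^2(\pi_I)\), the correct counterparts on \(L^2(\pi)\) are not the original blocked steps but the \emph{augmented} steps \(P_{F_d \cap F}\) with \(F = \bigcap_{i \in [K]\setminus I} E_i\), i.e., steps that also refresh all collapsed coordinates. Each such factor (and \(\Pi\)) has range inside \(F\), so \(F\) is invariant for the augmented cycle or mixture; the restriction to \(F\) is similar via the isometry \(\Phi\) to the collapsed operator; and Lemma~\ref{lem: Rosenthal/Rosenthal Proposition 5.2} (\(\sigma(T) = \sigma(T|_M) \cup \{0\}\) when the range lies in \(M\)) upgrades this to equality of the spectra of \(Q - \Pi\) and of the collapsed operator minus \(\Pi_I\). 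Since the augmented steps form a bona fide cycle or mixture of non-collapsed Gibbs steps covering \([K]\), your own Dirichlet-form argument (or the paper's Theorem~\ref{thm: inheritance of a spectral gap from the full Gibbs sampler}) gives them a gap, which then transfers to the collapsed sampler; this is exactly Corollary~\ref{cor: inheritance of a spectral gap for cycles and mixtures of collapsed Gibbs steps}. Your worry that the subspace of functions of \(x_I\) is not invariant under every full Gibbs step is accurate but beside the point: one never restricts the full sampler itself, only the augmented blocked sampler. Alternatively, your Dirichlet-form computation could be run directly on \(L^2(\pi_I)\) once you have the identity \(\norm{f_I - P_{F_d^{(I)}} f_I}_{L^2(\pi_I)} = \norm{\Phi f_I - P_{F_d \cap F}\,\Phi f_I}_{L^2(\pi)}\) (noting \(\Phi f_I \in E_i\) for \(i \notin I\), so those residuals vanish), but that identity is precisely the bridge your proposal does not supply.
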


\begin{proof}
	The theorem follows from Theorem~\ref{thm: inheritance of a spectral gap from the full Gibbs sampler} and Corollary~\ref{cor: inheritance of a spectral gap for cycles and mixtures of collapsed Gibbs steps}.
\end{proof}

Theorem~\ref{thm: introduction, inheritence of a spectral gap} extends the solidarity property of the spectral gap enjoyed by full Gibbs samplers in the following way: none of the full Gibbs samplers have a spectral gap provided some blocked or collapsed variant fails to have a spectral gap. Furthermore, in the sense of having a spectral gap, Theorem~\ref{thm: introduction, inheritence of a spectral gap} shows that blocked and collapsed Gibbs samplers are never qualitatively worse than their full counterparts. Previous theoretical justification for this outlook by \cite{Liu:1994} and \cite{Liu:Wong:Kong:1994} made comparisons between the norms of the operators of a deterministic-scan Gibbs sampler and two corresponding blocked and collapsed Gibbs samplers. Using different techniques, our approach handles a larger class of Markov operators and yields more direct results about their spectral gaps. Having a spectral gap is valuable because, under very mild regularity conditions, the existence of a spectral gap implies geometric ergodicity.

We additionally establish exact relations between the spectra corresponding to blocked and collapsed variants of a Gibbs sampler. In particular, when there is conditional independence, we show that certain blocked Gibbs samplers share the same spectral gap with corresponding collapsed Gibbs samplers. We demonstrate how these results can simplify analysis of a blocked Gibbs sampler by examining a setting where one blocked Gibbs sampler has a spectral gap while another blocked Gibbs sampler does not. Although intuitively tempting, there is consequently no analogue of Theorem~\ref{thm: introduction, inheritence of a spectral gap} between blocked or collapsed Gibbs samplers: a blocked (resp. collapsed) Gibbs sampler does not in general inherit a spectral gap from another blocked (resp. collapsed) Gibbs sampler. In fact, our example shows that geometric ergodicity is not necessarily preserved when blocking or collapsing in different ways.

Previous work by \cite{Amit:1991}, \cite{Diaconis:Khare:Saloff-Coste:2010}, \cite{Rosenthal:Rosenthal:2015}, and \cite{Chlebicka:Latuszynski:Miasojedow:2025} also adopted the approach of studying Gibbs samplers in the \(L^2\)-setting as operators comprised of orthogonal projections. However, Gibbs samplers therein were not considered in tandem with their blocked or collapsed variants. 

For the special case of two-component Gibbs samplers, \cite{Qin:Jones:2022} established an exact quantitative relationship between the \(L^2\)-convergence rates of deterministic-scan Gibbs samplers and random-scan Gibbs samplers: the authors showed that the deterministic-scan Gibbs sampler is always faster. Their results leverage the connection between two-component deterministic-scan Gibbs samplers and their reversible marginal chains. When there are more than two components, the marginal chains of deterministic-scan Gibbs samplers are not necessarily reversible, so establishing similar quantitative results remains challenging. There is also no best choice in terms of convergence time between deterministic-scan Gibbs samplers and random-scan Gibbs samplers when there are more than two components: \cite{Roberts:Rosenthal:2015} and \cite{He:De_Sa:Mitliagkas:Re:2016} provide examples and some heuristic guidance for when one might be better than the other.

The rest of the paper is organized as follows. Section~\ref{section: Preliminaries} introduces the theoretical background concerning Markov operators given by Gibbs steps and their combinations. Section~\ref{section: A generalized solidarity principle of the spectral gap} establishes the generalized solidarity principle for cycles and mixtures of Gibbs steps. Section~\ref{section: Inheritance of a spectral gap from a full Gibbs sampler} explores different connections between the spectra corresponding to full Gibbs samplers and their blocked and collapsed variants. In particular, the results needed to prove Theorem~\ref{thm: introduction, inheritence of a spectral gap} are established in this section. Section~\ref{section: A spectral disconnection} provides the example showing that neither blocked nor collapsed Gibbs samplers in general share a solidarity property of the spectral gap. Some technical details are relegated to the appendices.

\section{Preliminaries}
\label{section: Preliminaries}

\subsection{Markov operators}

Fix an integer \(K \geq 2\), and for any positive integer \(n\), let \([n] = \{1, \ldots, n\}\). For \(k \in [K]\), let \(\mcal{X}_k\) be a complete separable metric space equipped with its Borel \(\sigma\)-algebra \(\mscr{B}(\mcal{X}_k)\). Further, let \(\mcal{X} = \mcal{X}_1 \cross \cdots \cross \mcal{X}_K\) denote the corresponding \(K\)-fold product space equipped with its Borel \(\sigma\)-algebra \(\mscr{B}(\mcal{X})\). Note that \(\mcal{X}\) is itself a complete separable metric space. Fix a Borel probability measure \(\pi\) on \((\mcal{X}, \mscr{B}(\mcal{X}))\), and let \(L^2(\pi)\) denote the Hilbert space given by the set of measurable functions \(f \colon \mcal{X} \to \mbb{C}\) satisfying \(\int_\mcal{X} \abs{f(x)}^2 \, \pi(dx) < \infty\). As usual, functions that agree \(\pi\)-a.e. are identified with each other, and the inner product on \(L^2(\pi)\) is given by \(\inner{f,g} = \int_\mcal{X} f(x) \overline{g(x)} \, \pi(dx)\). 

The space of bounded linear operators on \(L^2(\pi)\) is denoted by \(B(L^2(\pi))\). Define \(\Pi \in B(L^2(\pi))\) by \(\Pi f \equiv \pi f \coloneq \int_\mcal{X} f(x) \, \pi(dx)\) for \(f \in L^2(\pi)\). The operator \(\Pi\) maps every \(f \in L^2(\pi)\) to the constant function equal to \(\pi f\). Observe that \(\Pi\) is an orthogonal projection; that is, \(\Pi\) is idempotent and self-adjoint. Idempotency is clear, and self-adjointness holds since \(\inner{\Pi f, g} = (\pi f)(\pi \overline{g}) = \inner{f, \Pi g}\) for all \(f,g \in L^2(\pi)\).

The probability measure \(\pi\) is invariant (or stationary) for a Markov kernel \(Q(x,dy)\) defined on \(\mcal{X} \cross \mscr{B}(\mcal{X})\) whenever \(\pi(A) = \int_\mcal{X} Q(x, A) \, \pi(dx)\) for all \(A \in \mscr{B}(\mcal{X})\). (The introduction of other standard properties like irreducibility or aperiodicity is not needed until Section~\ref{section: A spectral disconnection}.) A Markov kernel \(Q(x,dy)\) with invariant probability measure \(\pi\) defines a corresponding Markov operator \(Q \in B(L^2(\pi))\) that is given by 
\begin{equation*}
	[Qf](x) = \int_{\mcal{X}} f(y) \, Q(x, dy), \quad x \in \mcal{X},
\end{equation*}
for \(f \in L^2(\pi)\). By the invariance of \(\pi\) for \(Q(x,dy)\), the operator \(Q\) satisfies \(\Pi Q = \Pi\). Two Markov kernels \(Q_1(x,dy)\) and \(Q_2(x,dy)\) are identified with each other if the functions \(x \mapsto Q_1(x, A)\) and \(x \mapsto Q_2(x, A)\) agree \(\pi\)-a.e. for all \(A \in \mscr{B}(\mcal{X})\).

An operator \(Q \in B(L^2(\pi))\) is a \emph{Markov operator} if \(Q\mathbf{1} = \mathbf{1}\) (where \(\mathbf{1}\) is the constant function equal to 1) and \(Qf \geq 0\) whenever \(f \geq 0\). Any Markov operator \(Q\) satisfying \(\Pi Q = \Pi\) defines a corresponding Markov kernel that is given by \(Q(x,A) = (Q\mathbf{1}_A)(x)\) for all \((x,A) \in \mcal{X} \cross \mscr{B}(\mcal{X})\), where \(\mathbf{1}_A\) denotes the indicator function satisfying \(\mathbf{1}_A(x) = 1\) if \(x \in A\) and 0 otherwise. Since \(\Pi Q = \Pi\), the corresponding kernel \(Q(x,dy)\) has invariant probability measure \(\pi\). (See, e.g., \cite{Bakry:Gentil:Ledoux:2014} for details and a more general treatment of Markov operators.)

In view of this one-to-one correspondence between a Markov kernel \(Q(x,dy)\) with invariant probability measure \(\pi\) and a Markov operator \(Q\) satisfying \(\Pi Q = \Pi\), we sometimes refer to both the kernel and its corresponding operator by the same name for the sake of brevity. Whenever this is the case, it is clear from context which object is being referred to.

\subsection{Gibbs steps}

Fix any nonempty proper subset \(I\) of \([K]\), and write \(I = \{i_1, \ldots i_n\}\) and \([K] \setminus I = \{j_1, \ldots, j_{K - n}\}\), where \(i_1 \leq \cdots \leq i_n\) and \(j_1 \leq \cdots \leq j_{K - n}\).  Define \(\mcal{X}_I = \bigtimes_{\ell = 1}^n \mcal{X}_{i_\ell}\) and \(\mcal{X}_{-I} = \bigtimes_{\ell = 1}^{K - n} \mcal{X}_{j_\ell}\). For any \(x = (x_1, \ldots, x_K) \in \mcal{X}\), let \(x_I = (x_{i_1}, \ldots, x_{i_n})\) and \(x_{-I} = (x_{j_1}, \ldots, x_{j_{K - n}})\) denote the projections of \(x\) onto the coordinates in \(\mcal{X}_I\) and \(\mcal{X}_{-I}\), respectively. As a shorthand notation, the point \(x\) is sometimes denoted by \((x_I, x_{-I})\). When \(I = \{i\}\) for some \(i \in [K]\), denote \(\mcal{X}_i\) and \(\mcal{X}_{-i}\) in place of \(\mcal{X}_{\{i\}}\) and \(\mcal{X}_{-\{i\}}\), respectively, and denote \(x_i\) and \(x_{-i}\) in place of \(x_{\{i\}}\) and \(x_{-\{i\}}\), respectively. The probability measure \(\pi(\cdot|x_{-I})\) is the distribution of \(X_I | X_{-I} = x_{-I}\), where \((X_1, \ldots, X_K)\) is a random element taking values in \(\mcal{X}\) with distribution \(\pi\).

For every \(i \in [K]\), define \(E_i\) to be the closed (linear) subspace of functions in \(L^2(\pi)\) that are constant with respect to the \(i\)th coordinate; that is, \(f \in E_i\) if
\begin{equation*}
	f(x_1, \ldots, x_{i - 1}, x_i, x_{i+1}, \ldots, x_K) = f(x_1, \ldots, x_{i - 1}, x_i', x_{i+1}, \ldots, x_K) 
\end{equation*}
for all \(x_i, x_i' \in \mcal{X}_i\) and \(x_{-i} \in \mcal{X}_{-i}\). The closed subspace of constant functions is denoted by \(E \coloneq \bigcap_{i = 1}^K E_i\). Since orthogonal projections are uniquely determined by their range, the projection onto \(E\) is precisely the operator \(\Pi\). 

We call the orthogonal projection \(P_F \in B(L^2(\pi))\) onto \(F = \bigcap_{i \in I} E_i\) a \emph{Gibbs step}. Its rule can be written as
\begin{equation}
	\label{eq: P_F}
	(P_Ff)(x) = \int_{\mcal{X}_I} f(x_I', x_{-I}) \, \pi(dx_I'|x_{-I}), \quad x \in \mcal{X},
\end{equation}
for \(f \in L^2(\pi)\). Indeed, in view of \eqref{eq: P_F}, idempotency of \(P_F\) is immediate, and self-adjointness holds since
\begin{align*}
	&\inner{P_F f, g} \\
	&= \int_{\mcal{X}_I \cross \mcal{X}_{-I}} (P_F f)(x_I, x_{-I}) \, \overline{g(x_I, x_{-I})} \, \pi(dx_I, dx_{-I}) \\
	&\qquad= \int_{\mcal{X}_I \cross \mcal{X}_{-I}} \paren*{\int_{\mcal{X}_I} f(x_I', x_{-I}) \, \pi(dx_I'|x_{-I})} \overline{g(x_I, x_{-I})} \, \pi(dx_I, dx_{-I}) \\
	&\qquad= \int_{\mcal{X}_{-I}} \bracket*{ \int_{\mcal{X}_I}  \paren*{\int_{\mcal{X}_I} f(x_I', x_{-I}) \, \pi(dx_I'|x_{-I})} \overline{g(x_I, x_{-I})} \, \pi(dx_I|x_{-I})} \pi(dx_{-I}) \\
	&\qquad= \int_{\mcal{X}_{-I}} \bracket*{\int_{\mcal{X}_I} f(x_I', x_{-I}) \paren*{\int_{\mcal{X}_I} \overline{g(x_I, x_{-I})} \, \pi(dx_I|x_{-I})} \, \pi(dx_I'|x_{-I})} \pi(dx_{-I}) \\
	&\qquad= \int_{\mcal{X}_{-I}} \bracket*{\int_{\mcal{X}_I} f(x_I', x_{-I}) \overline{(P_F g)(x_I', x_{-I})} \, \pi(dx_I'|x_{-I})} \pi(dx_{-I})\\
	&\qquad= \int_{\mcal{X}_I \cross \mcal{X}_{-I}} f(x_I', x_{-I}) \overline{(P_Fg)(x_I', x_{-I})} \, \pi(dx_I', dx_{-I}) \\
	&\qquad= \inner{f, P_F g}
\end{align*} 
for all \(f,g \in L^2(\pi)\). Additionally, \(\Pi P_F = \Pi\) because 
\begin{equation*}
	%\label{eq: Pi P_F = Pi}
	\begin{aligned}[b]
		(\Pi P_F)(f) &\equiv \int_{\mcal{X}_I \cross \mcal{X}_{-I}} (P_Ff)(x_I, x_{-I}) \, \pi(dx_I, dx_{-I}) \\
		&= \int_{\mcal{X}_I \cross \mcal{X}_{-I}} \bracket*{\int_{\mcal{X}_I} f(x_I', x_{-I}) \, \pi(dx_I'|x_{-I})} \pi(dx_I, dx_{-I}) \\
		&= \int_{\mcal{X}_{-I}} \bracket*{\int_{\mcal{X}_I} f(x_I', x_{-I}) \, \pi(dx_I'|x_{-I})} \pi(dx_{-I}) \\
		&= \int_{\mcal{X}_I \cross \mcal{X}_{-I}} f(x_I', x_{-I}) \pi(dx_I', x_{-I}) \\
		&\equiv \Pi f
	\end{aligned}
\end{equation*}
for all \(f \in L^2(\pi)\).

Since \(P_F\) is a Markov operator satisfying \(\Pi P_F = \Pi\), it defines a Markov kernel given by \(P_{F}(x,A) = (P_{F}\mathbf{1}_A)(x)\) with invariant probability measure \(\pi\). The resulting sampling algorithm transitions from a given point \(x = (x_1, \ldots, x_K)\) by updating the coordinates whose indices lie in \(I\) via the conditional distribution \(\pi(\cdot|x_{-I})\). Since a single Gibbs step can never update every coordinate (as \(I\) is a proper subset of \([K]\)), multiple Gibbs steps must be combined to form a useful sampling algorithm. We discuss now how these combinations are formed.

\subsection{Cycles and mixtures of Gibbs steps}
\label{subsection: Cycles and mixtures of Gibbs steps}

Gibbs steps are the building blocks with which we construct samplers to target \(\pi\). Since each Gibbs step defines a kernel with invariant probability measure \(\pi\), so too do products and convex combinations of Gibbs steps. Inspired by the language in \cite{Tierney:1994}, we say these products and convex combinations are \emph{cycles} and \emph{mixtures} of Gibbs steps, respectively, provided the associated sampling algorithm has a positive probability of updating each coordinate. 

To be precise, for any integer \(g \geq 2\), let 
\begin{equation}
	\label{eq: S_g}
	\mscr{S}_g = \curly*{(w_1, \ldots, w_g) \in (0, 1)^g : \sum_{d = 1}^g w_d = 1}.
\end{equation} 
Taking any nonempty proper subsets \(I_1, \ldots, I_g\) of \([K]\) that satisfy \(\bigcup_{d = 1}^g I_d = [K]\) and letting \(F_d = \bigcap_{i \in I_d} E_i\) for \(d \in [g]\), we say an operator is a \emph{cycle} of Gibbs steps if it has the form \(P_{F_1} \cdots P_{F_g}\), and we say an operator is a \emph{mixture} of Gibbs steps if it has the form \(\sum_{d = 1}^g w_dP_{F_d}\) for some choice of weights \((w_1, \ldots, w_g) \in \mscr{S}_g\).

The following four types of Gibbs samplers are commonly employed and studied, so we make special mention of their Markov operators. 
\begin{enumerate}
	\item A \emph{deterministic-scan Gibbs sampler} has Markov operator 
	\begin{equation}
		\label{eq: deterministic-scan Gibbs sampler operator}
		P_{E_{\rho(1)}} \cdots P_{E_{\rho(K)}},
	\end{equation}
	where \(\rho\) is any partition of \([K]\).
	
	\item A \emph{blocked deterministic-scan Gibbs sampler} has Markov operator 
	\begin{equation}
		\label{eq: blocked deterministic-scan Gibbs sampler operator}
		P_{F_1}\cdots P_{F_g}, 
	\end{equation} 
	where \(\{I_1, \ldots, I_g\}\) forms a partition of \([K]\) with \(2 \leq g < K\) and \(F_d = \bigcap_{i \in I_d} E_i\) for \(d \in [g]\).   
	
	\item A \emph{random-scan Gibbs sampler} has Markov operator 
	\begin{equation}
		\label{eq: random-scan Gibbs sampler operator}
		\sum_{i = 1}^K w_i P_{E_i},
	\end{equation}
	where \((w_1, \ldots, w_K) \in \mscr{S}_K\).
	
	\item A \emph{blocked random-scan Gibbs sampler} has Markov operator
	\begin{equation}
		\label{eq: blocked random-scan Gibbs sampler operator}
		\sum_{d = 1}^g v_d P_{F_d},
	\end{equation}
	where \((v_1, \ldots, v_g) \in \mscr{S}_g\) and where \(\{I_1, \ldots, I_g\}\) forms a partition of \([K]\) with \(2 \leq g < K\) and \(F_d = \bigcap_{i \in I_d} E_i\) for \(d \in [g]\).
\end{enumerate}
As mentioned in the introduction, a Gibbs sampler is sometimes called \emph{full} when its operator is given by \eqref{eq: deterministic-scan Gibbs sampler operator} or \eqref{eq: random-scan Gibbs sampler operator}. 

Put more simply, Gibbs samplers are special cases of cycles and mixtures of Gibbs steps where the sets \(I_1,\ldots,I_g\) 
%(from which we form \(F_d = \bigcap_{i \in I_d}E_i\) for \(d \in [g]\)) 
are chosen to produce a partition of \([K]\). No such restriction need be made in general. For instance, if there is some coordinate \(i\) such that draws from \(\pi(\cdot|x_{-i})\) for \(x_{-i} \in \mcal{X}_{-i}\) incur less computation cost compared to draws from other conditional distributions, then a mixture of Gibbs steps where coordinate \(i\) is updated multiple times in one pass of the algorithm may be appealing.

Cycles of Gibbs steps are also closely related to the \emph{partially collapsed Gibbs samplers} introduced in \cite{van_Dyk:Park:2008}. Certain cycles of Gibbs steps result in an algorithm where some intermediate quantities are not conditioned on. Trimming these quantities results in a sampling algorithm for a partially collapsed Gibbs sampler. This particular connection is not pursued any further in this paper.

\section{A generalized solidarity principle of the spectral gap}
\label{section: A generalized solidarity principle of the spectral gap}

Let \(H\) be any complex (nonzero) Hilbert space. Let \(B(H)\) denote the algebra of bounded linear operators \(T \colon H \to H\) with norm given by \(\norm{T} = \inf\{C \geq 0 : \norm{Th} \leq C \norm{h} \text{ for all } h \in H\}\). For any \(T \in B(H)\), let \(\sigma(T) = \{\lambda \in \mbb{C} : T - \lambda \, \Id \text{ is not invertible}\}\) denote the spectrum of \(T\), where \(\Id\) denotes the identity operator, and let \(r(T) = \sup\{\abs{\lambda} : \lambda \in \sigma(T)\}\) denote the spectral radius of \(T\). For any closed subspace \(S\), denote the orthogonal projection of \(H\) onto \(S\) by \(P_S\) and the orthogonal complement of \(S\) by \(S^\perp\). Recall that \(r(T) = \norm{T}\) for any self-adjoint operator \(T \in B(H)\). 

We say a Markov operator \(Q\) satisfying \(\Pi Q = \Pi\) has a \emph{spectral gap} if \(r(Q - \Pi) < 1\), in which case the gap is given by \(1 - r(Q - \Pi)\). (See Remark~\ref{rem: equivalent definition of spectral gap} for an equivalent definition.) Observe that \(Q^n - \Pi = (Q - \Pi)^n\) for all \(n \geq 1\) since \(\Pi Q = \Pi = Q \Pi\) and \(\Pi^2 = \Pi\). Roughly put, the larger the spectral gap of \(Q\), the faster the sequence \((Q^n)_{n \geq 1}\) converges in norm to \(\Pi\). When \(Q\) is self-adjoint so that \(\norm{(Q - \Pi)^n} = \norm{Q - \Pi}^n\) for all \(n \geq 1\), this relation holds in an exact sense since 
\begin{equation*}
	\norm{Q^n - \Pi} = \norm{(Q - \Pi)^n} = \norm{Q - \Pi}^n = r(Q - \Pi)^n, \quad n \geq 1.
\end{equation*} 
Even when \(Q\) is not necessarily self-adjoint, the relation holds in an asymptotic sense since 
\begin{equation*}
	\lim_{n \to \infty} \norm{Q^n - \Pi}^{1/n} = r(Q - \Pi)
\end{equation*} 
by the spectral radius formula (see, e.g., \cite{Murphy:1990}, Theorem~1.2.7). Under mild regularity conditions, a spectral gap also implies geometric ergodicity: exponentially fast convergence to stationarity in total variation distance. (Geometric ergodicity is reviewed more precisely in Section~\ref{subsection: Background on geometric ergodicity}.)  

By the spectral radius formula, \(Q\) has a spectral gap if and only if the sequence \((Q^n)_{n \geq 1}\) converges in norm to \(\Pi\). Hence, to understand when cycles and mixtures of Gibbs steps have a spectral gap, one may equivalently pose the question of when norm convergence occurs. This question has been studied in the following broader framework: given closed proper subspaces \(M_1, \ldots, M_N\) with \(M \coloneq \bigcap_{i = 1}^N M_i\), one may ask when do the sequences \(((P_{M_1}\cdots P_{M_N})^n)_{n \geq 1}\) and \(((\sum_{i = 1}^N w_i P_{M_i})^n)_{n \geq 1}\), where \((w_1, \ldots, w_N) \in \mscr{S}_N\), converge in norm to \(P_M\). For sequences generated by the powers of a product of projections, \cite{Badea:Grivaux:Muller:2011} established a list of equivalent conditions for norm convergence. In the following lemma, only the portions of Theorem~4.1 of \cite{Badea:Grivaux:Muller:2011} that are relevant to our purposes are stated.

\begin{lem}
	\label{lem: characterization of ||P_(M_1) ... P_(M_N) - P_M|| < 1}
	Let \(N \geq 2\). Let \(M_1, \ldots, M_N\) be closed proper subspaces of \(H\) with intersection \(M = \bigcap_{i = 1}^N M_i\). The following assertions are equivalent:
	\begin{enumerate}
		\item \(1 \notin \sigma(P_{M_1} \cdots P_{M_N} - P_M)\);
		
		\item \(\norm{P_{M_1}\cdots P_{M_N} - P_M} < 1\);
		
		\item \(\norm{N^{-1} \sum_{i = 1}^N P_{M_i} - P_M} < 1\).
	\end{enumerate}
\end{lem}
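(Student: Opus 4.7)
My plan is to reduce to the case $M = \{0\}$ and then establish the equivalences $(1) \Leftrightarrow (2) \Leftrightarrow (3)$ through a single telescoping identity for products of orthogonal projections. Since $M \subseteq M_i$ for each $i$, we have $P_{M_i}P_M = P_M = P_M P_{M_i}$, so both $T := P_{M_1}\cdots P_{M_N}$ and $A := N^{-1}\sum_i P_{M_i}$ fix $M$ pointwise and leave $M^\perp$ invariant. Hence $T - P_M$ and $A - P_M$ vanish on $M$ and agree on $M^\perp$ with $T|_{M^\perp}$ and $A|_{M^\perp}$, respectively. Replacing $H$ by $M^\perp$ and each $M_i$ by $M_i \cap M^\perp$ (which is legitimate because each $P_{M_i}$ itself preserves $M^\perp$), I may assume $M = \{0\}$, so the three conditions become $1 \notin \sigma(T)$, $\|T\| < 1$, and $\|A\| < 1$.

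The workhorse is the identity $\|P_U y\|^2 = \|y\|^2 - \|(\Id - P_U) y\|^2$, valid for any orthogonal projection, which iterated along the product gives
\begin{equation*}
1 - \|T x\|^2 \;=\; \sum_{i=1}^{N} \|(\Id - P_{M_i})\, P_{M_{i+1}}\cdots P_{M_N}\, x\|^2
\end{equation*}
for every unit vector $x$ (with the convention that an empty product is $\Id$). Consequently, whenever unit vectors $(x_n)$ satisfy $\|T x_n\| \to 1$, every summand tends to zero, and a short simultaneous backwards induction on $i$ (using the triangle inequality and the contractivity of each projection) yields both $\|T x_n - x_n\| \to 0$ and $\|(\Id - P_{M_i}) x_n\| \to 0$ for every $i$. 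I regard this as the single technical lemma on which everything else rests.

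With this lemma in hand, the implications chain quickly. Direction $(2) \Rightarrow (1)$ is immediate from $r(T) \leq \|T\|$. For $(1) \Rightarrow (2)$, if $\|T\| = 1$, pick unit vectors with $\|T x_n\| \to 1$; the lemma gives $T x_n - x_n \to 0$, placing $1$ in the approximate point spectrum of $T$ and so in $\sigma(T)$. The same vectors yield $(\Id - A) x_n = N^{-1}\sum_i (\Id - P_{M_i}) x_n \to 0$, whence $\langle A x_n, x_n\rangle \to 1$ and $\|A\| = 1$, proving $(2) \Rightarrow (3)$ by contraposition. Finally, for $(3) \Rightarrow (2)$ I exploit that $A$ is a positive self-adjoint contraction, so $\|A\| = 1$ forces unit vectors $x_n$ with $\langle A x_n, x_n\rangle = N^{-1}\sum_i \|P_{M_i} x_n\|^2 \to 1$; since each term is at most $1$, this forces $\|(\Id - P_{M_i}) x_n\| \to 0$ individually for every $i$, and a straightforward induction prepending one projection at a time delivers $\|T x_n - x_n\| \to 0$, so $\|T\| = 1$. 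The main conceptual subtlety lies in this last implication: the self-adjointness of $A$ is essential, because it allows the operator norm to be read off from inner products, whereas no such simplification is available for the nonsymmetric product $T$. This asymmetry is precisely why routing through the approximate-eigenvector condition $\|T x_n\| \to 1$ is more natural than attempting to compare $T$ and $A$ directly.
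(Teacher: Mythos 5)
Your proof is correct, and it is genuinely different from what the paper does: the paper does not prove this lemma at all, but imports it as the relevant portions of Theorem~4.1 of Badea, Grivaux and M\"uller (2011), whose proof there is part of a much longer list of equivalences and is organized around geometric quantities (the Friedrichs angle/inclination between subspaces). Your argument is instead self-contained and elementary: the reduction to $M=\{0\}$ via $P_{M_i}P_M=P_M=P_MP_{M_i}$ is legitimate for exactly the reason you give, the telescoping Pythagorean identity $1-\lVert Tx\rVert^2=\sum_i \lVert(\mathrm{Id}-P_{M_i})P_{M_{i+1}}\cdots P_{M_N}x\rVert^2$ is valid, the backwards induction extracting $\lVert Tx_n-x_n\rVert\to 0$ and $\lVert(\mathrm{Id}-P_{M_i})x_n\rVert\to 0$ from approximate norming vectors is sound, and the use of positivity of the mixture to recover its norm from quadratic forms is exactly the right tool for the reverse direction. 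What the paper's citation buys is brevity and access to the full BGM equivalence list (which it also uses in the remark on arbitrarily slow convergence); what your route buys is a short direct proof in the same spirit as the paper's own remark that its generalized solidarity argument avoids Friedrichs angles. One small slip in exposition: your two implications between (2) and (3) are labelled backwards --- the argument starting from $\lVert T\rVert=1$ and concluding $\lVert A\rVert=1$ is the contrapositive of (3) $\Rightarrow$ (2), while the final argument starting from $\lVert A\rVert=1$ and concluding $\lVert T\rVert=1$ is the contrapositive of (2) $\Rightarrow$ (3); since you establish both directions, the equivalence is complete, but the labels should be swapped.
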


For any integer \(g \geq 2\), let \(\mscr{P}_g\) denote the set of all permutations of \([g]\). Since \(N^{-1}\sum_{i = 1}^N P_{M_i} = N^{-1}\sum_{i = 1}^N P_{M_{\rho(i)}}\) for all \(\rho \in \mscr{P}_N\), an immediate consequence of Lemma~\ref{lem: characterization of ||P_(M_1) ... P_(M_N) - P_M|| < 1} is that \(\norm{P_{M_{\rho(1)}} \cdots P_{M_{\rho(N)}} - P_M} < 1\) for some permutation \(\rho \in \mscr{P}_N\) if and only if \(\norm{P_{M_{\rho'(1)}} \cdots P_{M_{\rho'(N)}} - P_M} < 1\) for every permutation \(\rho' \in \mscr{P}_N\). The following lemma shows that a similar statement can be made for convex combinations of \(P_{M_1}, \ldots, P_{M_N}\). The argument is essentially that used to prove Lemma~3.3 of \cite{Chlebicka:Latuszynski:Miasojedow:2025} (and Proposition~2 of \cite{Jones:Roberts:Rosenthal:2014}) and is provided in Appendix~\ref{section: proof of lemma for norms related to convex combinations of projections} for completeness.

\begin{lem}
	\label{lem: P in P_sum has norm less than 1 for some choice of weights implies norm less than 1 for any choice of weights}
	Let \(N \geq 2\). Let \(M_1, \ldots, M_N\) be closed proper subspaces of \(H\) with intersection \(M = \bigcap_{i = 1}^N M_i\). The following assertions are equivalent:
	\begin{enumerate}
		\item \(\norm*{\sum_{i = 1}^N w_i P_{M_i} - P_M} < 1\) for some \((w_1, \ldots, w_N) \in \mscr{S}_N\);
		\item \(\norm*{\sum_{i = 1}^N w_i' P_{M_i} - P_M} < 1\) for all \((w_1', \ldots, w_N') \in \mscr{S}_N\). 	
	\end{enumerate}
\end{lem}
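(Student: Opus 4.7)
The plan is to exploit the fact that since $M \subseteq M_i$ for every $i$, each difference $P_{M_i} - P_M$ is itself an orthogonal projection, namely onto $M_i \cap M^\perp$ (using $P_{M_i} P_M = P_M = P_M P_{M_i}$ because $M \subseteq M_i$). Writing $T_w \coloneq \sum_{i = 1}^N w_i P_{M_i}$, it then follows that $T_w - P_M = \sum_{i = 1}^N w_i (P_{M_i} - P_M)$ is a convex combination of orthogonal projections, hence self-adjoint and positive with $\norm{T_w - P_M} \leq 1$.

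Because $T_w - P_M$ is self-adjoint and positive, $\norm{T_w - P_M} = 1$ precisely when there exists a sequence of unit vectors $(h_n)_{n \geq 1}$ in $H$ with $\inner{(T_w - P_M) h_n, h_n} \to 1$. The quadratic form expands as
\[
\inner{(T_w - P_M) h_n, h_n} = \sum_{i = 1}^N w_i \norm{P_{M_i} h_n}^2 - \norm{P_M h_n}^2,
\]
and since each $\norm{P_{M_i} h_n}^2 \leq 1$ and $\norm{P_M h_n}^2 \geq 0$, convergence to $1$ forces both $\sum_{i} w_i \norm{P_{M_i} h_n}^2 \to 1$ and $\norm{P_M h_n}^2 \to 0$. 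The key step is to upgrade the weighted convergence to individual convergence: exploiting that $w_j > 0$ for every $j$, the inequality
\[
w_j \bigl(1 - \norm{P_{M_j} h_n}^2\bigr) \leq \sum_{i = 1}^N w_i \bigl(1 - \norm{P_{M_i} h_n}^2\bigr) = 1 - \sum_{i = 1}^N w_i \norm{P_{M_i} h_n}^2 \to 0
\]
shows $\norm{P_{M_j} h_n}^2 \to 1$ for each $j \in [N]$. Crucially, these individual limits do not depend on the particular weights $w$.

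To conclude, I would close the loop by contraposition. Suppose $\norm{T_w - P_M} = 1$ for some $w \in \mscr{S}_N$, and let $(h_n)_{n \geq 1}$ be a witnessing sequence as above. Then for any other $w' \in \mscr{S}_N$,
\[
\inner{(T_{w'} - P_M) h_n, h_n} = \sum_{i = 1}^N w_i' \norm{P_{M_i} h_n}^2 - \norm{P_M h_n}^2 \to 1 - 0 = 1,
\]
so $\norm{T_{w'} - P_M} \geq 1$, and equality then follows from the bound in the first paragraph. This gives (1) $\Rightarrow$ (2); the reverse implication is immediate because $\mscr{S}_N$ is nonempty. The only real obstacle is the extraction step above, which crucially leverages the strict positivity $w_i > 0$ encoded in the definition of $\mscr{S}_N$; without this, neither the argument nor the conclusion of the lemma would hold.
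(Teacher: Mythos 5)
Your proof is correct, but it takes a genuinely different route from the paper's. The paper argues by convex interpolation: given weights \(w\) with \(\norm{\sum_i w_i P_{M_i} - P_M} < 1\) and any other \(w' \in \mscr{S}_N\), it sets \(\gamma = \min_i w_i'/w_i \in (0,1)\), writes \(w' = (1-\gamma)a + \gamma w\) for an auxiliary weight vector \(a\), and applies the triangle inequality together with the trivial bound \(\norm{\sum_i a_i P_{M_i} - P_M} \leq 1\) to get \(\norm{\sum_i w_i' P_{M_i} - P_M} \leq (1-\gamma) + \gamma\norm{\sum_i w_i P_{M_i} - P_M} < 1\). That argument is a few lines, uses nothing beyond the triangle inequality and the identity \(P_{M_i}P_M = P_M\), and would apply verbatim to any family of contractions fixing the relevant structure, not just orthogonal projections. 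Your argument instead exploits self-adjointness and positivity of \(T_w - P_M = \sum_i w_i(P_{M_i}-P_M)\): you characterize \(\norm{T_w - P_M} = 1\) by an almost-maximizing sequence of unit vectors for the quadratic form, use strict positivity of the weights to upgrade \(\sum_i w_i\norm{P_{M_i}h_n}^2 \to 1\) to \(\norm{P_{M_j}h_n}^2 \to 1\) for every \(j\) (and \(\norm{P_M h_n}^2 \to 0\)), and then observe that the same sequence witnesses \(\norm{T_{w'} - P_M} = 1\) for every \(w'\). All the steps check out: \(P_{M_i} - P_M\) is indeed the orthogonal projection onto \(M_i \cap M^\perp\) because \(M \subseteq M_i\), the norm of a positive self-adjoint operator is the supremum of its quadratic form over unit vectors, and your squeeze arguments forcing the two limits are valid. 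What your approach buys is extra structural information — the failure of a spectral gap is witnessed by a single sequence \((h_n)\) that asymptotically lies in every \(M_i\) and is asymptotically orthogonal to \(M\), simultaneously for all weight choices — at the cost of being specific to orthogonal projections, whereas the paper's interpolation trick is shorter and more portable (it is essentially the argument of Chlebicka et al.\ and Jones--Roberts--Rosenthal cited there).
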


As a direct consequence of Lemmas~\ref{lem: characterization of ||P_(M_1) ... P_(M_N) - P_M|| < 1} and \ref{lem: P in P_sum has norm less than 1 for some choice of weights implies norm less than 1 for any choice of weights}, the following theorem shows that every sequence generated by the powers of a product or convex combination of \(P_{M_1}, \ldots, P_{M_N}\) converges to \(P_M\) exactly whenever one such sequence converges to \(P_M\). 

\begin{thm}
	\label{thm: equivalent conditions of norm convergence}
	Let \(N \geq 2\). Let \(M_1, \ldots, M_N\) be closed proper subspaces of \(H\) with intersection \(M = \bigcap_{i = 1}^N M_i\). The following assertions are equivalent:
	\begin{enumerate}
		\item There exists a permutation \(\rho \in \mscr{P}_N\) or choice of weights \((w_1, \ldots, w_N) \in \mscr{S}_N\) such that
		\begin{align*}
			1 \notin \sigma(P_{M_{\rho(1)}} \cdots P_{M_{\rho(N)}} - P_M) &\text{ or } \norm{P_{M_{\rho(1)}} \cdots P_{M_{\rho(N)}} - P_M} < 1 \\
			&\text{ or } \norm*{\sum_{i = 1}^N w_i P_{M_i} - P_M} < 1.
		\end{align*}
		
		\item Every permutation \(\rho' \in \mscr{P}_N\) and every choice of weights \((w_1', \ldots, w_N') \in \mscr{S}_N\) satisfies 
		\begin{align*}
			1 \notin \sigma(P_{M_{\rho'(1)}} \cdots P_{M_{\rho'(N)}} - P_M) &\text{ and } \norm{P_{M_{\rho'(1)}} \cdots P_{M_{\rho'(N)}} - P_M} < 1 \\
			&\text{ and } \norm*{\sum_{i = 1}^N w_i' P_{M_i} - P_M} < 1.
		\end{align*}
	\end{enumerate}  
\end{thm}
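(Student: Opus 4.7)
The direction (2)$\Rightarrow$(1) is immediate since the assertions in (2) are stronger in every respect. For the reverse direction, the plan is to use the uniform convex combination $N^{-1}\sum_{i=1}^N P_{M_i}$ as a ``pivot'' that is manifestly independent of the ordering of the subspaces, and then propagate the conclusion in both directions via Lemmas~\ref{lem: characterization of ||P_(M_1) ... P_(M_N) - P_M|| < 1} and \ref{lem: P in P_sum has norm less than 1 for some choice of weights implies norm less than 1 for any choice of weights}.

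Concretely, assume (1) holds. The first step is to reduce each of the three possible hypotheses in (1) to the single condition $\norm{N^{-1}\sum_{i=1}^N P_{M_i} - P_M} < 1$. If the hypothesis is $1 \notin \sigma(P_{M_{\rho(1)}}\cdots P_{M_{\rho(N)}} - P_M)$ or $\norm{P_{M_{\rho(1)}}\cdots P_{M_{\rho(N)}} - P_M} < 1$ for some $\rho \in \mscr{P}_N$, then applying Lemma~\ref{lem: characterization of ||P_(M_1) ... P_(M_N) - P_M|| < 1} to the reindexed subspaces $M_{\rho(1)}, \ldots, M_{\rho(N)}$ (whose intersection is still $M$) yields $\norm{N^{-1}\sum_{i=1}^N P_{M_{\rho(i)}} - P_M} < 1$, and this is identical to $\norm{N^{-1}\sum_{i=1}^N P_{M_i} - P_M} < 1$ because the sum is symmetric in its summands. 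If instead the hypothesis is $\norm{\sum_{i=1}^N w_i P_{M_i} - P_M} < 1$ for some $(w_1,\ldots,w_N) \in \mscr{S}_N$, then Lemma~\ref{lem: P in P_sum has norm less than 1 for some choice of weights implies norm less than 1 for any choice of weights} promotes this to all weight choices, in particular the uniform one $(1/N, \ldots, 1/N)$.

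The second step is to turn this pivot condition back into the full conclusion (2). For any permutation $\rho' \in \mscr{P}_N$, the identity $N^{-1}\sum_{i=1}^N P_{M_{\rho'(i)}} = N^{-1}\sum_{i=1}^N P_{M_i}$ gives $\norm{N^{-1}\sum_{i=1}^N P_{M_{\rho'(i)}} - P_M} < 1$, so Lemma~\ref{lem: characterization of ||P_(M_1) ... P_(M_N) - P_M|| < 1} applied to $M_{\rho'(1)}, \ldots, M_{\rho'(N)}$ delivers both $1 \notin \sigma(P_{M_{\rho'(1)}} \cdots P_{M_{\rho'(N)}} - P_M)$ and $\norm{P_{M_{\rho'(1)}} \cdots P_{M_{\rho'(N)}} - P_M} < 1$. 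For any $(w_1', \ldots, w_N') \in \mscr{S}_N$, Lemma~\ref{lem: P in P_sum has norm less than 1 for some choice of weights implies norm less than 1 for any choice of weights} (taking the uniform weights as starting point) gives $\norm{\sum_{i=1}^N w_i' P_{M_i} - P_M} < 1$, completing (2).

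The work here is almost entirely bookkeeping: the two cited lemmas do the hard analytic work, and the only ingredient one must supply is the elementary observation that the uniform average $N^{-1}\sum_{i=1}^N P_{M_i}$ is permutation-invariant, which is what allows one to cross from one ordering to another without any spectral argument. If there is an ``obstacle,'' it is merely notational: being careful that the intersection $M = \bigcap_{i=1}^N M_{\rho(i)}$ is unchanged under reindexing, so that Lemma~\ref{lem: characterization of ||P_(M_1) ... P_(M_N) - P_M|| < 1} may be invoked with any permutation of the given subspaces without altering the projection $P_M$ that appears in each displayed norm.
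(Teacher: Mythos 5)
Your argument is correct and matches the paper's own route: the paper also treats the theorem as a direct consequence of Lemmas~\ref{lem: characterization of ||P_(M_1) ... P_(M_N) - P_M|| < 1} and \ref{lem: P in P_sum has norm less than 1 for some choice of weights implies norm less than 1 for any choice of weights}, using the permutation-invariance of the uniform average \(N^{-1}\sum_{i=1}^N P_{M_i}\) as the pivot between orderings and weight choices. Your write-up simply makes this bookkeeping explicit.
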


\begin{rem}
	Theorem~\ref{thm: equivalent conditions of norm convergence} holds for any complex Hilbert space (not just \(L^2(\pi)\)). Moreover, as we now explain, it provides a general solidarity result concerning the rate of convergence of a class (of sequences) of operators that all converge pointwise to \(P_M\). 
	
	For any \(\rho \in \mscr{P}_N\) and any \(w = (w_1, \ldots, w_N) \in \mscr{S}_N\), let \(T_\rho = P_{M_{\rho(1)}} \cdots P_{M_{\rho(n)}}\) and \(T_w = \sum_{i=1}^N w_i P_{M_i}\). For all \(h \in H\), it is well-known that \(\lim_{n \to \infty} T_\rho^n h = P_Mh\) (\cite{von_Neumann:1949}, \cite{Halperin:1962}) and \(\lim_{n \to \infty} T_w^n h = P_M h\) (\cite{Lapidus:1981}, \cite{Reich:1983}). The resulting algorithms are called the ``method of alternating projections'' and the ``method of simultaneous projections'', respectively, and they provide a way to find the best approximation to a point in \(H\) from \(M\). See \cite{Deustch:1992} for a survey on the many applications of the former algorithm in different areas of mathematics (beyond the MCMC setting). 
	
	Let \(T\) denote \(T_\rho\) or \(T_w\). There is a dichotomy regarding the rate of convergence in the aforementioned two algorithms (\cite{Deutsch:Hundal:2010}): either the convergence of \((T^n)_{n \geq 1}\) to \(P_M\) is exponentially fast, that is, there exist a constant \(\rho \in [0,1)\) and a function \(C : H \to (0, \infty)\) such that 
	\begin{equation*}
		\norm*{T^nh - P_M h} \leq C(h) \rho^n, \quad n \geq 1,
	\end{equation*} 
	for all \(h \in H\) or the convergence is arbitrarily slow. Various characterizations of arbitrarily slow convergence are given in \cite{Badea:Grivaux:Muller:2011}. Theorem~\ref{thm: equivalent conditions of norm convergence} establishes that this dichotomy is in fact that of a larger one concerning all the operators in \(\mscr{T} \coloneq \{T_\rho : \rho \in \mscr{P}_N\} \union \{T_w : w \in \mscr{S}_N\}\): either the convergence of every sequence generated by the powers of an operator in \(\mscr{T}\) is exponentially fast or the convergence of every sequence generated by the powers of an operator in \(\mscr{T}\) is arbitrarily slow. 
\end{rem}

We now return to the question of when cycles and mixtures of Gibbs steps have a spectral gap. Recall that since any mixture of Gibbs steps \(Q_{\mathrm{mix}}\) is self-adjoint, \(Q_{\mathrm{mix}}\) has a spectral gap if and only if \(\norm{Q_{\mathrm{mix}} - \Pi} < 1\). Theorem~\ref{thm: equivalent conditions of norm convergence} shows that such an equivalence also holds for cycles of Gibbs steps and, moreover, that all these conditions must hold simultaneously for all cycles and mixtures corresponding to the same choice of Gibbs steps. 

\begin{cor}
	\label{cor: generalized solidarity principle of the spectral gap}
	Let \(I_1, \ldots, I_g\) be nonempty proper subsets of \([K]\) satisfying \(\bigcup_{d = 1}^g I_d = [K]\), and let \(F_d = \bigcap_{i \in I_d} E_i\) for \(d \in [g]\). Then the following assertions are equivalent: 
	\begin{enumerate}
		\item \label{item: exists a permutation for a spectral gap} There exists a permutation \(\rho \in \mscr{P}_g\) such that \(P_{F_{\rho(1)}}\cdots P_{F_{\rho(g)}}\) has a spectral gap.
		
		\item \label{item: exists weights for a spectral gap} There exists a choice of weights \((w_1, \ldots, w_g) \in \mscr{S}_g\) such that \(\sum_{d = 1}^g w_d P_{F_d}\) has a spectral gap.
		
		\item \label{item: exists a permutation for norm < 1} There exists a permutation \(\rho \in \mscr{P}_g\) such that \(\norm{P_{F_{\rho(1)}} \cdots P_{F_{\rho(g)}} - \Pi} < 1\).
		
		\item \label{item: for all permutations, a spectral gap} For every permutation \(\rho' \in \mscr{P}_g\), the cycle \(P_{F_{\rho'(1)}}\cdots P_{F_{\rho'(g)}}\) has a spectral gap.
		
		\item \label{item: for all weights, a spectral gap} For every choice of weights \((w_1', \ldots, w_N') \in \mscr{S}_g\), the mixture \(\sum_{d = 1}^g w_d' P_{F_d}\) has a spectral gap.
		
		\item \label{item: for all permutations, norm < 1} For every permutation \(\rho' \in \mscr{P}_g\), it holds that \(\norm{P_{F_{\rho'(1)}} \cdots P_{F_{\rho'(g)}} - \Pi} < 1\).
	\end{enumerate}
\end{cor}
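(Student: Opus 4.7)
The plan is to reduce the six equivalences to a direct application of Theorem~\ref{thm: equivalent conditions of norm convergence} with $H = L^2(\pi)$ and $M_d = F_d$, after translating the corollary's spectral-gap conditions into the norm and spectrum conditions that appear in that theorem.

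First I would verify the hypothesis. Since $\bigcup_{d=1}^g I_d = [K]$, the intersection
\[
\bigcap_{d=1}^g F_d \;=\; \bigcap_{d=1}^g \bigcap_{i \in I_d} E_i \;=\; \bigcap_{i=1}^K E_i \;=\; E,
\]
so $P_M = P_E = \Pi$. Assuming each $F_d$ is a proper closed subspace of $L^2(\pi)$ (a mild non-degeneracy assumption on $\pi$), Theorem~\ref{thm: equivalent conditions of norm convergence} then delivers the equivalence of the three families of conditions
\[
1 \notin \sigma\bigl(P_{F_{\rho(1)}} \cdots P_{F_{\rho(g)}} - \Pi\bigr), \quad \|P_{F_{\rho(1)}} \cdots P_{F_{\rho(g)}} - \Pi\| < 1, \quad \Bigl\|\sum_{d=1}^g w_d P_{F_d} - \Pi\Bigr\| < 1,
\]
taken either in the ``for some'' or the ``for all'' version over $\rho \in \mscr{P}_g$ and $w \in \mscr{S}_g$.

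It then remains to match these to the spectral-gap statements (\ref{item: exists a permutation for a spectral gap})--(\ref{item: for all permutations, norm < 1}). Any mixture $Q_{\mathrm{mix}} = \sum_d w_d P_{F_d}$ is self-adjoint, hence so is $Q_{\mathrm{mix}} - \Pi$, and therefore $r(Q_{\mathrm{mix}} - \Pi) = \|Q_{\mathrm{mix}} - \Pi\|$; items (\ref{item: exists weights for a spectral gap}) and (\ref{item: for all weights, a spectral gap}) thus coincide literally with the mixture norm conditions above. A cycle $Q_{\mathrm{cyc}} = P_{F_{\rho(1)}} \cdots P_{F_{\rho(g)}}$ need not be self-adjoint, but the always-valid chain
\[
\|Q_{\mathrm{cyc}} - \Pi\| < 1 \;\Longrightarrow\; r(Q_{\mathrm{cyc}} - \Pi) < 1 \;\Longrightarrow\; 1 \notin \sigma(Q_{\mathrm{cyc}} - \Pi),
\]
whose first step uses $r(T) \le \|T\|$ and whose second follows from the definition of the spectral radius, sandwiches the spectral-gap condition between two conditions that Theorem~\ref{thm: equivalent conditions of norm convergence} already forces to be equivalent. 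Hence the spectral-gap condition is equivalent to both, which handles items (\ref{item: exists a permutation for a spectral gap}), (\ref{item: exists a permutation for norm < 1}), (\ref{item: for all permutations, a spectral gap}), and (\ref{item: for all permutations, norm < 1}).

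Putting these identifications together, all six items correspond to statements that Theorem~\ref{thm: equivalent conditions of norm convergence} puts on equal footing, and the equivalence follows. I do not anticipate a genuine obstacle: the substantive content is carried by Theorem~\ref{thm: equivalent conditions of norm convergence}, and the only remaining work is the standard identity $r(T) = \|T\|$ for self-adjoint $T$ together with the elementary sandwich above for non-self-adjoint cycles.
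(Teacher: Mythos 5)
Your proof is correct and follows essentially the same route as the paper: identify $\bigcap_d F_d = E$ so that $P_M = \Pi$, invoke Theorem~\ref{thm: equivalent conditions of norm convergence}, use $r(T)=\norm{T}$ for the self-adjoint mixtures, and for cycles sandwich the spectral-gap condition between $\norm{\cdot}<1$ and $1\notin\sigma(\cdot)$ via $r(T)\le\norm{T}$. The only difference is cosmetic (you flag properness of the $F_d$ explicitly, which the paper leaves implicit), so nothing further is needed.
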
 

\begin{proof}
	By definition, \(\bigcap_{d = 1}^g F_d = E\) and \(P_E = \Pi\). If there exists a permutation \(\rho \in \mscr{P}_g\) such that \(r(P_{F_{\rho(1)}} \cdots P_{F_{\rho(g)}} - \Pi) < 1\), then it holds that \(1 \notin \sigma(P_{F_{\rho(1)}} \cdots P_{F_{\rho(g)}} - \Pi)\) which occurs if and only if \(\norm{P_{F_{\rho(1)}}\cdots P_{F_{\rho(g)}} - \Pi} < 1\) by Theorem~\ref{thm: equivalent conditions of norm convergence}. The converse is immediate since \(r(T) \leq \norm{T}\) for any \(T \in B(H)\). Hence, assertions \ref{item: exists a permutation for a spectral gap} and \ref{item: exists a permutation for norm < 1} are equivalent. Since, by self-adjointness, \(r(\sum_{d = 1}^g w_d P_{F_d} - \Pi) = \norm{\sum_{d = 1}^g w_d P_{F_d} - \Pi}\) for any \((w_1, \ldots, w_g) \in \mscr{S}_g\), assertions \ref{item: exists weights for a spectral gap} and \ref{item: exists a permutation for norm < 1} are equivalent by Theorem~\ref{thm: equivalent conditions of norm convergence}. 
	
	This establishes the equivalence of assertions \ref{item: exists a permutation for a spectral gap}, \ref{item: exists weights for a spectral gap}, and \ref{item: exists a permutation for norm < 1}. A similar argument establishes the equivalence of assertions \ref{item: for all permutations, a spectral gap}, \ref{item: for all weights, a spectral gap}, and \ref{item: for all permutations, norm < 1}. A final application of Theorem~\ref{thm: equivalent conditions of norm convergence} implies the equivalence of assertions \ref{item: exists a permutation for norm < 1} and \ref{item: for all permutations, norm < 1}. This completes the proof.
\end{proof}

Corollary~\ref{cor: generalized solidarity principle of the spectral gap} generalizes the solidarity principle of the spectral gap introduced in Theorem~1.1 of \cite{Chlebicka:Latuszynski:Miasojedow:2025}. Indeed, the solidarity result of \cite{Chlebicka:Latuszynski:Miasojedow:2025} is the special case of Corollary~\ref{cor: generalized solidarity principle of the spectral gap} where \(g = K\) and \(F_d = E_d\) for \(d \in [g]\). Furthermore, the proof of Corollary~\ref{cor: generalized solidarity principle of the spectral gap} is simpler than the one provided therein: there is no need to employ geometric quantities such as the Friedrichs angle or inclination as studied in \cite{Badea:Grivaux:Muller:2011}.

\section{Inheritance of a spectral gap from a full Gibbs sampler}
\label{section: Inheritance of a spectral gap from a full Gibbs sampler}

\subsection{Inheritance for cycles and mixtures of Gibbs steps}

Beyond the generalized solidarity principle of the spectral gap for cycles and mixtures of Gibbs steps established in Corollary~\ref{cor: generalized solidarity principle of the spectral gap}, we consider a situation where the presence of a spectral gap for a family of cycles and mixtures corresponding to one choice of Gibbs steps affects the presence of a spectral gap for another family of cycles and mixtures corresponding to a different choice of Gibbs steps. This is motivated by the following known phenomenon: when a full Gibbs sampler (whose operator is given by \eqref{eq: deterministic-scan Gibbs sampler operator} or \eqref{eq: random-scan Gibbs sampler operator}) has a spectral gap, a blocked Gibbs sampler may not necessarily have a larger spectral gap. For example, if \(K = 3\), then it may be that \(r(P_{E_1}P_{E_2}P_{E_3} - \Pi) < r(P_{E_1}P_{E_2 \intersect E_3} - \Pi)\). See \cite{Liu:Wong:Kong:1994}, Example~3, for an example of a Gibbs sampler targeting a trivariate normal distribution where this is the case. This phenomenon raises the concern of the possibility of an extreme situation: one in which a full Gibbs sampler has a spectral gap while some blocked version does not. As we establish in the following theorem, such a situation can never occur. Every cycle and mixture of Gibbs steps inherits a spectral gap from a full Gibbs sampler. 

\begin{thm}
	\label{thm: inheritance of a spectral gap from the full Gibbs sampler}
	If a full Gibbs sampler has a spectral gap (or, equivalently, if every full Gibbs sampler has a spectral gap), then every cycle and mixture of Gibbs steps has a spectral gap.
\end{thm}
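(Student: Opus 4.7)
The plan is to reduce the theorem to a single Loewner-order comparison between a mixture $\sum_{d=1}^g w_d P_{F_d}$ and a mixture $\sum_{i=1}^K v_i P_{E_i}$, and then transport the resulting norm bound through Corollary~\ref{cor: generalized solidarity principle of the spectral gap} on each side. On the hypothesis side, the singleton family $I_d = \{d\}$ turns the existence of a spectral gap for some full Gibbs sampler into the quantitative statement $\norm{\sum_{i=1}^K v_i P_{E_i} - \Pi} < 1$ for every $(v_1, \ldots, v_K) \in \mscr{S}_K$; on the conclusion side, the family $I_1, \ldots, I_g$ reduces the task to producing a single $(w_1, \ldots, w_g) \in \mscr{S}_g$ with $\norm{\sum_{d=1}^g w_d P_{F_d} - \Pi} < 1$.

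The bridge between the two is the observation that $F_d = \bigcap_{i \in I_d} E_i \subseteq E_i$ for each $i \in I_d$, which gives the Loewner inequality $P_{F_d} \leq P_{E_i}$ and hence $P_{F_d} \leq |I_d|^{-1} \sum_{i \in I_d} P_{E_i}$ by averaging. Weighting by $w_d$ and exchanging the order of summation yields
\begin{equation*}
\sum_{d=1}^g w_d P_{F_d} \;\leq\; \sum_{i=1}^K v_i P_{E_i}, \qquad v_i \coloneq \sum_{d \,:\, i \in I_d} \frac{w_d}{|I_d|},
\end{equation*}
and the covering condition $\bigcup_d I_d = [K]$ together with $(w_d) \in \mscr{S}_g$ guarantees that $(v_1, \ldots, v_K) \in \mscr{S}_K$. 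Combined with $\Pi = P_E \leq P_{F_d}$ (which follows from $E \subseteq F_d$), this sandwiches the difference as $0 \leq \sum_d w_d P_{F_d} - \Pi \leq \sum_i v_i P_{E_i} - \Pi$, and monotonicity of the operator norm on positive self-adjoint operators ordered in the Loewner order transfers the norm bound from the $P_{E_i}$-mixture to the $P_{F_d}$-mixture.

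There is no real analytic obstacle here; the argument rests entirely on spotting the Loewner sandwich between the two mixtures. If anything is delicate, it is only the small bookkeeping that verifies the constructed weights $v_i$ are strictly positive (this is where $\bigcup_d I_d = [K]$ enters) and sum to one, so that Corollary~\ref{cor: generalized solidarity principle of the spectral gap} can be invoked cleanly on the hypothesis side to supply $\norm{\sum_i v_i P_{E_i} - \Pi} < 1$ for precisely this $(v_i)$.
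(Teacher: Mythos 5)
Your proof is correct, and it takes a genuinely different route from the paper's. The paper works with cycles: it invokes Lemma~\ref{lem: norm(P_M T) <= norm(P_(M_1) ... P_(M_N) T)} to bound \(\norm{P_{F_1}\cdots P_{F_g} - \Pi}\) by successively replacing each blocked projection \(P_{F_d}\) with the product \(T_d = \prod_{i \in I_d} P_{E_i}\), using the cyclic-permutation invariance in Theorem~\ref{thm: equivalent conditions of norm convergence} to rotate the next factor to the front at each stage, and finally comparing the resulting product \(T_g \cdots T_1\) (with repeated elementary factors) to the deterministic-scan cycle \(P_{E_1}\cdots P_{E_K}\). You instead work entirely with mixtures: the Loewner sandwich \(0 \leq \sum_d w_d P_{F_d} - \Pi \leq \sum_i v_i P_{E_i} - \Pi\), with \(v_i = \sum_{d : i \in I_d} w_d/\abs{I_d}\), together with monotonicity of the norm on positive operators, transfers the bound from the random-scan sampler with weights \((v_i)\) to the blocked mixture, and Corollary~\ref{cor: generalized solidarity principle of the spectral gap} is then used on both ends (to upgrade the hypothesis to all random-scan weights, and to spread the conclusion from one mixture to all cycles and mixtures of the \(P_{F_d}\)). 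Your bridge is essentially the positivity technique the paper deploys separately in Proposition~\ref{prop: inequality between spectral radii of two mixtures of Gibbs steps} but does not use for this theorem; a pleasant by-product of your route is a quantitative statement, namely \(r(\sum_d w_d P_{F_d} - \Pi) \leq r(\sum_i v_i P_{E_i} - \Pi)\) for the explicit induced weights \(v_i\), whereas the paper's iterative unblocking argument is purely qualitative. Both arguments ultimately lean on the generalized solidarity corollary, and your bookkeeping (strict positivity of the \(v_i\) via \(\bigcup_d I_d = [K]\), and \(\sum_i v_i = \sum_d w_d = 1\)) is sound.
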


The following lemma is used in the proof of Theorem~\ref{thm: inheritance of a spectral gap from the full Gibbs sampler}.

\begin{lem}
	\label{lem: norm(P_M T) <= norm(P_(M_1) ... P_(M_N) T)}
	Let \(N \geq 2\). Let \(M_1, \ldots, M_N\) be closed subspaces of a complex Hilbert space \(H\) with intersection \(M = \bigcap_{i = 1}^N M_i\). For all \(T \in B(H)\),
	\begin{equation*}
		\norm*{P_M T} \leq \norm*{P_{M_1}\cdots P_{M_N}T}.
	\end{equation*}
\end{lem}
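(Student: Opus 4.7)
The plan is to reduce the inequality to a simple identity: I will show that $P_M = P_M P_{M_1} P_{M_2} \cdots P_{M_N}$, after which the result follows from a one-line application of the submultiplicativity of the operator norm together with $\|P_M\| \leq 1$.

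First I would establish the one-sided identity $P_{M_i} P_M = P_M$ for each $i \in \{1,\ldots,N\}$: since $M \subseteq M_i$, for any $h \in H$ the vector $P_M h$ already lies in $M_i$, so its projection onto $M_i$ is itself. Next, taking adjoints and using that orthogonal projections are self-adjoint, I get
\begin{equation*}
P_M P_{M_i} = \bigl(P_{M_i}^* P_M^*\bigr)^* = \bigl(P_{M_i} P_M\bigr)^* = P_M^* = P_M,
\end{equation*}
which is the identity I actually need. Iterating,
\begin{equation*}
P_M = P_M P_{M_1} = (P_M P_{M_2}) P_{M_1} = P_M P_{M_2} P_{M_1},
\end{equation*}
and continuing in this fashion (or equivalently factoring on the right) yields $P_M = P_M P_{M_1} P_{M_2} \cdots P_{M_N}$.

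With this identity in hand, I finish by writing
\begin{equation*}
\|P_M T\| = \|P_M P_{M_1} \cdots P_{M_N} T\| \leq \|P_M\| \, \|P_{M_1}\cdots P_{M_N} T\| \leq \|P_{M_1}\cdots P_{M_N} T\|,
\end{equation*}
using that $\|P_M\| \leq 1$ for any orthogonal projection.

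There is no real obstacle here; the only step that requires a moment of thought is recognizing that although $P_M = P_{M_i} P_M$ is geometrically obvious from $M \subseteq M_i$, the identity actually needed on the \emph{left} of the product $P_{M_1}\cdots P_{M_N}T$ is the adjoint version $P_M = P_M P_{M_i}$, which falls out immediately because all projections involved are self-adjoint.
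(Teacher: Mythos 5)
Your proof is correct and follows essentially the same route as the paper: both establish the identity \(P_M = P_M P_{M_1}\cdots P_{M_N}\) by noting \(P_{M_i}P_M = P_M\) (since \(M \subseteq M_i\)) and taking adjoints, then conclude via submultiplicativity of the operator norm and \(\norm{P_M} \leq 1\). The only cosmetic difference is that the paper separates the trivial case \(M = \{0\}\) to assert \(\norm{P_M} = 1\), whereas your use of \(\norm{P_M} \leq 1\) handles both cases at once.
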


\begin{proof}
	The claim is trivial when \(M = \{0\}\), so suppose \(M \neq \{0\}\) in which case \(\norm{P_M} = 1\). Observe that \(P_M = P_{M_N}\cdots P_{M_1}P_M\), so \(P_M = P_MP_{M_1}\cdots P_{M_N}\) by taking adjoints. For all \(h \in H\), 
	\begin{align*}
		\norm{P_MTh} &= \norm{P_MP_{M_1}\cdots P_{M_N}Th} \\
		&\leq \norm{P_M P_{M_1}\cdots P_{M_N}T} \norm{h} \\
		&\leq \norm{P_{M_1}\cdots P_{M_N}T}\norm{h},
	\end{align*}
	where the second inequality follows by the submultiplicativity of the operator norm. This completes the proof.
\end{proof}

\begin{proof}[Proof of Theorem~\ref{thm: inheritance of a spectral gap from the full Gibbs sampler}]
	Take any choice of nonempty proper subsets \(I_1, \ldots, I_g\) of \([K]\) satisfying \(\bigcup_{d = 1}^g I_d = [K]\). For \(d \in [g]\), let \(F_d = \bigcap_{i \in I_d} E_i\) and let \(T_d = \prod_{i \in I_d} P_{E_i}\). By Corollary~\ref{cor: generalized solidarity principle of the spectral gap}, it suffices to show that \(\norm{P_{F_1} \cdots P_{F_g} - \Pi} < 1\). By Lemma~\ref{lem: norm(P_M T) <= norm(P_(M_1) ... P_(M_N) T)}, 
	\begin{align*}
		\norm*{P_{F_1}\cdots P_{F_g} - \Pi} &= \norm*{P_{F_1}\paren*{P_{F_2} \cdots P_{F_g} - \Pi}} \\
		&\leq \norm*{T_1 \paren*{P_{F_2} \cdots P_{F_g} - \Pi}} \\
		&= \norm*{T_1 P_{F_2} \cdots P_{F_g} - \Pi}.
	\end{align*}
	Hence, \(\norm{P_{F_1} \cdots P_{F_g} - \Pi} < 1\) if \( \norm*{T_1 P_{F_2} \cdots P_{F_g} - \Pi} < 1\). Theorem~\ref{thm: equivalent conditions of norm convergence} implies that \(\norm{T_1 P_{F_2} \cdots P_{F_g}- \Pi} < 1\) if and only if \(\norm{P_{F_2} T_1 P_{F_3} \cdots P_{F_g} - \Pi} < 1\). An appeal to Lemma~\ref{lem: norm(P_M T) <= norm(P_(M_1) ... P_(M_N) T)} once more yields that
	\begin{align*}
		\norm*{P_{F_2} T_1 P_{F_3} \cdots P_{F_g} - \Pi} &= \norm*{P_{F_2}\paren*{T_1 P_{F_3} \cdots P_{F_g} - \Pi}} \\
		&\leq \norm*{T_2 \paren*{T_1P_{F_3} \cdots P_{F_g} - \Pi}} \\
		&= \norm*{T_2 T_1 P_{F_3} \cdots P_{F_g} - \Pi}.
	\end{align*}
	Hence, \(\norm{P_{F_1} \cdots P_{F_g} - \Pi} < 1\) if \( \norm*{T_2 T_1 P_{F_3} \cdots P_{F_g} - \Pi} < 1\). By continuing in this way, it follows that \(\norm{P_{F_1} \cdots P_{F_g} - \Pi} < 1\) if \(\norm{T_g \cdots T_1 - \Pi} < 1\). Observe that \(T_g \cdots T_1\) is a product composed of \(P_{E_1}, \ldots, P_{E_K}\) wherein some of these projections may appear multiple times. Therefore, with a permutation that puts the same projections next to each other, it follows by Theorem~\ref{thm: equivalent conditions of norm convergence} that \(\norm{T_g \cdots T_1 - \Pi} < 1\) if and only if \(\norm{P_{E_1}\cdots P_{E_K} - \Pi} < 1\), which holds by assumption. This establishes the theorem.
\end{proof}

In the discussion at the beginning of Section~\ref{section: Inheritance of a spectral gap from a full Gibbs sampler}, we observed that \(P_{F_1} \cdots P_{F_g}\) and \(P_{F'_1} \cdots P_{F'_g}\), two cycles of Gibbs steps, in general may not satisfy
\begin{equation*}
	r(P_{F_1} \cdots P_{F_g} - \Pi) \leq r(P_{F'_1} \cdots P_{F'_g} - \Pi)
\end{equation*}  
even if \(F_d \subseteq F_d'\) for all \(d \in [g]\). (Recall the example in \cite{Liu:Wong:Kong:1994} concerning \(P_{E_1}P_{E_2 \intersect E_3} = P_{E_1}P_{E_2 \intersect E_3}P_{E_2 \intersect E_3}\) and \(P_{E_1}P_{E_2}P_{E_3}\).) However, we show in the following result that there is such an ordering between the spectral gaps of \(\sum_{d = 1}^g w_d P_{F_d}\) and \(\sum_{d = 1}^g w_d P_{F_d'}\), two mixtures of Gibbs steps, whenever \(F_d \subseteq F_d'\) for all \(d \in [g]\). This provides a choice of weights to construct blocked versions of mixtures of Gibbs steps that have a spectral gap at least as large as the original mixture. For instance, \(r((1/4)P_{E_1} + (1/2) P_{E_2 \intersect E_3} + (1/4) P_{E_4} - \Pi) \leq r(\sum_{i = 1}^4 (1/4) P_{E_i} - \Pi)\) when \(K = 4\). 

\begin{prop}
	\label{prop: inequality between spectral radii of two mixtures of Gibbs steps}
	Let \(\sum_{d = 1}^g w_d P_{F_d}\) and \(\sum_{d = 1}^g w_d P_{F_d'}\) be two mixtures of Gibbs steps with \(F_d \subseteq F_d'\) for every \(d \in [g]\). Then,
	\begin{equation}
		\label{eq: inequality between spectral radii of two mixtures of Gibbs steps}
		r\paren*{\sum_{d = 1}^g w_d P_{F_d} - \Pi} \leq r\paren*{\sum_{d = 1}^g w_d P_{F_d'} - \Pi}.
	\end{equation}  
\end{prop}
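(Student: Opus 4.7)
The plan is to exploit self-adjointness and positivity to reduce the spectral radius comparison to a comparison of quadratic forms, then use the monotonicity of projection norms on nested closed subspaces.

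First, I would observe that both operators are self-adjoint. Since each Gibbs step $P_{F_d}$ and $P_{F_d'}$ is an orthogonal projection, both $A \coloneq \sum_{d=1}^g w_d P_{F_d}$ and $A' \coloneq \sum_{d=1}^g w_d P_{F_d'}$ are self-adjoint. Consequently, $A - \Pi$ and $A' - \Pi$ are self-adjoint, so $r(A - \Pi) = \norm{A - \Pi}$ and $r(A' - \Pi) = \norm{A' - \Pi}$. This turns \eqref{eq: inequality between spectral radii of two mixtures of Gibbs steps} into the claim that $\norm{A - \Pi} \leq \norm{A' - \Pi}$.

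Second, I would show that $A - \Pi$ and $A' - \Pi$ are in fact \emph{positive} self-adjoint operators, so that the norm can be computed as the supremum of the associated quadratic form. Because $E \subseteq F_d \subseteq F_d'$ for every $d$, one has $\Pi P_{F_d} = P_{F_d} \Pi = \Pi$ (and similarly with $F_d'$). This shows that $E^\perp$ is invariant under each $P_{F_d}$, so decomposing $f = \Pi f + g$ with $g \coloneq (I - \Pi)f \in E^\perp$ gives $P_{F_d} f = \Pi f + P_{F_d} g$ as an orthogonal decomposition, whence the Pythagorean identity
\begin{equation*}
	\norm{P_{F_d} f}^2 = \abs{\pi f}^2 + \norm{P_{F_d} g}^2.
\end{equation*}
Summing with weights and using $\inner{\Pi f, f} = \abs{\pi f}^2$ yields
\begin{equation*}
	\inner{(A - \Pi)f, f} = \sum_{d=1}^g w_d \norm{P_{F_d} g}^2 \geq 0,
\end{equation*}
and likewise $\inner{(A' - \Pi)f, f} = \sum_{d=1}^g w_d \norm{P_{F_d'} g}^2 \geq 0$. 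Because positive self-adjoint operators $T$ satisfy $\norm{T} = \sup_{\norm{f}=1} \inner{Tf, f}$, the task reduces to comparing these quadratic forms.

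Finally, I would invoke the standard fact that if $F_d \subseteq F_d'$ then $\norm{P_{F_d} h} \leq \norm{P_{F_d'} h}$ for every $h \in L^2(\pi)$ (since $P_{F_d} h$ is the best approximation to $h$ from a smaller set than $P_{F_d'} h$). Applying this with $h = g$ and summing against the nonnegative weights $w_d$ gives
\begin{equation*}
	\inner{(A - \Pi)f, f} = \sum_{d=1}^g w_d \norm{P_{F_d} g}^2 \leq \sum_{d=1}^g w_d \norm{P_{F_d'} g}^2 = \inner{(A' - \Pi)f, f}
\end{equation*}
for every $f$, and taking the supremum over $\norm{f}=1$ completes the proof. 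I do not foresee a real obstacle; the only step requiring care is verifying positivity, which in turn is the only reason the conclusion fails in the cycle case (where $P_{F_1} \cdots P_{F_g} - \Pi$ need not be self-adjoint, let alone positive) and thereby explains the absence of an analogous inequality for cycles mentioned just before the statement.
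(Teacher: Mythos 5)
Your proposal is correct and follows essentially the same route as the paper: both reduce the spectral radii to norms via self-adjointness and then establish the operator ordering \(0 \leq \sum_{d} w_d P_{F_d} - \Pi \leq \sum_{d} w_d P_{F_d'} - \Pi\), concluding by monotonicity of the norm on positive operators. The only difference is that you verify the positivity and the ordering by hand through quadratic forms (Pythagorean decomposition and the fact that \(F_d \subseteq F_d'\) gives \(\norm{P_{F_d}h} \leq \norm{P_{F_d'}h}\)), whereas the paper simply cites the standard facts that \(P_{S_1} \leq P_{S_2}\) iff \(S_1 \subseteq S_2\) and that \(0 \leq T_1 \leq T_2\) implies \(\norm{T_1} \leq \norm{T_2}\).
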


\begin{proof}
	Recall the following facts about positive operators on a complex Hilbert space \(H\). For any two self-adjoint operators \(T_1, T_2 \in B(H)\), where \(H\) is a complex Hilbert space, \(T_2 - T_1\) is positive if \(\inner{(T_2 - T_1)x, x} \geq 0\) for all \(x \in H\), in which case we write \(T_1 \leq T_2\). Linear combinations with positive coefficients of positive operators are positive. If \(0 \leq T_1 \leq T_2\), then \(\norm{T_1} \leq \norm{T_2}\). For any two closed subspaces \(S_1\) and \(S_2\) of \(H\), it holds that \(P_{S_1} \leq P_{S_2}\) if and only if \(S_1 \subseteq S_2\). (See, e.g., \cite{Murphy:1990}, Ch.~2, for more details.)
	
	Since \(E \subseteq F_d\) and \(E \subseteq F_d'\) for every \(d \in [g]\), the operators \(\sum_{d = 1}^g w_d P_{F_d} - \Pi = \sum_{d = 1}^g w_d(P_{F_d} - \Pi)\) and \(\sum_{d = 1}^g w_d P_{F_d'} - \Pi = \sum_{d = 1}^g w_d(P_{F_d'} - \Pi)\) are positive. Further, since \(F_d \subseteq F_d'\) for every \(d \in [g]\), it follows that
	\begin{equation*}
		0 \leq \sum_{d=1}^g w_d P_{F_d} - \Pi \leq \sum_{d = 1}^g w_d P_{F_d'} - \Pi,
	\end{equation*}
	so the respective spectral radii, which are equal to the respective norms by self-adjointness, are ordered in the same direction. This yields \eqref{eq: inequality between spectral radii of two mixtures of Gibbs steps}.
\end{proof}

\begin{rem}
	Under the conditions of Proposition~\ref{prop: inequality between spectral radii of two mixtures of Gibbs steps}, the Markov chain with operator \(\sum_{d = 1}^g w_d P_{F_d}\) is ``at least as efficient" as the Markov chain with operator \(\sum_{d = 1}^g w_d P_{F_d'}\) in the following sense: for any \(f \in L^2(\pi)\), the former chain produces estimates of \(\pi f\) that are asymptotically more efficient per iteration than estimates produced by the latter chain. See \cite{Mira:Geyer:1999} for more on this efficiency ordering.
\end{rem}

Since cycles and mixtures of Gibbs steps are central to our interests, our attention has been focused on \(L^2(\pi)\), but note that Theorem~\ref{thm: inheritance of a spectral gap from the full Gibbs sampler} (and Proposition~\ref{prop: inequality between spectral radii of two mixtures of Gibbs steps}) hold more generally for products and convex combinations of projections and their blocked variants on an arbitrary complex Hilbert space. In the next section, we consider properties of Gibbs steps that leverage the structure of \(L^2(\pi)\) more integrally.

\subsection{Connections to collapsed Gibbs steps}

Various \(L^2\) spaces are considered alongside \(L^2(\pi)\) throughout the rest of Section~\ref{section: Inheritance of a spectral gap from a full Gibbs sampler}. For any complete separable metric space \(\Omega\) and any probability measure \(\xi\) on \((\Omega, \mscr{B}(\Omega))\), let \(L^2(\xi)\) denote the Hilbert space given by the set of measurable functions \(f \colon \Omega \to \mbb{C}\) satisfying \(\int_{\Omega} \abs{f(\omega)}^2 \, \xi(d\omega) < \infty\). Functions that agree \(\xi\)-a.e. are identified with each other, and the inner product on \(L^2(\xi)\) is given by \(\inner{f,g} = \int_\Omega f(\omega) \overline{g(\omega)} \, \xi(d\omega)\). If \(Q \in B(L^2(\xi))\) is a Markov operator satisfying \(\Xi Q = \Xi\), where \(\Xi \in B(L^2(\xi))\) is given by \(\Xi f \equiv \xi f\), then \(Q\) is said to have a spectral gap whenever \(r(Q - \Xi) < 1\). 

Throughout this section, suppose \(K \geq 3\), and fix any nonempty proper subset \(I = \{i_1, \ldots, i_n\}\) of \([K]\) with \(n \geq 2\). Let \(\pi_{I}\) denote the marginal distribution of \(\pi\) on \(\mcal{X}_{I}\); that is, \(\pi_I(A_I) = \pi(A_I \cross \mcal{X}_{-I})\) for all \(A_I \in \mscr{B}(\mcal{X}_I)\). Let \(\Pi_I \in B(L^2(\pi_I))\) be given by \(\Pi_I f_I \equiv \pi_I f_I\) for \(f_I \in L^2(\pi_I)\). Further, for \(\ell \in [n]\), define \(E^{(I)}_{i_\ell}\) to be the closed subspace of functions in \(L^2(\pi_I)\) that are constant with respect to the \(i_\ell\)th coordinate. For any nonempty proper subset \(J\) of \(I\), we call the orthogonal projection \(P_{F_J^{(I)}} \in B(L^2(\pi_{I}))\) onto \(F_J^{(I)} = \bigcap_{j \in J} E_{j}^{(I)}\) a \emph{collapsed Gibbs step}. Cycles and mixtures of collapsed Gibbs steps are constructed analogously to those in Section~\ref{subsection: Cycles and mixtures of Gibbs steps} and yield samplers that target \(\pi_I\). Any cycle or mixture of collapsed Gibbs steps given by a form analogous to \eqref{eq: deterministic-scan Gibbs sampler operator}, \eqref{eq: blocked deterministic-scan Gibbs sampler operator}, \eqref{eq: random-scan Gibbs sampler operator}, or \eqref{eq: blocked random-scan Gibbs sampler operator} is called a \emph{collapsed Gibbs sampler}. 

Merely by taking the underlying space to be \(L^2(\pi_I)\) instead of \(L^2(\pi)\), observe that cycles and mixtures corresponding to the same choice of collapsed Gibbs steps follow a solidarity principle of the spectral gap (Corollary~\ref{cor: generalized solidarity principle of the spectral gap}) and that cycles and mixtures of collapsed Gibbs steps inherit a spectral gap from collapsed Gibbs samplers given by a form analogous to \eqref{eq: deterministic-scan Gibbs sampler operator} or \eqref{eq: random-scan Gibbs sampler operator} (Theorem~\ref{thm: inheritance of a spectral gap from the full Gibbs sampler}). Naturally, more substantial observations can be made when the space \(L^2(\pi)\) and the operators thereon are not completely ignored. We work to illustrate some of the relationships connecting cycles and mixtures of collapsed Gibbs steps with cycles and mixtures of Gibbs steps. These relationships are employed to develop an extension of Theorem~\ref{thm: inheritance of a spectral gap from the full Gibbs sampler}: cycles and mixtures of collapsed Gibbs steps inherit a spectral gap from a full Gibbs sampler. 

Let us introduce the following identification between \(L^2(\pi_I)\) and \(\bigcap_{i \in [K] \setminus I} E_i\).  

\begin{prop}
	\label{prop: L^2(pi_I) and bigcap_-I E_i are isometrically isomorphic}
	The linear mapping \(\Phi \colon L^2(\pi_I) \to \bigcap_{i \in [K] \setminus I} E_i\) given by
	\begin{equation*}
		(\Phi f_I)(x_I, x_{-I}) = f_I(x_I), \quad (x_I, x_{-I}) \in \mcal{X},
	\end{equation*}
	for \(f_I \in L^2(\pi_I)\) is an isometric isomorphism.
\end{prop}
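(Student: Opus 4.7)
The plan is to verify in order that \(\Phi\) is linear, well-defined on \(\pi_I\)-a.e.\ equivalence classes, takes values in \(\bigcap_{i \in [K]\setminus I} E_i\), is isometric (hence injective), and is surjective. Linearity will be immediate from the pointwise definition. For well-definedness and the isometry, I would use the key fact that \(\pi_I\) is the pushforward of \(\pi\) under the Borel projection \((x_I, x_{-I}) \mapsto x_I\). This gives Borel-measurability of \(\Phi f_I\) on \(\mcal{X}\), and Tonelli's theorem then yields
\[
\norm{\Phi f_I}_{L^2(\pi)}^2 = \int_\mcal{X} \abs{f_I(x_I)}^2 \, \pi(dx) = \int_{\mcal{X}_I} \abs{f_I(x_I)}^2 \, \pi_I(dx_I) = \norm{f_I}_{L^2(\pi_I)}^2.
\]
The same pushforward identity shows that \(\Phi\) respects \(\pi_I\)-a.e.\ equivalence, and \(\Phi f_I \in E_i\) for every \(i \in [K]\setminus I\) follows immediately because \(\Phi f_I\) is pointwise constant in each such coordinate. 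Injectivity is then a free consequence of the isometry.

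Surjectivity will be the main substantive step. Given \(g \in F \coloneq \bigcap_{i \in [K]\setminus I} E_i\), my plan is to invoke the Gibbs step projection formula \eqref{eq: P_F} with the index set \([K] \setminus I\) in place of \(I\), giving
\[
(P_F g)(x) = \int_{\mcal{X}_{-I}} g(x_I, x_{-I}') \, \pi(dx_{-I}'|x_I), \quad x \in \mcal{X},
\]
whose right-hand side is manifestly a function of \(x_I\) alone. Since \(g \in F\) forces \(P_F g = g\) in \(L^2(\pi)\), I would fix any \(x_{-I}^0 \in \mcal{X}_{-I}\) and define \(f_I(x_I) \coloneq (P_F g)(x_I, x_{-I}^0)\). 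Then \(f_I\) is Borel-measurable on \(\mcal{X}_I\), the equality \(\Phi f_I = P_F g = g\) holds in \(L^2(\pi)\), and the isometry just established forces \(f_I \in L^2(\pi_I)\).

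The main obstacle I anticipate is the subtlety of handling equivalence classes in the surjectivity step: a priori, membership of \(g\) in each \(E_i\) for \(i \in [K]\setminus I\) only guarantees a separate \(x_i\)-constant representative per coordinate, and it is not obvious how to reconcile them into a single representative that depends only on \(x_I\). Using the projection \(P_F\) sidesteps this concern entirely, because conditional integration against \(\pi(\cdot\,|\,x_I)\) produces a single canonical \(x_I\)-only representative directly, after which all remaining verifications collapse to the isometry already in hand.
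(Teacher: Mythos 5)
Your proof is correct, but the route differs from the paper's in how the isomorphism part is handled. The paper fixes a point \(x_{-I}^*\in\mcal{X}_{-I}\), observes that two elements of \(\bigcap_{i\in[K]\setminus I}E_i\) coincide if and only if their slices \(x_I\mapsto f(x_I,x_{-I}^*)\) agree \(\pi_I\)-a.e. (giving injectivity), declares surjectivity immediate, writes the inverse explicitly as \((\Phi^{-1}f)(x_I)=f(x_I,x_{-I}^*)\), and only then checks the isometry by the same marginal-measure computation you perform. You instead prove the isometry first (your ``Tonelli'' step is really just the change-of-variables formula for the pushforward \(\pi_I\) of \(\pi\) under the coordinate projection, since \(\pi\) need not be a product measure, but the computation is the same), get injectivity for free, and then obtain surjectivity by applying the projection formula \eqref{eq: P_F} with index set \([K]\setminus I\) to produce a canonical representative of \(g\) depending on \(x_I\) alone before evaluating at a fixed \(x_{-I}^0\). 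The trade-off is that your argument is more scrupulous about the equivalence-class subtlety you flag--- the paper's ``surjectivity is immediate'' and its slice-defined inverse tacitly use that every class in the intersection admits a single representative constant in all coordinates outside \(I\) simultaneously, which is exactly what your detour through \(P_F\) supplies---while the paper's version is shorter and hands you the inverse map explicitly; both arguments ultimately rest on the preliminaries' identification of the integral operator \eqref{eq: P_F} with the orthogonal projection onto the corresponding intersection, yours explicitly and the paper's implicitly.
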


\begin{proof}
	Fix any \(x_{-I}^* \in \mcal{X}_{-I}\). Surjectivity of \(\Phi\) is immediate, and injectivity follows from noting that two functions \(f,g \in \bigcap_{i \in [K] \setminus I} E_i\) agree if and only if \(f(x_I, x_{-I}^*) = g(x_I, x_{-I}^*)\) agree \(\pi_I\)-a.e. Hence, \(\Phi\) is an isomorphism with inverse given by 
	\begin{equation*}
		(\Phi^{-1}f)(x_I) = f(x_I, x_{-I}^*), \quad x_I \in \mcal{X}_I,
	\end{equation*} 
	for \(f \in \bigcap_{i \in [K] \setminus I} E_i\). Finally, for all \(f_I \in L^2(\pi_I)\), observe that
	\begin{align*}
		\norm{\Phi f_I}^2 &= \int_{\mcal{X}_I \cross \mcal{X}_{-I}} \abs{(\Phi f_I)(x_I, x_{-I})}^2 \, \pi(dx_I, dx_{-I}) \\
		&= \int_{\mcal{X}_I} \abs{(\Phi f_I)(x_I, x_{-I}^*)}^2 \, \pi(dx_I) \\
		&= \int_{\mcal{X}_I} \abs{f_I(x_I)}^2 \, \pi_I(dx_I) \\
		&= \norm{f_I}^2,
	\end{align*}
	so \(\Phi\) is isometric. This completes the proof.
\end{proof}

The following proposition shows that collapsed Gibbs steps on \(L^2(\pi_I)\) are \emph{similar} to certain Gibbs steps restricted to \(\bigcap_{i \in [K] \setminus I} E_i\). (Two bounded linear operators \(T_1\) and \(T_2\) on Hilbert spaces \(H_1\) and \(H_2\), respectively, are similar if there exists an invertible bounded linear operator \(\Psi \colon H_1 \to H_2\) such that \(T_2 = \Psi T_1 \Psi^{-1}\).)

\begin{prop}
	\label{prop: similarity between collapsed Gibbs steps and Gibbs steps with restricted domain}
	Let \(J\) be a nonempty subset of \(I\). Let \(F = \bigcap_{i \in [K] \setminus I} E_i\), let \(F_J = \bigcap_{j \in J} E_{j}\), and let \(F_J^{(I)} = \bigcap_{j \in J} E_{j}^{(I)}\). Then,
	\begin{equation*}
		(P_{F_J \intersect F})|_F = \Phi P_{F_J^{(I)}} \Phi^{-1},
	\end{equation*}
	where \(\Phi \colon L^2(\pi_I) \to F\) is the isometric isomorphism given in Proposition~\ref{prop: L^2(pi_I) and bigcap_-I E_i are isometrically isomorphic}.
\end{prop}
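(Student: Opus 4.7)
The plan is to verify the identity directly from the integral representations of both operators, using the fact that elements of $F$ do not depend on the $x_{-I}$ coordinates. First I would observe that $F_J \cap F = \bigcap_{j \in J} E_j \cap \bigcap_{i \in [K]\setminus I} E_i = \bigcap_{i \in J \cup ([K]\setminus I)} E_i$, so that $F_J \cap F \subseteq F$. In particular, $P_{F_J \cap F}$ maps $F$ into itself, so $(P_{F_J \cap F})|_F$ is a well-defined bounded operator on $F$; this makes the assertion that it equals the similarity transform $\Phi P_{F_J^{(I)}} \Phi^{-1}$ meaningful. Equivalently, the identity to prove can be rewritten as $\Phi^{-1} (P_{F_J \cap F} f) = P_{F_J^{(I)}} (\Phi^{-1} f)$ for every $f \in F$.

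Next I would fix an arbitrary $f \in F$ and a reference point $x_{-I}^* \in \mcal{X}_{-I}$ (as in the proof of Proposition~\ref{prop: L^2(pi_I) and bigcap_-I E_i are isometrically isomorphic}). Applying formula~\eqref{eq: P_F} with the index set $J \cup ([K]\setminus I)$ (whose complement is $I \setminus J$), one gets
\begin{equation*}
  (P_{F_J \cap F} f)(x_I, x_{-I}) = \int_{\mcal{X}_J \times \mcal{X}_{-I}} f(x_J', x_{I\setminus J}, x_{-I}') \, \pi(dx_J', dx_{-I}' \mid x_{I\setminus J}).
\end{equation*}
Since $f \in F$, the integrand does not depend on $x_{-I}'$, and we may pull $f(x_J', x_{I\setminus J}, x_{-I}^*)$ outside the integral over $x_{-I}'$. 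The marginal of $\pi(dx_J', dx_{-I}' \mid x_{I\setminus J})$ over $x_{-I}'$ is precisely $\pi_I(dx_J' \mid x_{I\setminus J})$, which is the standard identification of conditional marginals; this is the only nontrivial bookkeeping step. Hence
\begin{equation*}
  (P_{F_J \cap F} f)(x_I, x_{-I}) = \int_{\mcal{X}_J} f(x_J', x_{I\setminus J}, x_{-I}^*) \, \pi_I(dx_J' \mid x_{I\setminus J}).
\end{equation*}

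The right-hand side is independent of $x_{-I}$, confirming once more that $P_{F_J \cap F} f \in F$. Applying $\Phi^{-1}$ (evaluation at $x_{-I}^*$) gives $(\Phi^{-1}(P_{F_J \cap F} f))(x_I) = \int_{\mcal{X}_J} (\Phi^{-1} f)(x_J', x_{I\setminus J}) \, \pi_I(dx_J' \mid x_{I\setminus J})$, which by \eqref{eq: P_F} applied inside $L^2(\pi_I)$ to the subspace $F_J^{(I)}$ is exactly $(P_{F_J^{(I)}} \Phi^{-1} f)(x_I)$. Applying $\Phi$ to both sides yields the claimed identity $(P_{F_J \cap F})|_F = \Phi P_{F_J^{(I)}} \Phi^{-1}$. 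The main conceptual point, and the place requiring care, is the marginalization step identifying the restriction of $\pi(\cdot \mid x_{I\setminus J})$ to $\mcal{X}_J$ with $\pi_I(\cdot \mid x_{I\setminus J})$; everything else is a direct substitution in the integral formulas.
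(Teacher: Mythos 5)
Your proposal is correct and takes essentially the same route as the paper: a direct computation from the integral representation \eqref{eq: P_F}, using that $f\in F$ does not depend on $x_{-I}$, the fixed reference point $x_{-I}^*$, and the identification of the conditional of $X_J$ given $X_{I\setminus J}$ under $\pi$ with that under $\pi_I$. The only difference is that the paper handles the degenerate case $J=I$ separately (there $I\setminus J=\emptyset$, the index set $J\cup([K]\setminus I)$ is all of $[K]$, and \eqref{eq: P_F} as stated applies only to proper subsets, the operators being $\Pi$ and $\Pi_I$), whereas your uniform formula implicitly assumes $J\subsetneq I$; adding that one-line case closes the gap.
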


\begin{proof}
	First, suppose that \(J \neq I\). View \(\mcal{X}\) as the three-fold product \(\mcal{X}_{J} \cross \mcal{X}_{I \setminus J} \cross \mcal{X}_{-I}\), and use the shorthand notation \((x_J, x_{I-J}, x_{-I})\) to denote elements in \(\mcal{X}_{J} \cross \mcal{X}_{I \setminus J} \cross \mcal{X}_{-I}\). Fix any point \(x_{-I}^* \in \mcal{X}_{-I}\). For all \(f \in F\) and \((x_J, x_{I-J}, x_{-I}) \in \mcal{X}_{J} \cross \mcal{X}_{I \setminus J} \cross \mcal{X}_{-I}\), 
	\begin{align*}
		(P_{F_J \intersect F} f)(x_J, x_{I-J}, x_{-I}) &= \int_{\mcal{X}_J \cross \mcal{X}_{-I}} f(x_J', x_{I-J}, x_{-I}') \, \pi(dx_J', dx_{-I}'|x_{I-J}) \\
		&= \int_{\mcal{X}_J} f(x_J', x_{I-J}, x_{-I}^*) \, \pi(dx_J'|x_{I-J}) \\
		&= \int_{\mcal{X}_J} (\Phi^{-1}f)(x_J', x_{I-J}) \, \pi(dx_J'|x_{I-J}) \\
		&= (P_{F_J^{(I)}}\Phi^{-1}f)(x_J, x_{I - J}) \\
		&= (\Phi P_{F_J^{(I)}}\Phi^{-1}f)(x_J, x_{I - J}, x_{-I}).
	\end{align*}
	
	Now, suppose that \(J = I\). Then, \(P_{F_J \intersect F} = \Pi\) and \(P_{F_J^{(I)}} = \Pi_I\), and a similar argument on \(\mcal{X}_I \cross \mcal{X}_{-I}\) yields 
	\begin{equation*}
		\Pi f \equiv \int_{\mcal{X}_I} f(x_I, x_{-I}^*) \, \pi(dx_I) = \int_{\mcal{X}_I} (\Phi^{-1}f)(x_I) \, \pi(dx_I) \equiv \Phi \Pi_I \Phi^{-1} f
	\end{equation*}
	for all \(f \in F\). This completes the proof.
\end{proof}

For concreteness, suppose for the moment that \(K = 4\) and \(I = \{1, 2, 3\}\), and consider the cycle \(P_{E_1 \intersect E_4} P_{E_2 \intersect E_4} P_{E_3 \intersect E_4}\). From a sampling perspective, this cycle of Gibbs steps yields a sampling algorithm given by Algorithm~\ref{alg: Example of a Gibbs sampler that always updates the fourth coordinate} wherein the value of the fourth coordinate is never used to update a coordinate. Hence, the first three coordinates can be viewed as being updated by the collapsed Gibbs sampler \(P_{E_1}^{(I)}P_{E_2}^{(I)}P_{E_3}^{(I)}\) whose sampling algorithm is given by Algorithm~\ref{alg: Collapsed Gibbs sampler targeting pi_(1,2,3)}.

\begin{algorithm}[H]
	\caption{A sampler that updates the fourth coordinate in every step} 
	\label{alg: Example of a Gibbs sampler that always updates the fourth coordinate}
	\begin{algorithmic}[1]
		\setcounter{ALG@line}{-1}
		\State {\it Input:} Current value \(X^{(n)} = (x_1^{(n)}, x_2^{(n)}, x_3^{(n)}, x_4^{(n)})\).
		\State Draw \((X^{(n+1)}_1, X^{(n+1)}_4) \sim \pi(\cdot, \cdot |x_2^{(n)}, x_{3}^{(n)})\), and call the observed value \((x_{1}^{(n+1)}, x_4^{(n+1)})\).
		\State Draw \((X^{(n+1)}_2, X^{(n+1)}_4) \sim \pi(\cdot, \cdot |x_1^{(n+1)}, x_{3}^{(n)})\), and call the observed value \((x_{2}^{(n+1)}, x_4^{(n+1)})\).
		\State Draw \((X^{(n+1)}_3, X^{(n+1)}_4) \sim \pi(\cdot, \cdot |x_1^{(n+1)}, x_{2}^{(n+1)})\), and call the observed value \((x_{3}^{(n+1)}, x_4^{(n+1)})\).
		\State Set \(n = n + 1\).
	\end{algorithmic}
\end{algorithm}

\begin{algorithm}[H]
	\caption{Collapsed Gibbs sampler targeting \(\pi_{\{1,2,3\}}\)} 
	\label{alg: Collapsed Gibbs sampler targeting pi_(1,2,3)}
	\begin{algorithmic}[1]
		\setcounter{ALG@line}{-1}
		\State {\it Input:} Current value \(X^{(n)} = (x_1^{(n)}, x_2^{(n)}, x_3^{(n)})\).
		
		\State Draw \(X^{(n+1)}_1 \sim \pi(\cdot|x_2^{(n)}, x_{3}^{(n)})\), and call the observed value \(x_{1}^{(n+1)}\).
		
		\State Draw \(X^{(n+1)}_2 \sim \pi(\cdot|x_1^{(n+1)}, x_{3}^{(n)})\), and call the observed value \(x_{2}^{(n+1)}\).
		
		\State Draw \(X^{(n+1)}_3 \sim \pi(\cdot|x_1^{(n+1)}, x_{2}^{(n+1)})\), and call the observed value \(x_{3}^{(n+1)}\).
		
		\State Set \(n = n + 1\).
	\end{algorithmic}
\end{algorithm}

From a spectral perspective, the two samplers are indeed the same as we now show. To every cycle and mixture of collapsed Gibbs steps, there exists a corresponding cycle and mixture of Gibbs steps that is spectrally equivalent in the following sense.

\begin{prop}
	\label{prop: spectral connection between collapsed Gibbs and Gibbs}
	Let \(J_1, \ldots, J_g\) be nonempty proper subsets of \(I\) satisfying \(\bigcup_{d = 1}^g J_d = I\). Let \(F = \bigcap_{i \in [K] \setminus I} E_i\). For \(d \in [g]\), let \(F_d = \bigcap_{j \in J_d} E_j\) and \(F_d^{(I)} = \bigcap_{j \in J_d} E_{j}^{(I)}\). Then,
	\begin{equation}
		\label{eq: spectrum of cycle of collapsed Gibbs steps coincides with spectrum of a cycle of Gibbs steps}
		\sigma\paren*{P_{F_1 \intersect F} \cdots P_{F_g \intersect F} - \Pi} = \sigma\paren*{P_{F_1^{(I)}} \cdots P_{F_g^{(I)}} - \Pi_I},
	\end{equation}
	and, for all \((v_1, \ldots, v_g) \in \mscr{S}_g\),
	\begin{equation}
		\label{eq: spectrum of mixture of collapsed Gibbs steps coincides with spectrum of a mixture of Gibbs steps}
		\sigma\paren*{\sum_{d = 1}^g v_d P_{F_d \intersect F} - \Pi} = \sigma\paren*{\sum_{d = 1}^g v_d P_{F_d^{(I)}} - \Pi_I}.
	\end{equation}
\end{prop}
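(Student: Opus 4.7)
The plan is to exploit the orthogonal decomposition $L^2(\pi) = F \oplus F^\perp$ to put the operators on the left-hand sides of \eqref{eq: spectrum of cycle of collapsed Gibbs steps coincides with spectrum of a cycle of Gibbs steps} and \eqref{eq: spectrum of mixture of collapsed Gibbs steps coincides with spectrum of a mixture of Gibbs steps} in block-diagonal form, and then to use Proposition~\ref{prop: similarity between collapsed Gibbs steps and Gibbs steps with restricted domain} to identify the nontrivial block as similar (via the isometric isomorphism $\Phi$ from Proposition~\ref{prop: L^2(pi_I) and bigcap_-I E_i are isometrically isomorphic}) to the collapsed operator on $L^2(\pi_I)$.

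First, I would record three simple observations. Because $F_d \intersect F \subseteq F$ for each $d$, the projection $P_{F_d \intersect F}$ has range in $F$ and vanishes on $F^\perp$; since $E \subseteq F$ and $\mathbf{1} \in F$, the same two properties hold for $\Pi$. Hence, letting $Q$ stand for either the cycle $P_{F_1 \intersect F} \cdots P_{F_g \intersect F}$ or the mixture $\sum_{d = 1}^g v_d P_{F_d \intersect F}$, both $Q$ and $\Pi$ leave $F$ invariant and annihilate $F^\perp$. This gives the block-diagonal decomposition
\begin{equation*}
    Q - \Pi \;=\; (Q - \Pi)|_F \,\oplus\, 0_{F^\perp}
\end{equation*}
on $L^2(\pi) = F \oplus F^\perp$, and consequently $\sigma(Q - \Pi) = \sigma((Q - \Pi)|_F) \union \sigma(0_{F^\perp})$, where $\sigma(0_{F^\perp}) = \{0\}$ when $F^\perp \neq \{0\}$ and is empty otherwise.

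Next, I would identify $(Q - \Pi)|_F$ using Proposition~\ref{prop: similarity between collapsed Gibbs steps and Gibbs steps with restricted domain}. Since each $P_{F_d \intersect F}$ maps $F$ into $F_d \intersect F \subseteq F$, the restriction to $F$ of the cycle (respectively, of the convex combination) equals the cycle (respectively, convex combination) of the individual restrictions. Applying Proposition~\ref{prop: similarity between collapsed Gibbs steps and Gibbs steps with restricted domain} with $J = J_d$ for each $d$ and with $J = I$ for $\Pi$ then yields
\begin{equation*}
    (Q - \Pi)|_F \;=\; \Phi\, \bigl(\widetilde{Q} - \Pi_I\bigr)\, \Phi^{-1},
\end{equation*}
where $\widetilde{Q}$ denotes $P_{F_1^{(I)}} \cdots P_{F_g^{(I)}}$ in the cycle case and $\sum_{d = 1}^g v_d P_{F_d^{(I)}}$ in the mixture case. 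Since $\Phi$ is an isometric isomorphism, similar operators share the same spectrum, so $\sigma((Q - \Pi)|_F) = \sigma(\widetilde{Q} - \Pi_I)$.

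To finish, I would note that $0 \in \sigma(\widetilde{Q} - \Pi_I)$, because $\widetilde{Q}$ and $\Pi_I$ both fix the constant function on $\mcal{X}_I$, so the constants lie in the kernel of $\widetilde{Q} - \Pi_I$. Combining this with the block-diagonal computation absorbs any extra $\{0\}$ arising from $F^\perp$, giving $\sigma(Q - \Pi) = \sigma(\widetilde{Q} - \Pi_I)$, which is exactly \eqref{eq: spectrum of cycle of collapsed Gibbs steps coincides with spectrum of a cycle of Gibbs steps} and \eqref{eq: spectrum of mixture of collapsed Gibbs steps coincides with spectrum of a mixture of Gibbs steps}. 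The principal subtlety will be verifying the block structure carefully: I must confirm that each $P_{F_d \intersect F}$ truly annihilates $F^\perp$ (this uses $F_d \intersect F \subseteq F$ crucially) and that the restriction of a composition agrees with the composition of restrictions, since these are the only steps in which the $L^2(\pi)$-geometry does nontrivial work.
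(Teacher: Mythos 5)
Your argument is correct, and its overall skeleton coincides with the paper's: restrict to \(F\), use Proposition~\ref{prop: similarity between collapsed Gibbs steps and Gibbs steps with restricted domain} (including the \(J = I\) case for \(\Pi\)) to identify the restriction with \(\Phi(\widetilde{Q} - \Pi_I)\Phi^{-1}\), and absorb the leftover \(\{0\}\) by noting that \(\widetilde{Q} - \Pi_I\) kills constants. The one place you genuinely diverge is the first step: the paper obtains \(\sigma(Q - \Pi) = \sigma((Q-\Pi)|_F) \union \{0\}\) by citing Lemma~\ref{lem: Rosenthal/Rosenthal Proposition 5.2} (Rosenthal and Rosenthal's Proposition~5.2), which needs only that the \emph{range} of \(Q - \Pi\) lies in the proper closed subspace \(F\); you instead prove the identity from scratch via the decomposition \(L^2(\pi) = F \oplus F^\perp\), using the stronger structural fact that every factor \(P_{F_d \intersect F}\) (and \(\Pi\)) also annihilates \(F^\perp\), so that \(Q - \Pi = (Q-\Pi)|_F \oplus 0_{F^\perp}\) and the spectrum of a direct sum is the union of the spectra. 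Your route is self-contained and even handles the degenerate case \(F^\perp = \{0\}\) explicitly, which the citation formally excludes since Lemma~\ref{lem: Rosenthal/Rosenthal Proposition 5.2} assumes \(M\) proper; the paper's route is shorter by citation and applies in situations where one only controls the range and cannot claim the operator vanishes on \(F^\perp\) (here both hypotheses hold, so either works). Your verifications of the two delicate points — that \(F^\perp \subseteq (F_d \intersect F)^\perp = \ker P_{F_d \intersect F}\) and that the restriction of the product (or convex combination) to \(F\) equals the product (or convex combination) of the restrictions because each factor maps \(F\) into \(F\) — are exactly the facts needed, so there is no gap.
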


\begin{rem}
	\label{rem: same rate in TVD}
	Let \((X^{(n)})_{n \geq 0}\), where \(X^{(n)} = (X_I^{(n)}, X_{-I}^{(n)})\) for \(n \geq 0\), denote the Markov chain with operator \(P_{F_1 \intersect F} \cdots P_{F_g \intersect F}\) (resp. \(\sum_{d = 1}^g v_d P_{F_d \intersect F}\)) as given in Proposition~\ref{prop: spectral connection between collapsed Gibbs and Gibbs}. The process \((X_I^{(n)})_{n \geq 0}\) is a Markov chain with operator \(P_{F_1^{(I)}} \cdots P_{F_g^{(I)}}\) (resp. \(\sum_{d = 1}^g v_d P_{F_d}^{(I)}\)). In addition to sharing the same spectrum in the sense of Proposition~\ref{prop: spectral connection between collapsed Gibbs and Gibbs}, these two chains also share the same convergence behavior. By Corollary~2 of \cite{Roberts:Rosenthal:2001}, both chains converge to stationarity at the same rate in total variation distance since the chain \((X_I^{(n)})_{n \geq 0}\) is a ``homogeneously functionally de-initializing chain'' for \((X^{(n)})_{n \geq 0}\).
\end{rem}

To prove Proposition~\ref{prop: spectral connection between collapsed Gibbs and Gibbs}, we use the following result from \cite{Rosenthal:Rosenthal:2015} to relate certain Gibbs steps to their restrictions to \(F = \bigcap_{i \in [K] \setminus I} E_i\). Proposition~\ref{prop: similarity between collapsed Gibbs steps and Gibbs steps with restricted domain} then provides the connection between the restrictions and collapsed Gibbs steps.

\begin{lem}[{\cite{Rosenthal:Rosenthal:2015}, Proposition~5.2}]
	\label{lem: Rosenthal/Rosenthal Proposition 5.2}
	Let \(T\) be a bounded linear operator on a complex Hilbert space \(H\). Suppose \(M\) is a proper closed subspace of \(H\) which contains the range of \(T\). Then,
	\begin{equation*}
		\sigma(T) = \sigma(T|_M) \union \{0\}.
	\end{equation*}
\end{lem}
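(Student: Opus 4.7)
Both equalities relate the spectrum of an operator on $L^2(\pi)$ built from the projections $P_{F_d \cap F}$ to the spectrum of its collapsed analogue on $L^2(\pi_I)$, so the plan is to funnel the $L^2(\pi)$-operator onto the subspace $F = \bigcap_{i \in [K]\setminus I} E_i$ via Lemma~\ref{lem: Rosenthal/Rosenthal Proposition 5.2}, then transport the restriction to $L^2(\pi_I)$ using the isometric isomorphism $\Phi$ of Proposition~\ref{prop: L^2(pi_I) and bigcap_-I E_i are isometrically isomorphic} in the form supplied by Proposition~\ref{prop: similarity between collapsed Gibbs steps and Gibbs steps with restricted domain}.

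\textbf{Step 1 (range containment).} For the cycle, each $P_{F_d \cap F}$ has range in $F_d \cap F \subseteq F$, so the product has range in $F$. Since $\Pi$ has range $E \subseteq F$, the operator $T := P_{F_1 \cap F} \cdots P_{F_g \cap F} - \Pi$ sends $L^2(\pi)$ into the proper closed subspace $F$, and Lemma~\ref{lem: Rosenthal/Rosenthal Proposition 5.2} yields $\sigma(T) = \sigma(T|_F) \cup \{0\}$. The same range argument and spectral reduction apply verbatim to the mixture $\sum_{d=1}^g v_d P_{F_d \cap F} - \Pi$.

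\textbf{Step 2 (similarity via $\Phi$).} Because each $P_{F_d \cap F}$ maps $F$ into $F$, the restriction $T|_F$ factors as $(P_{F_1 \cap F})|_F \cdots (P_{F_g \cap F})|_F - \Pi|_F$. Proposition~\ref{prop: similarity between collapsed Gibbs steps and Gibbs steps with restricted domain} identifies each factor as $\Phi P_{F_d^{(I)}} \Phi^{-1}$, and its $J = I$ case (where $F_I \cap F = E$ and $P_E = \Pi$) gives $\Pi|_F = \Phi \Pi_I \Phi^{-1}$. The intervening $\Phi^{-1}\Phi$ pairs telescope and I obtain
\begin{equation*}
  T|_F = \Phi\,\bigl(P_{F_1^{(I)}} \cdots P_{F_g^{(I)}} - \Pi_I\bigr)\,\Phi^{-1}.
\end{equation*}
Similar operators share the same spectrum, so $\sigma(T|_F) = \sigma(P_{F_1^{(I)}} \cdots P_{F_g^{(I)}} - \Pi_I)$. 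The mixture case follows identically after pulling the scalar weights $v_d$ through $\Phi$ by linearity.

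\textbf{Step 3 (removing the spurious $\{0\}$) and main obstacle.} To drop the $\cup\{0\}$ from Step 1, it suffices to show $0$ already belongs to the collapsed spectrum. The constant function $\mathbf{1}$ on $\mcal{X}_I$ lies in every $F_d^{(I)}$, hence is fixed by each $P_{F_d^{(I)}}$ and therefore by their product (or convex combination); it is also fixed by $\Pi_I$. Thus $\mathbf{1} \in \ker(P_{F_1^{(I)}} \cdots P_{F_g^{(I)}} - \Pi_I)$, and likewise for the collapsed mixture, so $0$ is genuinely in the spectrum and \eqref{eq: spectrum of cycle of collapsed Gibbs steps coincides with spectrum of a cycle of Gibbs steps} and \eqref{eq: spectrum of mixture of collapsed Gibbs steps coincides with spectrum of a mixture of Gibbs steps} both follow. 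The one bookkeeping point to verify is that $F$ is a \emph{proper} closed subspace of $L^2(\pi)$, as required by Lemma~\ref{lem: Rosenthal/Rosenthal Proposition 5.2}; this holds whenever $\pi$ is non-degenerate on the coordinates in $[K] \setminus I$, and in the contrary edge case $F = L^2(\pi)$ the similarity of Step 2 applied directly to the full operator already yields the claimed spectral equality without invoking Step 1.
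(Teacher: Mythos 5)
Your proposal does not address the statement you were asked to prove. The statement is Lemma~\ref{lem: Rosenthal/Rosenthal Proposition 5.2} itself: a general operator-theoretic fact about an arbitrary bounded operator \(T\) on a complex Hilbert space \(H\) whose range lies in a proper closed subspace \(M\), namely \(\sigma(T) = \sigma(T|_M) \cup \{0\}\). What you have written is instead a proof of Proposition~\ref{prop: spectral connection between collapsed Gibbs and Gibbs} (the identification of the spectra of cycles and mixtures of collapsed Gibbs steps with those of their counterparts on \(L^2(\pi)\)), and in Step~1 you explicitly \emph{invoke} Lemma~\ref{lem: Rosenthal/Rosenthal Proposition 5.2} as a black box. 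So the object to be proven appears in your argument only as a hypothesis, and nothing in the proposal establishes it. As a proof of the stated lemma, the proposal is vacuous.

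A correct proof of the actual statement runs along the following lines. Write \(H = M \oplus M^\perp\). Since \(\operatorname{ran}(T) \subseteq M\), the operator \(T\) is not surjective (as \(M\) is proper), so \(0 \in \sigma(T)\). Now fix \(\lambda \neq 0\). If \(T - \lambda\,\Id\) is invertible on \(H\), then for any \(m \in M\) the element \(h = (T - \lambda\,\Id)^{-1}m\) satisfies \(\lambda h = Th - m \in M\), hence \(h \in M\); since \(T\) maps \(M\) into \(M\), it follows that \((T - \lambda\,\Id)|_M = T|_M - \lambda\,\Id_M\) is a bounded bijection of \(M\), hence invertible. Conversely, if \(T|_M - \lambda\,\Id_M\) is invertible on \(M\), then writing \(h = m + m^\perp\) with \(m \in M\), \(m^\perp \in M^\perp\), the equation \((T - \lambda\,\Id)h = k\) reads \(T m + T m^\perp - \lambda m - \lambda m^\perp = k\); projecting onto \(M^\perp\) determines \(m^\perp = -\lambda^{-1} P_{M^\perp}k\) uniquely, and then the \(M\)-component is determined uniquely by invertibility of \(T|_M - \lambda\,\Id_M\), with the resulting inverse bounded. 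Hence \(\sigma(T)\setminus\{0\} = \sigma(T|_M)\setminus\{0\}\), and combining with \(0 \in \sigma(T)\) gives \(\sigma(T) = \sigma(T|_M) \cup \{0\}\). (The paper itself does not reprove this lemma; it cites Proposition~5.2 of \cite{Rosenthal:Rosenthal:2015}.)
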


\begin{rem}
	\label{rem: equivalent definition of spectral gap}
	For a Markov operator \(Q \in B(L^2(\pi))\) satisfying \(\Pi Q = \Pi\), some authors say that \(Q\) has a spectral gap when \(r(Q|_{E^\perp}) < 1\) instead of requiring that \(r(Q - \Pi) < 1\). See, e.g., \cite{Roberts:Rosenthal:1997} and \cite{Qin:Hobert:Khare:2019}. Lemma~\ref{lem: Rosenthal/Rosenthal Proposition 5.2} provides a straightforward way to see that these two definitions are equivalent. Since \(P_{E^\perp}(Q - \Pi) = (\Id - \Pi)(Q - \Pi) = Q - \Pi\), the range of \(Q - \Pi\) is contained in \(E^\perp\), so \(\sigma(Q - \Pi) = \sigma(Q|_{E^\perp}) \union \{0\}\). Hence, \(r(Q - \Pi) = r(Q|_{E^\perp})\). 
\end{rem}

\begin{proof}[Proof of Proposition~\ref{prop: spectral connection between collapsed Gibbs and Gibbs}]
	Let \(\Phi \colon L^2(\pi_I) \to F\) be the isometric isomorphism given in Proposition~\ref{prop: L^2(pi_I) and bigcap_-I E_i are isometrically isomorphic}. From Lemma~\ref{lem: Rosenthal/Rosenthal Proposition 5.2},
	\begin{equation*}
		\sigma(P_{F_1 \intersect F} \cdots P_{F_g \intersect F} - \Pi) = \sigma\paren*{(P_{F_1 \intersect F} \cdots P_{F_g \intersect F} - \Pi)|_F} \union \{0\},
	\end{equation*}
	and, from Proposition~\ref{prop: similarity between collapsed Gibbs steps and Gibbs steps with restricted domain},
	\begin{align*}
		(P_{F_1 \intersect F} \cdots P_{F_g \intersect F} - \Pi)|_F &= (\Phi P_{F_1^{(I)}}\Phi^{-1}) \cdots (\Phi P_{F_g^{(I)}}\Phi^{-1}) - \Phi\Pi_I \Phi^{-1} \\
		&= \Phi(P_{F_1^{(I)}} \cdots P_{F_g^{(I)}} - \Pi_I)\Phi^{-1}.
	\end{align*}
	For any \(\lambda \in \mbb{C}\), the operator \(Q_\lambda \coloneq (P_{F_1 \intersect F} \cdots P_{F_g \intersect F} - \Pi)|_F - \lambda \, \Id_{F}\) is invertible if and only if \((P_{F_1^{(I)}} \cdots P_{F_g^{(I)}} - \Pi_I) - \lambda \, \Id_{L^2(\pi_I)} = \Phi^{-1}Q_\lambda \Phi\) is invertible. Hence,
	\begin{equation*}
		\sigma\paren*{(P_{F_1 \intersect F} \cdots P_{F_g \intersect F} - \Pi)|_F} = \sigma\paren*{P_{F_1^{(I)}} \cdots P_{F_g^{(I)}} - \Pi_I}.
	\end{equation*}
	Since the operator \(P_{F_1^{(I)}} \cdots P_{F_g^{(I)}} - \Pi_I\) maps every constant function in \(L^2(\pi_I)\) to 0, it fails to be injective, so zero belongs in \(\sigma\paren{P_{F_1^{(I)}} \cdots P_{F_g^{(I)}} - \Pi_I}\). This establishes \eqref{eq: spectrum of cycle of collapsed Gibbs steps coincides with spectrum of a cycle of Gibbs steps}. A similar argument obtains \eqref{eq: spectrum of mixture of collapsed Gibbs steps coincides with spectrum of a mixture of Gibbs steps} since 
	\begin{equation*}
		\left.\paren*{\sum_{d = 1}^g v_d P_{F_d \intersect F} - \Pi}\right|_F = \Phi\paren*{\sum_{d = 1}^g v_d P_{F_d^{(I)}} - \Pi_I}\Phi^{-1}
	\end{equation*}
	and since \(\sum_{d = 1}^g v_d P_{F_d^{(I)}} - \Pi_I\) is not injective.
\end{proof}

Proposition~\ref{prop: spectral connection between collapsed Gibbs and Gibbs} implies that a cycle (resp. mixture) of collapsed Gibbs steps has a spectral gap if and only if a corresponding cycle (resp. mixture) of Gibbs steps has a spectral gap, in which case the gaps are equal. Recall from Theorem~\ref{thm: inheritance of a spectral gap from the full Gibbs sampler} that cycles and mixtures of Gibbs steps inherit a spectral gap from a full Gibbs sampler; consequently, this inheritance extends to cycles and mixtures of collapsed Gibbs steps.

\begin{cor}
	\label{cor: inheritance of a spectral gap for cycles and mixtures of collapsed Gibbs steps}
	If a full Gibbs sampler has a spectral gap, then every cycle and mixture of collapsed Gibbs steps has a spectral gap.
\end{cor}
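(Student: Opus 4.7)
The plan is to combine Proposition~\ref{prop: spectral connection between collapsed Gibbs and Gibbs} with Theorem~\ref{thm: inheritance of a spectral gap from the full Gibbs sampler}. Let \(P_{F_1^{(I)}} \cdots P_{F_g^{(I)}}\) (resp.\ \(\sum_{d=1}^g v_d P_{F_d^{(I)}}\)) be an arbitrary cycle (resp.\ mixture) of collapsed Gibbs steps, so that \(J_1, \ldots, J_g\) are nonempty proper subsets of \(I\) with \(\bigcup_{d=1}^g J_d = I\) and \(F_d^{(I)} = \bigcap_{j \in J_d} E_j^{(I)}\). Set \(F = \bigcap_{i \in [K] \setminus I} E_i\) and \(F_d = \bigcap_{j \in J_d} E_j\), so the hypotheses of Proposition~\ref{prop: spectral connection between collapsed Gibbs and Gibbs} are met. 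That proposition gives
\begin{equation*}
	\sigma\paren{P_{F_1^{(I)}} \cdots P_{F_g^{(I)}} - \Pi_I} = \sigma\paren{P_{F_1 \intersect F}\cdots P_{F_g \intersect F} - \Pi}
\end{equation*}
and the analogous equality for mixtures, so in particular the spectral radii on the left and right coincide. It therefore suffices to show that the right-hand operators have a spectral gap in \(L^2(\pi)\).

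Next I would argue that \(P_{F_1 \intersect F}\cdots P_{F_g \intersect F}\) and \(\sum_{d=1}^g v_d P_{F_d \intersect F}\) are themselves a cycle and a mixture of Gibbs steps on \(L^2(\pi)\). For each \(d \in [g]\), set \(I_d = J_d \union ([K] \setminus I)\); then \(F_d \intersect F = \bigcap_{i \in I_d} E_i\), so \(P_{F_d \intersect F}\) is a Gibbs step provided \(I_d\) is a nonempty proper subset of \([K]\). Nonemptiness is immediate (since \([K] \setminus I\) is nonempty, as \(I\) is a proper subset of \([K]\)), and \(I_d\) is proper because \(J_d\) is a proper subset of \(I\), so \(I \setminus J_d\) is nonempty and disjoint from \(I_d\). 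Finally, \(\bigcup_{d=1}^g I_d = ([K] \setminus I) \union \bigcup_{d=1}^g J_d = ([K] \setminus I) \union I = [K]\), confirming that the combination really is a cycle (resp.\ mixture) of Gibbs steps in the sense of Section~\ref{subsection: Cycles and mixtures of Gibbs steps}.

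With that in hand, Theorem~\ref{thm: inheritance of a spectral gap from the full Gibbs sampler} applies: because a full Gibbs sampler has a spectral gap, every cycle and mixture of Gibbs steps on \(L^2(\pi)\) has a spectral gap, so
\begin{equation*}
	r\paren{P_{F_1 \intersect F}\cdots P_{F_g \intersect F} - \Pi} < 1 \quad \text{and} \quad r\paren*{\sum_{d=1}^g v_d P_{F_d \intersect F} - \Pi} < 1.
\end{equation*}
Transporting these inequalities back through the spectral identities supplied by Proposition~\ref{prop: spectral connection between collapsed Gibbs and Gibbs} yields \(r(P_{F_1^{(I)}} \cdots P_{F_g^{(I)}} - \Pi_I) < 1\) and \(r(\sum_{d=1}^g v_d P_{F_d^{(I)}} - \Pi_I) < 1\), which is exactly what it means for the cycle or mixture of collapsed Gibbs steps to have a spectral gap.

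The argument is essentially bookkeeping, so there is no substantial obstacle beyond correctly reading off what Proposition~\ref{prop: spectral connection between collapsed Gibbs and Gibbs} buys us; the one point that requires a brief check is verifying that \(J_d \cup ([K] \setminus I)\) is a \emph{proper} subset of \([K]\), which uses precisely the hypothesis that each \(J_d\) is a proper subset of \(I\).
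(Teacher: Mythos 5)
Your proof is correct and follows essentially the same route as the paper, which also deduces the corollary by combining Proposition~\ref{prop: spectral connection between collapsed Gibbs and Gibbs} with Theorem~\ref{thm: inheritance of a spectral gap from the full Gibbs sampler}. Your explicit check that each \(I_d = J_d \union ([K]\setminus I)\) is a nonempty proper subset of \([K]\) with \(\bigcup_d I_d = [K]\) is exactly the bookkeeping the paper leaves implicit, and it is carried out correctly.
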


One may opt to analyze a cycle or mixture of collapsed Gibbs steps instead of a cycle or mixture of Gibbs steps if the conditional distributions associated with a marginal distribution are easier to handle. As Proposition~\ref{prop: spectral connection between collapsed Gibbs and Gibbs} shows, the analysis of a collapsed sampler that targets a marginal distribution yields information about a related sampler that targets the entire distribution \(\pi\). While the theoretical usefulness of the latter sampler is demonstrated in establishing Corollary~\ref{cor: inheritance of a spectral gap for cycles and mixtures of collapsed Gibbs steps}, such a sampler is nonstandard in practice due to its often infeasible implementation. In an algorithm like Algorithm~\ref{alg: Example of a Gibbs sampler that always updates the fourth coordinate}, it is seldom possible from a computational standpoint that a certain coordinate can be updated jointly in every step of the algorithm. In view of this issue, we identify a setting in Section~\ref{subsection: Commuting pairs of Gibbs steps} where a collapsed deterministic-scan Gibbs sampler is intimately related with a blocked deterministic-scan Gibbs sampler.

\subsection{Commuting pairs of Gibbs steps}
\label{subsection: Commuting pairs of Gibbs steps}

For this subsection, view \(\mcal{X}\) as any fixed three-fold product \(\mcal{U} \cross \mcal{V} \cross \mcal{W}\). (To be precise, take \(\mcal{U} = \mcal{X}_{I_1}\), take \(\mcal{V} = \mcal{X}_{I_2}\), and take \(\mcal{W} = \mcal{X}_{I_3}\), where \(\{I_1, I_2, I_3\}\) is some fixed partition of \([K]\), and then work exclusively on \(\mcal{U} \cross \mcal{V} \cross \mcal{W}\).) Let \((U,V,W)\) be a random element taking values in \(\mcal{U} \cross \mcal{V} \cross \mcal{W}\) with distribution \(\pi\). Define \(E_U\), \(E_V\), and \(E_W\) to be the closed subspaces of functions in \(L^2(\pi)\) that are constant with respect to the \(u\)th, \(v\)th, and \(w\)th coordinate, respectively. 

Recall that two orthogonal projections with ranges \(M_1\) and \(M_2\), say, commute if and only if  \(P_{M_1}P_{M_2} = P_{M_1 \intersect M_2}\). (See, e.g., \cite{Halmos:1951}, Section~29.) For a pair of Gibbs steps, commutativity has the following probabilistic interpretation.

\begin{prop}
	\label{prop: U cond. indep. of V given W iff P_(F_U) and P_(F_V) commute}
	The Gibbs steps \(P_{E_U}\) and \(P_{E_V}\) commute if and only if \(U\) and \(V\) are conditionally independent given \(W\).
\end{prop}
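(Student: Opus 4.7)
My plan is to exploit the standard Hilbert-space fact that two orthogonal projections $P_{M_1}$ and $P_{M_2}$ commute if and only if their product is itself an orthogonal projection, in which case the product equals $P_{M_1 \cap M_2}$. So commutativity of $P_{E_U}$ and $P_{E_V}$ is equivalent to the identity
\begin{equation*}
P_{E_U} P_{E_V} = P_{E_U \cap E_V}.
\end{equation*}
The first preparatory step is to identify each of these three projections as a conditional expectation using the formula \eqref{eq: P_F}: for $f \in L^2(\pi)$,
\begin{align*}
(P_{E_U} f)(u,v,w) &= \int_{\mcal{U}} f(u',v,w)\,\pi(du'|v,w), \\
(P_{E_V} f)(u,v,w) &= \int_{\mcal{V}} f(u,v',w)\,\pi(dv'|u,w), \\
(P_{E_U \cap E_V} f)(u,v,w) &= \int_{\mcal{U} \cross \mcal{V}} f(u',v',w)\,\pi(du',dv'|w),
\end{align*}
where the last line uses that $E_U \cap E_V$ is the subspace of functions depending only on $w$.

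For the \emph{if} direction, suppose $U$ and $V$ are conditionally independent given $W$, so that $\pi(du'|v,w) = \pi(du'|w)$, $\pi(dv'|u,w) = \pi(dv'|w)$, and $\pi(du',dv'|w) = \pi(du'|w)\pi(dv'|w)$ for $\pi$-a.e.\ $(u,v,w)$. A direct substitution into $(P_{E_U} P_{E_V} f)(u,v,w)$ turns the iterated integral into a product measure over $\mcal{U} \cross \mcal{V}$ against $\pi(du'|w)\pi(dv'|w)$, which by the factorization equals $(P_{E_U \cap E_V} f)(u,v,w)$. Thus $P_{E_U} P_{E_V} = P_{E_U \cap E_V}$, and by symmetry the same holds for $P_{E_V} P_{E_U}$, so the two projections commute.

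For the \emph{only if} direction, assume $P_{E_U} P_{E_V} = P_{E_V} P_{E_U}$. I would test this identity on functions of the form $f(u,v,w) = \mathbf{1}_A(u)$ for arbitrary $A \in \mscr{B}(\mcal{U})$. Since $f$ is already constant in $v$, $P_{E_V} f = f$, and hence
\begin{equation*}
(P_{E_U} P_{E_V} f)(u,v,w) = \pi(A|v,w).
\end{equation*}
On the other hand, $(P_{E_U} f)(u,v,w) = \pi(A|v,w)$, so
\begin{equation*}
(P_{E_V} P_{E_U} f)(u,v,w) = \int_{\mcal{V}} \pi(A|v',w)\,\pi(dv'|u,w),
\end{equation*}
which is a function of $(u,w)$ only. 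Equating the two expressions, the left-hand side depends only on $(v,w)$ and the right-hand side only on $(u,w)$, so both must be $\pi$-a.e.\ equal to a function of $w$ alone; necessarily this function is $\pi(A|w)$. Hence $\pi(A|v,w) = \pi(A|w)$ for every measurable $A$ (up to $\pi$-null sets), which is exactly the conditional independence $U \perp V \mid W$. The one subtlety I anticipate — and the part where I would be most careful — is handling the $\pi$-a.e.\ qualifications when passing from the Hilbert-space identity to the measure-theoretic statement, which can be routinely handled by choosing a countable generating $\pi$-system of $\mscr{B}(\mcal{U})$ and invoking a monotone-class argument on $A$.
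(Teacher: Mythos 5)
Your ``if'' direction is correct and is essentially the paper's own computation: substituting the factorizations \(\pi(dv'|u',w)=\pi(dv'|w)\), \(\pi(du'|v,w)=\pi(du'|w)\) into the iterated integral gives \(P_{E_U}P_{E_V}=P_{E_U\cap E_V}\), hence commutativity. The ``only if'' direction, however, has a genuine gap at the step ``the left-hand side depends only on \((v,w)\) and the right-hand side only on \((u,w)\), so both must be \(\pi\)-a.e.\ equal to a function of \(w\) alone.'' This is not a valid inference: a \(\sigma(V,W)\)-measurable function that is \(\pi\)-a.e.\ equal to a \(\sigma(U,W)\)-measurable function is only guaranteed to be measurable with respect to the intersection of the \(\pi\)-completions of these two \(\sigma\)-fields, which can be strictly larger than \(\sigma(W)\). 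For instance, if \(\pi\) gives probability one to an event on which some component of \(U\) coincides with some component of \(V\), then indicators of that shared component are simultaneously functions of \((v,w)\) and of \((u,w)\) without being functions of \(w\) --- and in that situation the two expressions you equate, \(\pi(A\mid v,w)\) and \(\int_{\mcal{V}}\pi(A\mid v',w)\,\pi(dv'\mid u,w)\), really can agree a.e.\ without either being a function of \(w\). The missing statement, that \(\sigma(U,W)\) and \(\sigma(V,W)\) intersect only in \(\sigma(W)\) modulo \(\pi\)-null sets, is an assertion of the same nature as the conditional independence you are trying to prove, so invoking it tacitly is close to circular. (The identification of the common value with \(\pi(A\mid w)\) would also need a line, e.g.\ integrating against \(\pi(dv\mid w)\), but that is minor by comparison.)

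The repair is already contained in your own preamble, and it is exactly what the paper does: commutativity of the two projections gives the stronger identity \(P_{E_U}P_{E_V}=P_{E_U\cap E_V}\), where \(P_{E_U\cap E_V}\) acts by the kernel formula you displayed, \((P_{E_U\cap E_V}f)(u,v,w)=\int_{\mcal{U}\cross\mcal{V}} f(u',v',w)\,\pi(du',dv'\mid w)\). Testing \emph{this} identity on \(f(u,v,w)=\mathbf{1}_A(u)\) yields directly \(\pi(A\mid v,w)=\pi(A\mid w)\) for every \(A\in\mscr{B}(\mcal{U})\) and \(\pi\)-a.e.\ \((v,w)\) (your closing remark about a countable generating \(\pi\)-system takes care of the null sets), which is the desired conditional independence. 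So compare \(P_{E_U}P_{E_V}f\) with \(P_{E_U\cap E_V}f\), not with \(P_{E_V}P_{E_U}f\); with that change your argument coincides with the paper's proof.
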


\begin{proof}
	First, suppose that \(U\) and \(V\) are conditionally independent given \(W\). For all \(f \in L^2(\pi)\) and all \((u,v,w) \in \mcal{U} \cross \mcal{V} \cross \mcal{W}\), conditional independence of \(U\) and \(V\) given \(W\) implies that
	\begin{align*}
		[P_{E_U}P_{E_V}f](u,v,w) &= \int_{\mcal{U}} \bracket*{\int_{\mcal{V}} f(u', v', w) \, \pi(dv'|u', w)} \, \pi(du'|v, w) \\
		&= \int_{\mcal{U}} \bracket*{\int_{\mcal{V}} f(u', v', w) \, \pi(dv'|w)} \, \pi(du'|w) \\
		&= \int_{\mcal{U} \cross \mcal{V}} f(u', v', w) \pi(du', dv'|w) \\
		&= [P_{E_U \intersect E_V}f](u,v,w).
	\end{align*}
	Hence, \(P_{E_U}P_{E_V} = P_{E_U \intersect E_V}\), so \(P_{E_U}\) commutes with \(P_{E_V}\).
	
	Conversely, suppose that \(P_{E_U}\) and \(P_{E_V}\) commute, so \(P_{E_U}P_{E_V} = P_{E_U \intersect E_V}\). Fix any \(A_U \in \mscr{B}(\mcal{U})\) and any \((u,v,w) \in \mcal{U} \cross \mcal{V} \cross \mcal{W}\). Observe that
	\begin{align*}
		&[P_{E_U}P_{E_V}\mathbf{1}_{A_U \cross \mcal{V} \cross \mcal{W}}](u,v,w) \\
		&\qquad = \int_{\mcal{U}} \bracket*{\int_{\mcal{V}} \mathbf{1}_{A_U \cross \mcal{V} \cross \mcal{W}}(u', v', w) \, \pi(dv'|u', w)} \, \pi(du'|v, w) \\
		&\qquad = \int_{A_U} \pi(du'|v, w) \\
		&\qquad = \Pr(U \in A_U|V = v, W = w)
	\end{align*}
	and
	\begin{align*}
		[P_{E_U \intersect E_V}\mathbf{1}_{A_U \cross \mcal{V} \cross \mcal{W}}](u,v,w) &= \int_{\mcal{U} \cross \mcal{V}} \mathbf{1}_{A_U \cross \mcal{V} \cross \mcal{W}}(u', v', w) \, \pi(du', dv'|w) \\
		&= \int_{A_U} \pi(du'|w) \\
		&= \Pr(U \in A_U|W = w).
	\end{align*}
	Since \(P_{E_U}P_{E_V} = P_{E_U \intersect E_V}\), it follows that
	\begin{equation*}
		\Pr(U \in A_U|V = v, W = w) = \Pr(U \in A_U|W = w)
	\end{equation*}
	for all \(A_U \in \mscr{B}(\mcal{U})\) and all \((v, w) \in \mcal{V} \cross \mcal{W}\). This completes the proof.
\end{proof}

Let \(\pi_{UW}\) and \(\pi_{VW}\) denote the marginal distributions of \(\pi\) on \(\mcal{U} \cross \mcal{W}\) and \(\mcal{V} \cross \mcal{W}\), respectively. Further, let \(\Pi_{UW} \in B(L^2(\pi_{UW}))\) and \(\Pi_{VW} \in B(L^2(\pi_{VW}))\) be given by \(\Pi_{UW}f_{UW} \equiv \pi_{UW} f_{UW}\) for \(f_{UW} \in L^2(\pi_{UW})\) and \(\Pi_{VW}f_{VW} \equiv \pi_{VW}f_{VW}\) for \(f_{VW} \in L^2(\pi_{VW})\). Define \(E_U^{(UW)}\) and \(E_W^{(UW)}\) to be the closed subspaces of functions in \(L^2(\pi_{UW})\) that are constant with respect to the \(u\)th and \(w\)th coordinate, respectively, and analogously define the closed subspaces \(E_V^{(VW)}\) and \(E_W^{(VW)}\) of \(L^2(\pi_{VW})\).

\begin{cor}
	\label{cor: spectral connection between collapsed gibbs sampler and blocked gibbs sampler when U and V are cond. indep. given W}
	If \(U\) and \(V\) are conditionally independent given \(W\), then
	\begin{align*}
		\sigma\paren*{P_{E_U} P_{E_V \intersect E_W} - \Pi} &= \sigma\paren*{P_{E_U}^{(UW)} P_{E_W}^{(UW)} - \Pi_{UW}} \intertext{and} 
		\sigma\paren*{P_{E_V} P_{E_U \intersect E_W} - \Pi} &= \sigma\paren*{P_{E_V}^{(VW)} P_{E_W}^{(VW)} - \Pi_{VW}}.
	\end{align*}
\end{cor}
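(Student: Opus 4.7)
The plan is to exploit the conditional independence assumption, via Proposition~\ref{prop: U cond. indep. of V given W iff P_(F_U) and P_(F_V) commute}, to rewrite the blocked operator $P_{E_U} P_{E_V \intersect E_W}$ in a form to which Proposition~\ref{prop: spectral connection between collapsed Gibbs and Gibbs} applies directly.

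First I would note two algebraic facts. By Proposition~\ref{prop: U cond. indep. of V given W iff P_(F_U) and P_(F_V) commute}, the hypothesis of conditional independence gives $P_{E_U} P_{E_V} = P_{E_U \intersect E_V}$. Separately, since $E_V \intersect E_W \subseteq E_V$, the projection $P_{E_V}$ acts as the identity on the range of $P_{E_V \intersect E_W}$, so $P_{E_V} P_{E_V \intersect E_W} = P_{E_V \intersect E_W}$. Multiplying the first identity on the right by $P_{E_V \intersect E_W}$ and applying the second then yields
\begin{equation*}
  P_{E_U \intersect E_V} P_{E_V \intersect E_W} \;=\; P_{E_U} P_{E_V} P_{E_V \intersect E_W} \;=\; P_{E_U} P_{E_V \intersect E_W}.
\end{equation*}

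Next I would invoke Proposition~\ref{prop: spectral connection between collapsed Gibbs and Gibbs} with the three-fold product $\mcal{U} \cross \mcal{V} \cross \mcal{W}$ playing the role of $\mcal{X}$, taking $I = \{U, W\}$ (so that $F = \bigcap_{i \in [K] \setminus I} E_i = E_V$ and $\pi_I = \pi_{UW}$), and with $J_1 = \{U\}$, $J_2 = \{W\}$. Then $F_1 \intersect F = E_U \intersect E_V$, $F_2 \intersect F = E_V \intersect E_W$, and $F_d^{(I)}$ is precisely $E_U^{(UW)}$ or $E_W^{(UW)}$. The proposition gives
\begin{equation*}
  \sigma\paren*{P_{E_U \intersect E_V} P_{E_V \intersect E_W} - \Pi} \;=\; \sigma\paren*{P_{E_U}^{(UW)} P_{E_W}^{(UW)} - \Pi_{UW}},
\end{equation*}
and combining with the operator identity from the first paragraph establishes the first equality of the corollary. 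The second equality follows by an entirely symmetric argument: commutativity gives $P_{E_V} P_{E_U} = P_{E_U \intersect E_V}$, the containment $E_U \intersect E_W \subseteq E_U$ gives $P_{E_U} P_{E_U \intersect E_W} = P_{E_U \intersect E_W}$, and Proposition~\ref{prop: spectral connection between collapsed Gibbs and Gibbs} is then applied with $I = \{V, W\}$.

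There is no real obstacle here beyond spotting the correct factorization: the only nontrivial input is the commutation identity supplied by Proposition~\ref{prop: U cond. indep. of V given W iff P_(F_U) and P_(F_V) commute}, which converts the single projection $P_{E_U}$ in the blocked operator into a product of two projections whose intersection matches the subspace $F$ needed by Proposition~\ref{prop: spectral connection between collapsed Gibbs and Gibbs}. All remaining steps are direct applications of previously established results.
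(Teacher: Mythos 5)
Your proposal is correct and follows essentially the same route as the paper's own proof: use the commutation identity $P_{E_U}P_{E_V}=P_{E_U\intersect E_V}$ from Proposition~\ref{prop: U cond. indep. of V given W iff P_(F_U) and P_(F_V) commute} together with $P_{E_V}P_{E_V\intersect E_W}=P_{E_V\intersect E_W}$ to rewrite the blocked operator as $P_{E_U\intersect E_V}P_{E_V\intersect E_W}$, then apply Proposition~\ref{prop: spectral connection between collapsed Gibbs and Gibbs} with $F=E_V$ (and symmetrically for the second identity). The only difference is that you spell out the absorption step $P_{E_V}P_{E_V\intersect E_W}=P_{E_V\intersect E_W}$, which the paper uses implicitly.
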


\begin{proof}
	By Proposition~\ref{prop: U cond. indep. of V given W iff P_(F_U) and P_(F_V) commute}, the operators \(P_{E_U}\) and \(P_{E_V}\) commute, so \(P_{E_U} P_{E_V} = P_{E_U \intersect E_V}\). Hence, 
	\begin{equation*}
		P_{E_U}P_{E_V \intersect E_W} = P_{E_U} P_{E_V} P_{E_V \intersect E_W} = P_{E_U \intersect E_V} P_{E_V \intersect E_W},
	\end{equation*}
	and, similarly, \(P_{E_V} P_{E_U \intersect E_W} = P_{E_V \intersect E_U} P_{E_U \intersect E_W}\). The claim now follows from Proposition~\ref{prop: spectral connection between collapsed Gibbs and Gibbs}.
\end{proof}

Suppose that \(U\) and \(V\) are conditionally independent given \(W\). Recall that the algorithm of the blocked Gibbs sampler with operator \(P_{E_U}P_{E_V \intersect E_W}\) updates a current iteration \((U_n, V_n, W_n) = (u_n, v_n, w_n)\) by first updating \(u_n\) with a draw \(u_{n+1}\) from \(\pi(\cdot|w_n)\) and then updating \((v_n, w_n)\) with a draw \((v_{n+1}, w_{n+1})\) from \(\pi(\cdot, \cdot|u_{n+1})\). A practical consequence of Corollary~\ref{cor: spectral connection between collapsed gibbs sampler and blocked gibbs sampler when U and V are cond. indep. given W} is that no information is lost by ignoring the \(v\)th coordinate when determining whether the blocked Gibbs sampler has a spectral gap. It suffices to focus one's attention on the collapsed sampler that updates a current iteration \((U_n, W_n) = (u_n, w_n)\) by first updating \(u_n\) with a draw \(u_{n+1}\) from \(\pi(\cdot|w_n)\) and then updating \(w_n\) with a draw \(w_{n+1}\) from \(\pi(\cdot|u_{n+1})\). Analogously, one can safely disregard the \(u\)th coordinate when analyzing the blocked Gibbs sampler with operator \(P_{E_V}P_{E_U \intersect E_W}\). We provide an example in Section~\ref{subsection: Example} that illustrates (among other matters) how restricting attention to the collapsed chain can simplify analysis.

\subsection{Marginal chains of two-component deterministic-scan Gibbs samplers}
\label{subsection: Marginal chains of two-component Gibbs samplers}

It is a well-known result from \cite{Liu:Wong:Kong:1994} that a two-component deterministic-scan Gibbs sampler has a spectral gap if and only if its marginal chain has a spectral gap, in which case the gaps are equal. Through different means, \cite{Rosenthal:Rosenthal:2015} showed that the spectrum of a two-component deterministic-scan Gibbs sampler coincides with the spectrum of the operator of its marginal chain with a possible exception at zero. However, this connection regarding the spectrum of a Gibbs sampler does not immediately imply the aforementioned result from \cite{Liu:Wong:Kong:1994}: a spectral gap concerns the spectrum of a Markov operator minus \(\Pi\), not just the spectrum of a Markov operator. To fill in this detail in the operator-theoretic manner of \cite{Rosenthal:Rosenthal:2015}, we establish a spectral description akin to Proposition~\ref{prop: spectral connection between collapsed Gibbs and Gibbs} that proves in a unified fashion that the spectral gaps for a two-component deterministic-scan Gibbs sampler, its adjoint, and their marginal chains all agree whenever they exist. 

View \(\mcal{X}\) as any fixed two-fold product \(\mcal{Y} \cross \mcal{Z}\). (To be precise, take \(\mcal{Y} = \mcal{X}_I\) and \(\mcal{Z} = \mcal{X}_{-I}\), where \(I\) is some fixed nonempty proper subset of \([K]\).) Define \(E_Y\) and \(E_Z\) to be the closed subspaces of functions in \(L^2(\pi)\) that are constant with respect to the \(y\)th and \(z\)th coordinate, respectively. Further, let \(\pi_Y\) and \(\pi_Z\) denote the marginal distributions of \(\pi\) on \(\mcal{Y}\) and \(\mcal{Z}\), respectively. Define \(\Pi_{Y} \in B(L^2(\pi_Y))\) and \(\Pi_Z \in B(L^2(\pi_Z))\) by \(\Pi_Y f_Y \equiv \pi_Y f_Y\) for \(f_Y \in L^2(\pi_Y)\) and \(\Pi_Z f_Z \equiv \pi_Z f_Z\) for \(f_Z \in L^2(\pi_Z)\).

The two-component deterministic-scan Gibbs sampler with operator \(P_{E_Y}P_{E_Z}\) defines a kernel that only depends on the \(z\)th coordinate of the current iteration. Hence, there is a corresponding marginal Markov chain on \(\mcal{Z}\) with transition kernel given by
\begin{equation}
	\label{eq: kernel of the marginal chain (Q_Z)}
	Q_Z(z, A_Z) = (P_{E_Y}P_{E_Z}\mathbf{1}_{\mcal{Y} \cross A_Z})(y^*, z) = \int_{\mcal{Y}} \bracket*{ \int_{A_Z}  \pi(dz'|y')} \pi(dy'|z) 
\end{equation}
for all \((z, A_Z) \in \mcal{Z} \cross \mscr{B}(\mcal{Z})\), where \(y^* \in \mcal{Y}\) is an inconsequential fixed element. The kernel given by \eqref{eq: kernel of the marginal chain (Q_Z)} is reversible with respect to \(\pi_Z\) (\cite{Liu:Wong:Kong:1994}, Lemma~3.1); consequently, its corresponding Markov operator \(Q_Z \in B(L^2(\pi_Z))\) is self-adjoint. There is an analogous marginal chain on \(\mcal{Y}\) associated with the Gibbs sampler with operator \(P_{E_Z}P_{E_Y}\). Its transition kernel is given by
\begin{equation}
	\label{eq: kernel of the marginal chain (Q_Y)}
	Q_Y(y, A_Y) = (P_{E_Z}P_{E_Y}\mathbf{1}_{A_Y \cross \mcal{Z}})(y, z^*) = \int_{\mcal{Z}} \bracket*{\int_{A_Y} \pi(dy'|z')} \pi(dz'|y)
\end{equation}
for all \((y, A_Y) \in \mcal{Y} \cross \mscr{B}(\mcal{Y})\), where \(z^* \in \mcal{Z}\) is an inconsequential fixed element. As with \(Q_Z\), the corresponding Markov operator \(Q_Y \in B(L^2(\pi_Y))\) is self-adjoint since the kernel given by \eqref{eq: kernel of the marginal chain (Q_Y)} is reversible with respect to \(\pi_Y\).

The spectral relationship between two-component deterministic-scan Gibbs samplers and their marginal chains is given in the following result.

\begin{prop}
	\label{prop: spectral connection between two-component deterministic-scan Gibbs and its marginal chain}
	Let \(Q_Y \in B(L^2(\pi_Y))\) and \(Q_Z \in B(L^2(\pi_Z))\) be the corresponding Markov operators of the kernels given by \eqref{eq: kernel of the marginal chain (Q_Y)} and \eqref{eq: kernel of the marginal chain (Q_Z)}, respectively. Then,
	\begin{equation*}
		\sigma(P_{E_Y}P_{E_Z} - \Pi) = \sigma(P_{E_Z}P_{E_Y} - \Pi) = \sigma(Q_Y - \Pi_Y) = \sigma(Q_Z - \Pi_Z).
	\end{equation*}	
\end{prop}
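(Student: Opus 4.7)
The plan is to mirror the strategy used in Proposition~\ref{prop: spectral connection between collapsed Gibbs and Gibbs}: restrict the operator $P_{E_Y}P_{E_Z} - \Pi$ to the closed subspace $E_Y$ using Lemma~\ref{lem: Rosenthal/Rosenthal Proposition 5.2}, then identify the restriction with $Q_Z - \Pi_Z$ via the isometric isomorphism of Proposition~\ref{prop: L^2(pi_I) and bigcap_-I E_i are isometrically isomorphic}. The analogue for $P_{E_Z}P_{E_Y} - \Pi$ and $Q_Y - \Pi_Y$ is symmetric, and the two sides are joined by the adjoint relationship between $P_{E_Y}P_{E_Z}$ and $P_{E_Z}P_{E_Y}$.

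First, I would observe that the range of $P_{E_Y}P_{E_Z}$ is contained in $E_Y$, since the outer operator is an orthogonal projection onto $E_Y$, and that constants also lie in $E_Y$; hence the range of $P_{E_Y}P_{E_Z} - \Pi$ is contained in the proper closed subspace $E_Y$. Lemma~\ref{lem: Rosenthal/Rosenthal Proposition 5.2} then yields
\begin{equation*}
	\sigma\paren*{P_{E_Y}P_{E_Z} - \Pi} = \sigma\paren*{(P_{E_Y}P_{E_Z} - \Pi)|_{E_Y}} \union \{0\}.
\end{equation*}

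Next, letting $\Phi_Z \colon L^2(\pi_Z) \to E_Y$ denote the isometric isomorphism from Proposition~\ref{prop: L^2(pi_I) and bigcap_-I E_i are isometrically isomorphic} (with $I$ chosen so that $\mcal{X}_I = \mcal{Z}$), I would verify via \eqref{eq: P_F} and \eqref{eq: kernel of the marginal chain (Q_Z)} that for $g \in L^2(\pi_Z)$ and $f = \Phi_Z g$,
\begin{equation*}
	(P_{E_Y}P_{E_Z}f)(y,z) = \int_{\mcal{Y}} \int_{\mcal{Z}} g(z') \, \pi(dz'|y') \, \pi(dy'|z) = (Q_Z g)(z),
\end{equation*}
so that $(P_{E_Y}P_{E_Z})|_{E_Y} = \Phi_Z Q_Z \Phi_Z^{-1}$; the analogous identity $\Pi|_{E_Y} = \Phi_Z \Pi_Z \Phi_Z^{-1}$ follows directly from the definition of $\Phi_Z$. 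Thus $(P_{E_Y}P_{E_Z} - \Pi)|_{E_Y}$ is similar to $Q_Z - \Pi_Z$ and shares its spectrum. Since $Q_Z - \Pi_Z$ annihilates constants, $0 \in \sigma(Q_Z - \Pi_Z)$, and combining with the previous display yields $\sigma(P_{E_Y}P_{E_Z} - \Pi) = \sigma(Q_Z - \Pi_Z)$. A symmetric argument with $y$ and $z$ interchanged gives $\sigma(P_{E_Z}P_{E_Y} - \Pi) = \sigma(Q_Y - \Pi_Y)$.

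Finally, to tie the four spectra together, I would use that each $P_{E_i}$ is self-adjoint, so $(P_{E_Y}P_{E_Z})^* = P_{E_Z}P_{E_Y}$, and $\Pi$ is self-adjoint, so $(P_{E_Y}P_{E_Z} - \Pi)^* = P_{E_Z}P_{E_Y} - \Pi$. The identity $\sigma(T^*) = \overline{\sigma(T)}$ together with the reality of $\sigma(Q_Z - \Pi_Z)$ (from self-adjointness of $Q_Z$) then gives $\sigma(P_{E_Z}P_{E_Y} - \Pi) = \overline{\sigma(P_{E_Y}P_{E_Z} - \Pi)} = \sigma(Q_Z - \Pi_Z)$, closing the chain of equalities. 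The main obstacle is the similarity computation in the second paragraph; it is purely a matter of unpacking definitions rather than a genuinely new idea, but care is needed when applying conditional Fubini correctly while iterating the two Gibbs steps.
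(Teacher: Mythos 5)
Your proposal is correct and follows essentially the same route as the paper's proof: restrict the cycle minus \(\Pi\) to the invariant subspace, apply Lemma~\ref{lem: Rosenthal/Rosenthal Proposition 5.2}, identify the restriction with the marginal operator via the isometric isomorphism of Proposition~\ref{prop: L^2(pi_I) and bigcap_-I E_i are isometrically isomorphic}, absorb the extra \(\{0\}\) by non-injectivity, and close the chain using \((P_{E_Y}P_{E_Z}-\Pi)^* = P_{E_Z}P_{E_Y}-\Pi\) together with the reality of the spectrum of the self-adjoint marginal operator. The only difference is cosmetic (you start from \(E_Y\) and spell out the conditional-Fubini computation that the paper cites as analogous to Proposition~\ref{prop: spectral connection between collapsed Gibbs and Gibbs}), so no further changes are needed.
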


\begin{proof}
	We first show that
	\begin{equation}
		\label{eq: spectra of P_(E_Z)P_(E_Y) - Pi and sigma(Q_Y) - Pi_Y are equal}
		\sigma(P_{E_Z}P_{E_Y} - \Pi) = \sigma(Q_Y - \Pi_Y).
	\end{equation}
	The argument is similar to the one given in the proof of Proposition~\ref{prop: spectral connection between collapsed Gibbs and Gibbs}. Letting \(\Phi \colon L^2(\pi_Y) \to E_Z\) be the isometric isomorphism given in Proposition~\ref{prop: L^2(pi_I) and bigcap_-I E_i are isometrically isomorphic}, observe that 
	\begin{align*}
		(P_{E_Z}P_{E_Y} - \Pi)|_{E_Z} = \Phi(Q_Y - \Pi_Y)\Phi^{-1}.
	\end{align*}
	Hence, by Lemma~\ref{lem: Rosenthal/Rosenthal Proposition 5.2} and by similarity of the operators,
	\begin{equation*}
		\sigma(P_{E_Z}P_{E_Y} - \Pi) = \sigma((P_{E_Z}P_{E_Y} - \Pi)|_{E_Z}) \union \{0\} = \sigma(Q_Y - \Pi_Y) \union \{0\}.
	\end{equation*}
	Noting that \(Q_Y - \Pi_Y\) is not injective yields \eqref{eq: spectra of P_(E_Z)P_(E_Y) - Pi and sigma(Q_Y) - Pi_Y are equal}. By a symmetrical argument,
	\begin{equation*}
		\sigma(P_{E_Y}P_{E_Z} - \Pi) = \sigma(Q_Z - \Pi_Z).
	\end{equation*}
	
	Now, for any bounded linear operator \(T\) on a complex Hilbert space, recall that \(\sigma(T^*) = \{\overline{\lambda} : \lambda \in T\}\) and that \(\sigma(T) \subset \mbb{R}\) whenever \(T\) is self-adjoint. Hence, the spectra of \(Q_Y - \Pi_Y\) and \(Q_Z - \Pi_Z\) are contained in \(\mbb{R}\). Moreover, these facts in conjunction with \eqref{eq: spectra of P_(E_Z)P_(E_Y) - Pi and sigma(Q_Y) - Pi_Y are equal} imply that
	\begin{align*}
		\sigma(P_{E_Y}P_{E_Z} - \Pi) &= \{\overline{\lambda} : \lambda \in \sigma(P_{E_Z}P_{E_Y} - \Pi)\} \\
		&= \{\overline{\lambda} : \lambda \in \sigma(Q_Y - \Pi_Y)\} \\
		&= \sigma(Q_Y - \Pi_Y);
	\end{align*}
	it similarly follows that \(\sigma(P_{E_Z}P_{E_Y} - \Pi) = \sigma(Q_Z - \Pi_Z)\). This completes the proof.
\end{proof}

\begin{rem}
	\label{rem: same L2 convergence rates when U, V cond. independent given W}
	Consider once more the setting in Corollary~\ref{cor: spectral connection between collapsed gibbs sampler and blocked gibbs sampler when U and V are cond. indep. given W} wherein the random elements \(U\) and \(V\) are conditionally independent given \(W\). Recall that \(P_{E_U}P_{E_V \intersect E_W}\) has a spectral gap if and only if \(P_{E_U}^{(UW)}P_{E_W}^{(UW)}\) has a spectral gap, in which case the gaps of these two deterministic-scan Gibbs samplers are equal.  \cite{Qin:Jones:2022} showed that the \(L^2\)-convergence rate of a two-component deterministic-scan Gibbs sampler is equal to the spectral gap of its marginal chain. Hence, in addition to having the same rate of convergence to stationarity in total variation distance (as noted in Remark~\ref{rem: same rate in TVD}), the Markov chain \(((U^{(n)}, V^{(n)}, W^{(n)}))_{n \geq 0}\) with operator \(P_{E_U}P_{E_V \intersect E_W}\) and the Markov chain \(((U^{(n)}, W^{(n)}))_{n \geq 0}\) with operator \(P_{E_U}^{(UW)}P_{E_W}^{(UW)}\) have the same \(L^2\)-convergence rate. Analogously, the chain \(((U^{(n)}, V^{(n)}, W^{(n)}))_{n \geq 0}\) with operator \(P_{E_V}P_{E_U \intersect E_W}\) and the chain \(((V^{(n)}, W^{(n)}))_{n \geq 0}\) with operator \(P_{E_V}^{(VW)}P_{E_W}^{(VW)}\) share the same \(L^2\)-convergence rate as well.
\end{rem}

Since every two-component deterministic-scan Gibbs sampler inherits a spectral gap from a full Gibbs sampler (Theorem~\ref{thm: introduction, inheritence of a spectral gap}), the following result directly follows from Proposition~\ref{prop: spectral connection between two-component deterministic-scan Gibbs and its marginal chain}. It is immaterial whether the two-component deterministic-scan Gibbs sampler is formed by Gibbs steps or collapsed Gibbs steps.

\begin{cor}
	If a full Gibbs sampler has a spectral gap, then every marginal chain that corresponds to some two-component deterministic-scan Gibbs sampler, formed either by Gibbs steps or collapsed Gibbs steps, has a spectral gap.
\end{cor}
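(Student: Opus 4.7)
The plan is to combine the inheritance results from Theorem~\ref{thm: inheritance of a spectral gap from the full Gibbs sampler} and Corollary~\ref{cor: inheritance of a spectral gap for cycles and mixtures of collapsed Gibbs steps} with the spectral identification in Proposition~\ref{prop: spectral connection between two-component deterministic-scan Gibbs and its marginal chain}. Assume throughout that a full Gibbs sampler has a spectral gap.

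First, I would handle the case in which the two-component deterministic-scan Gibbs sampler is formed by Gibbs steps. Any such sampler acts on \(L^2(\pi)\) and has operator of the form \(P_{E_Y}P_{E_Z}\) (or \(P_{E_Z}P_{E_Y}\)), where \(E_Y = \bigcap_{i \in I}E_i\) and \(E_Z = \bigcap_{i \in [K]\setminus I}E_i\) for some nonempty proper subset \(I \subset [K]\). This is a cycle of Gibbs steps, so Theorem~\ref{thm: inheritance of a spectral gap from the full Gibbs sampler} yields \(r(P_{E_Y}P_{E_Z} - \Pi) < 1\). Proposition~\ref{prop: spectral connection between two-component deterministic-scan Gibbs and its marginal chain} then provides \(\sigma(Q_Y - \Pi_Y) = \sigma(Q_Z - \Pi_Z) = \sigma(P_{E_Y}P_{E_Z} - \Pi)\), so the two marginal operators share the same (positive) spectral gap as the Gibbs sampler.

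Second, I would handle the collapsed case. Here the two-component deterministic-scan Gibbs sampler targets some marginal \(\pi_J\), with \(J \subset [K]\) and \(\abs{J} \geq 2\), and has an operator of the form \(P_{F_1^{(J)}}P_{F_2^{(J)}}\) acting on \(L^2(\pi_J)\). This is a cycle of collapsed Gibbs steps and so inherits a spectral gap from the full Gibbs sampler by Corollary~\ref{cor: inheritance of a spectral gap for cycles and mixtures of collapsed Gibbs steps}. Because Proposition~\ref{prop: spectral connection between two-component deterministic-scan Gibbs and its marginal chain} is a general statement about a two-component deterministic-scan Gibbs sampler on any probability space, it applies verbatim after replacing \((\pi, \Pi, L^2(\pi))\) by \((\pi_J, \Pi_J, L^2(\pi_J))\), yielding the analogous equality of spectra between the collapsed sampler and its marginal chains.

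There is no genuine obstacle; the only real content is to notice that the two inheritance theorems already cover both the Gibbs-step and the collapsed-Gibbs-step constructions, and that Proposition~\ref{prop: spectral connection between two-component deterministic-scan Gibbs and its marginal chain} transfers unchanged to \(L^2(\pi_J)\). Combining these gives the claim in both cases.
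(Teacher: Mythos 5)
Your proposal is correct and follows exactly the paper's route: the paper deduces the corollary by combining the inheritance results (Theorem~\ref{thm: inheritance of a spectral gap from the full Gibbs sampler} and Corollary~\ref{cor: inheritance of a spectral gap for cycles and mixtures of collapsed Gibbs steps}, packaged as Theorem~\ref{thm: introduction, inheritence of a spectral gap}) with the spectral identification of Proposition~\ref{prop: spectral connection between two-component deterministic-scan Gibbs and its marginal chain}, noting as you do that the latter applies verbatim on \(L^2(\pi_J)\) in the collapsed case.
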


Contrary to what is stated in the introduction of \cite{Chlebicka:Latuszynski:Miasojedow:2025}, the Markov kernels of two-component deterministic-scan Gibbs samplers are not necessarily reversible. Indeed, by a similar argument to the one given in the proof of Proposition~\ref{prop: U cond. indep. of V given W iff P_(F_U) and P_(F_V) commute}, the operator \(P_{E_Y} P_{E_Z}\) is self-adjoint if and only if \(Y\) and \(Z\) are independent, where \((Y,Z)\) is a random element taking values in \(\mcal{Y} \cross \mcal{Z}\) with distribution \(\pi\). Whenever \(Y\) and \(Z\) are not independent, which is the case most of the time, \(P_{E_Y} P_{E_Z}\) and \(P_{E_Z}P_{E_Y}\) define kernels that are not reversible with respect to \(\pi\). The redeeming quality, of course, is that two-component deterministic-scan Gibbs samplers share many properties with their marginal chains that are always reversible.

\section{A spectral disconnection}
\label{section: A spectral disconnection}

Despite blocked Gibbs samplers inheriting a spectral gap from a full Gibbs sampler (Theorem~\ref{thm: inheritance of a spectral gap from the full Gibbs sampler}), an example is given in this section to show that such a connection need not hold between two different blocked Gibbs samplers: even if the sizes of the blocks are the same, one blocked Gibbs sampler may have a spectral gap while another blocked Gibbs sampler does not. Our example also shows that a collapsed Gibbs sampler may fail to inherit a spectral gap from another collapsed Gibbs sampler. In fact, we show that one blocked (resp. collapsed) Gibbs sampler can be geometrically ergodic while another blocked (resp. collapsed) Gibbs sampler is not.

\subsection{Background on geometric ergodicity}
\label{subsection: Background on geometric ergodicity}

Recall (from Section~\ref{section: A generalized solidarity principle of the spectral gap}) that the existence of a spectral gap for a Markov operator ensures that the sequence generated by the powers of the operator converges in norm to \(\Pi\). On the probabilistic front, the existence of a spectral gap for a Markov operator also ensures that the sequence of the corresponding \(n\)-step transition kernels converges exponentially fast in total variation distance to \(\pi\). To phrase this more precisely, some definitions are reviewed. 

For any Markov kernel \(Q(x,dy)\) defined on \(\mcal{X} \cross \mscr{B}(\mcal{X})\), denote the \(n\)-step transition kernel by \(Q^n(x, dy)\) (with \(Q^1(x,dy) = Q(x,dy)\)). The kernel \(Q(x,dy)\) is \emph{\(\varphi\)-irreducible} if there exists some nonzero \(\sigma\)-finite measure \(\varphi\) on \(\mscr{B}(\mcal{X})\) with the following property: for all \(x \in \mcal{X}\) and all \(A \in \mscr{B}(\mcal{X})\) with \(\varphi(A) > 0\), there exists an integer \(n \geq 1\) satisfying \(Q^n(x, A) > 0\). Any \(\varphi\)-irreducible kernel \(Q(x,dy)\) admits a \(\sigma\)-finite measure \(\psi\) on \(\mcal{X}\), called the \emph{maximal irreducibility measure}, such that \(Q(x,dy)\) is \(\psi\)-irreducible. The maximal irreducibility measure \(\psi\) is unique in the sense that any \(\sigma\)-finite measure \(\varphi'\) such that \(Q(x,dy)\) is \(\varphi'\)-irreducible must be absolutely continuous with respect to \(\psi\) (\cite{Meyn:Tweedie:2009}, Proposition~4.2.2). A \(\psi\)-irreducible kernel \(Q(x,dy)\) is \emph{periodic} if there exists a collection of nonempty mutually disjoint sets \(A_0, \ldots, A_{m-1} \in \mscr{B}(\mcal{X})\) with \(m > 1\) such that \(\psi((\bigcup_{j = 0}^{m-1} A_j)^c) = 0\) and \(Q(x, A_{j+1}) = 1\) for all \(x \in A_j\) and all \(j = 0, \ldots, m - 1 \, \pmod m\). A \(\psi\)-irreducible kernel \(Q(x,dy)\) that is not periodic is called \emph{aperiodic}.

The following results are collected in Theorem~1 of the rejoinder in \cite{Tierney:1994}. Let \(Q(x,dy)\) be a Markov kernel with stationary distribution \(\pi\). If \(Q(x,dy)\) is \(\varphi\)-irreducible, then \(Q(x,dy)\) is \(\pi\)-irreducible, \(\varphi\) is absolutely continuous with respect to \(\pi\), and \(\pi\) is the unique invariant probability measure of \(Q(x,dy)\). If \(Q(x,dy)\) is also aperiodic, then     
\begin{equation*}
	\label{eq: lim_n ||Q^n(x, .) - pi(.)||_TV = 0}
	\lim_{n \to \infty} \norm{Q^n(x,\cdot) - \pi(\cdot)}_{\mathrm{TV}} = 0 
\end{equation*}
for \(\pi\)-a.e. \(x \in \mcal{X}\), where \(\norm{Q^n(x,\cdot) - \pi(\cdot)}_{\mathrm{TV}} \coloneq \sup_{A \in \mscr{B}(\mcal{X})}\abs{Q^n(x,A) - \pi(A)}\) is the total variation distance between the probability measures \(Q^n(x,\cdot)\) and \(\pi\).  

In view of the aforementioned facts from \cite{Tierney:1994}, the maximal irreducibility measure is \(\pi\) (up to equivalence of measures) for any \(\varphi\)-irreducible Markov kernel with invariant probability measure \(\pi\). This property is used in the proof of the following result to show that \(\varphi\)-irreducibility ensures aperiodicity for any Markov kernel corresponding to a cycle or mixture of Gibbs steps. The argument is essentially that used to prove Theorem~4.6 of \cite{Geyer:2005} and is provided in Appendix~\ref{section: proof of aperiodic proposition} for completeness.

\begin{prop}
	\label{prop: Q(x,dy) is aperiodic if Q is pi-irreducible}
	Let \(Q\) be a cycle or mixture of Gibbs steps. If \(Q(x,dy)\) is \(\varphi\)-irreducible, then \(Q(x,dy)\) is aperiodic.
\end{prop}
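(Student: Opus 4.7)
The plan is to argue by contradiction. Suppose $Q(x,dy)$ is $\varphi$-irreducible and periodic with period $m > 1$, having periodic sets $A_0, \ldots, A_{m-1}$. The facts cited from \cite{Tierney:1994} give that the maximal irreducibility measure is $\pi$ (up to equivalence), so $\pi\bigl((\bigcup_j A_j)^c\bigr) = 0$. Writing $g_j = \mathbf{1}_{A_j}$, the periodicity relation $Q(x, A_{j+1}) = 1$ for $x \in A_j$ translates into $Q g_{j+1} = g_j$ $\pi$-a.e.\ (indices mod $m$), and $\pi$-invariance of $Q$ then forces $\pi(A_j) = 1/m$ for every $j$. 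I would then construct the complex function
\[
h \;=\; \sum_{j=0}^{m-1} \omega^j g_j \;\in\; L^2(\pi), \qquad \omega \;=\; e^{2\pi i /m}.
\]
A direct computation yields $Qh = \omega h$, and since $|\omega| = 1$ it follows that $\|Qh\| = \|h\|$. The goal is to show this forces $h$ to be $\pi$-a.e.\ constant, which contradicts the fact that $h$ takes the $m$ distinct values $1, \omega, \ldots, \omega^{m-1}$ on the sets $A_0, \ldots, A_{m-1}$, each of $\pi$-measure $1/m > 0$.

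Next I would split into the cycle and mixture cases. Suppose first that $Q = P_{F_1}\cdots P_{F_g}$ is a cycle. Each $P_{F_d}$ is an orthogonal projection satisfying $\|v\|^2 = \|P_{F_d} v\|^2 + \|(\Id - P_{F_d})v\|^2$, so the chain
\[
\|h\| \;\geq\; \|P_{F_g} h\| \;\geq\; \|P_{F_{g-1}} P_{F_g} h\| \;\geq\; \cdots \;\geq\; \|P_{F_1}\cdots P_{F_g} h\| \;=\; \|h\|
\]
collapses to equalities, forcing each intermediate vector to be fixed by the next projection. Iterating, $h \in F_d$ for every $d \in [g]$, so
\[
h \;\in\; \bigcap_{d=1}^{g} F_d \;=\; \bigcap_{i \in \bigcup_d I_d} E_i \;=\; E
\]
thanks to the covering condition $\bigcup_d I_d = [K]$. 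Hence $h$ is constant $\pi$-a.e., the desired contradiction.

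For the mixture case $Q = \sum_d w_d P_{F_d}$, the operator $Q$ is self-adjoint, so its spectrum lies in $\mathbb{R}$. Thus $\omega \in \{-1,+1\}$; since $m > 1$ rules out $\omega = 1$, we must have $\omega = -1$ and $m = 2$. But then
\[
-\|h\|^2 \;=\; \langle Qh, h\rangle \;=\; \sum_d w_d \langle P_{F_d} h, h\rangle \;=\; \sum_d w_d \|P_{F_d} h\|^2 \;\geq\; 0,
\]
forcing $h = 0$, again a contradiction.

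The main obstacle is the cycle case, where one must carefully leverage both the projection structure of each $P_{F_d}$ and the covering condition $\bigcup_d I_d = [K]$ to funnel the unit-modulus eigenfunction $h$ into the constants subspace $E$. The mixture case is essentially immediate once self-adjointness is invoked to eliminate non-real unimodular eigenvalues, and the remaining possibility $\omega = -1$ is killed by a one-line inner-product calculation.
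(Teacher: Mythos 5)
Your proof is correct and, at its core, takes the same route as the paper: both construct the unimodular eigenfunction \(\sum_{j} e^{2\pi i j/m}\mathbf{1}_{A_j}\) from the cyclic sets (using that the maximal irreducibility measure is equivalent to \(\pi\)) and derive a contradiction from the fact that orthogonal projections cannot preserve its norm unless it lies in every \(F_d\), hence in \(E\) by the covering condition \(\bigcup_d I_d = [K]\). In the cycle case, your collapse-to-equality chain and the paper's choice of the largest index \(d'\) with \(\norm{P_{F_{d'}}f} < 1\) are the same Pythagorean observation packaged differently. The one genuine variation is the mixture case: the paper bounds \(\norm{Qf}\) strictly below \(1\) via the triangle inequality, whereas you use self-adjointness to force the eigenvalue to be \(-1\) (so \(m=2\)) and then positivity of \(\sum_d w_d P_{F_d}\) to annihilate the eigenfunction; this is slightly cleaner and, notably, does not even need the covering condition for mixtures, while the paper's argument treats both cases uniformly. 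Your additional observation that invariance forces \(\pi(A_j) = 1/m\) is a worthwhile refinement, since it guarantees every cyclic set has positive \(\pi\)-measure and hence that the eigenfunction is genuinely non-constant, a point the paper passes over quickly.
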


A \(\varphi\)-irreducible and aperiodic kernel \(Q(x,dy)\) with invariant probability measure \(\pi\) is \emph{geometrically ergodic} if there exist a constant \(\rho \in [0,1)\) and a function \(C : \mcal{X} \to (0, \infty)\) such that 
\begin{equation*}
	\label{eq: geometric ergodicity}
	\norm{Q^n(x, \cdot) - \pi(\cdot)}_{\mathrm{TV}} \leq C(x) \rho^n, \quad n \geq 1,
\end{equation*}
for \(\pi\)-a.e. \(x \in \mcal{X}\). Recall that for any \(\varphi\)-irreducible and aperiodic kernel \(Q(x,dy)\) with invariant probability measure \(\pi\), the presence of a spectral gap for the corresponding Markov operator \(Q \in B(L^2(\pi))\) implies geometric ergodicity of \(Q(x,dy)\) (\cite{Kontoyiannis:Meyn:2012}). When \(Q(x,dy)\) is additionally reversible or, equivalently, when \(Q\) is additionally self-adjoint, then the reverse implication also holds: geometric ergodicity of \(Q(x,dy)\) implies the presence of a spectral gap for \(Q\) (\cite{Roberts:Rosenthal:1997}, \cite{Roberts:Tweedie:2001}).

\subsection{Example}
\label{subsection: Example}

Here is a setting in which blocking or collapsing a Gibbs sampler in different ways affects whether or not the resulting operator has a spectral gap. For any random variable \(R\), let \(\mcal{L}(R)\) denote the distribution of \(R\). Consider the following Bayesian linear hierarchical model:
\begin{align*}
	Y &= W + \frac{N_1}{\sqrt{U}}, \\
	W &= V + N_2,
\end{align*}
where \(\mcal{L}(N_1) = \mcal{L}(N_2) = N(0,1)\), where \(\mcal{L}(U) = \Gamma(1/2,1/2)\), and where a flat prior is placed on \(V\). (We say \(\mcal{L}(U) = \Gamma(\alpha, \beta)\) for some \(\alpha, \beta > 0\) if the density of \(U\) is proportional to \(u^{\alpha - 1}\exp[-\beta u]\) for \(u > 0\).) The variables \(N_1\), \(N_2\), \(U\), and \(V\) are assumed mutually independent. In particular, this implies that \(U\) and \(W\) are also independent. 

Fix \(y \in \mbb{R}\) throughout this section. The distribution of interest is \(\pi = \mcal{L}((U,V,W)|Y = y)\). Slightly abusing notation, denote the posterior density of \((U,V,W)|Y = y\) by \(\pi(u,v,w|y)\), and, momentarily letting \(f(\cdot)\) represent a generic density, observe that
\begin{equation}
	\label{eq: pi(u,v,w|y)}
	\begin{aligned}[b]
		\pi(u,v,w|y) &\propto f(y|u,w)f(u)f(w|v)f(v) \\
		&\propto  \paren*{\sqrt{u}\exp\bracket*{-\frac{u}{2}(y - w)^2}} \paren*{\frac{1}{\sqrt{u}}\exp\bracket*{-\frac{1}{2}u}} \paren*{\exp\bracket*{-\frac{1}{2}(w - v)^2}}  \\
		&= \paren*{\exp\bracket*{-\frac{1}{2}\paren*{1 + (y-w)^2}u}} \paren*{\exp\bracket*{-\frac{1}{2}(w - v)^2}}
	\end{aligned}
\end{equation}
for all \(u \in \mbb{R}_+ \coloneq (0, \infty)\) and \(v,w \in \mbb{R}\). It is easy to see that 
\begin{equation*}
	\int_{\mbb{R}} \int_{\mbb{R}_+} \int_{\mbb{R}} \paren*{\exp\bracket*{-\frac{1}{2}\paren*{1 + (y-w)^2}u}} \paren*{\exp\bracket*{-\frac{1}{2}(w - v)^2}} \, dv \, du \, dw < \infty,
\end{equation*} 
so \(\pi(u,v,w|y)\) is indeed a proper density. Since \(\pi(u,v,w|y)\) is strictly positive everywhere, any cycle or mixture of Gibbs steps has a kernel that is \(\pi\)-irreducible. Irreducibility similarly holds for cycles and mixtures of collapsed Gibbs steps. (See, e.g., \cite{Tan:Hobert:2009}, Lemma~1.)

Let \(\mcal{U} = \mbb{R}_+\) and let \(\mcal{V} = \mcal{W} = \mbb{R}\). As in Section~\ref{subsection: Commuting pairs of Gibbs steps}, let \((U,V,W)\) be a random element taking values in \(\mcal{U} \cross \mcal{V} \cross \mcal{W}\) with distribution \(\pi\), and let \(E_U\), \(E_V\), and \(E_W\) be the closed subspaces of functions in \(L^2(\pi)\) that are constant with respect to the \(u\)th, \(v\)th, and \(w\)th coordinate, respectively. We investigate whether the two blocked Gibbs samplers with operators \(P_{E_U} P_{E_V \intersect E_W}\) and \(P_{E_V} P_{E_U \intersect E_W}\) have a spectral gap. From \eqref{eq: pi(u,v,w|y)}, observe that \(U\) and \(V\) are conditionally independent given \(W\). Hence, by Corollary~\ref{cor: spectral connection between collapsed gibbs sampler and blocked gibbs sampler when U and V are cond. indep. given W}, it suffices to analyze the collapsed Gibbs samplers with operators \(P_{E_U}^{(UW)}P_{E_W}^{(UW)}\) and \(P_{E_V}^{(VW)}P_{E_W}^{(VW)}\). 

In view of \eqref{eq: pi(u,v,w|y)}, 
\begin{equation*}
	\pi(v,w|y) \propto \paren*{\frac{1}{1+(y-w)^2}} \paren*{\exp\bracket*{-\frac{1}{2}(w-v)^2}}, \quad (v,w) \in \mcal{V} \cross \mcal{W}.
\end{equation*}
Thus, \(\pi(v,w|y)\) is the posterior density of \((V,W)|Y = y\) given by the following hierarchical model:
\begin{align*}
	Y &= W + Z_1 \\
	W &= V + Z_2,
\end{align*}
where \(\mcal{L}(Z_1) = \mathrm{Cauchy}(0,1)\), where \(\mcal{L}(Z_2) = N(0,1)\), and where a flat prior is placed on \(V\). The variables \(V\), \(Z_1\), and \(Z_2\) are assumed mutually independent. This hierarchical model was studied in \cite{Papaspiliopoulos:Roberts:2008} and Theorem~3.6 therein established that the collapsed Gibbs sampler with operator \(P_{E_V}^{(VW)}P_{E_W}^{(VW)}\) is not geometrically ergodic. It is well-known that two-component Gibbs samplers and their marginal chains converge to stationarity at the same rate in total variation distance (\cite{Roberts:Rosenthal:2001}, \cite{Diaconis:Khare:Saloff-Coste:2008}), so the marginal chain is consequently not geometrically ergodic as well. Since the marginal chain is reversible, it follows that the Markov operator of the marginal chain does not have a spectral gap (\cite{Roberts:Tweedie:2001}). Hence, \(P_{E_V}^{(VW)}P_{E_W}^{(VW)}\) does not have a spectral gap by Proposition~\ref{prop: spectral connection between two-component deterministic-scan Gibbs and its marginal chain}, so \(P_{E_V}P_{E_U \intersect E_W}\) does not have a spectral gap either.

In contrast to the blocked Gibbs sampler with operator \(P_{E_V} P_{E_U \intersect E_W}\), the following result establishes that blocking differently results in an operator with a spectral gap.

\begin{prop}
	\label{prop: P_(E_U) P_(E_V intersect E_W) has a spectral gap}
	The blocked Gibbs sampler with operator \(P_{E_U}P_{E_V \intersect E_W}\) has a spectral gap.
\end{prop}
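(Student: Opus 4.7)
The plan is to reduce the claim in three steps to a question about a single reversible marginal chain on $\mcal{W}$, and then verify a geometric drift and minorization.

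\textbf{Reduction.} The factorization in \eqref{eq: pi(u,v,w|y)} shows that $U$ and $V$ are conditionally independent given $W$, since the density splits as a product of a factor depending on $(u,w)$ and a factor depending on $(v,w)$. By Corollary~\ref{cor: spectral connection between collapsed gibbs sampler and blocked gibbs sampler when U and V are cond. indep. given W}, it suffices to prove that the collapsed two-component sampler $P_{E_U}^{(UW)}P_{E_W}^{(UW)}$ on $L^2(\pi_{UW})$ has a spectral gap. Integrating out $v$ in \eqref{eq: pi(u,v,w|y)} gives $\pi(u,w|y) \propto \exp\bigl[-\tfrac{u}{2}(1+(y-w)^2)\bigr]$, from which
\begin{equation*}
U \mid W = w, Y = y \sim \mathrm{Exp}\bigl(\tfrac{1}{2}(1+(y-w)^2)\bigr), \qquad W \mid U = u, Y = y \sim N(y, 1/u).
\end{equation*}
By Proposition~\ref{prop: spectral connection between two-component deterministic-scan Gibbs and its marginal chain}, this further reduces to showing the self-adjoint marginal operator $Q_W$ on $L^2(\pi_W)$ has a spectral gap, which, by reversibility and \cite{Roberts:Tweedie:2001}, is equivalent to geometric ergodicity of $Q_W$.

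\textbf{Drift.} Writing one step of the marginal chain as $W' = y + Z/\sqrt{U}$ with $Z \sim N(0,1)$ independent of $U \sim \mathrm{Exp}\bigl(\tfrac{1}{2}(1+(y-w)^2)\bigr)$, the first obstacle is that the natural drift $V(w) = 1+(y-w)^2$ fails because $E[1/U \mid W = w] = \infty$. I would instead use a fractional-power drift $V(w) = 1 + |y-w|^p$ for small $p > 0$. An elementary gamma-integral calculation yields
\begin{equation*}
E[|y - W'|^p \mid W = w] = E[|Z|^p]\, E[U^{-p/2} \mid W = w] = C_p\,(1+(y-w)^2)^{p/2},
\end{equation*}
where $C_p = \Gamma((p+1)/2)\Gamma(1-p/2)/\sqrt{\pi}$. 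Since $C_0 = 1$ and $\tfrac{d}{dp}\log C_p\big|_{p=0} = \tfrac{1}{2}(\psi(1/2)-\psi(1)) = -\log 2 < 0$, one has $C_p < 1$ for all sufficiently small $p$. The subadditivity $(1+t)^{p/2} \leq 1 + t^{p/2}$ (valid for $0 < p \leq 2$) then gives
\begin{equation*}
E[V(W') \mid W = w] \leq C_p\, V(w) + 1,
\end{equation*}
a geometric drift toward the sublevel sets of $V$ with rate $C_p < 1$.

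\textbf{Minorization and conclusion.} A direct calculation shows that the one-step transition density of $Q_W$ is
\begin{equation*}
q(w,w') = \frac{1+(y-w)^2}{2\bigl(1+(y-w)^2+(y-w')^2\bigr)^{3/2}},
\end{equation*}
and on $[y-R, y+R]$ the bound $1 \leq 1+(y-w)^2 \leq 1+R^2$ yields the minorization $q(w,w') \geq \bigl(2(1+R^2+(y-w')^2)^{3/2}\bigr)^{-1}$, whose right side integrates in $w'$ to $1/(1+R^2)$. Since every sublevel set of $V$ lies in some such interval, the Meyn-Tweedie theorem then yields geometric ergodicity of $Q_W$, completing the proof after tracing the reductions backward. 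The main obstacle is the drift step: the fat tails of $1/U$ given $W$ preclude any quadratic drift, and the argument hinges on the sharp fact that $C_p < 1$ for small $p$, which is why I work with a fractional power rather than a more standard Lyapunov function.
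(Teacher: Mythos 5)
Your proposal is correct and follows essentially the same route as the paper: the identical reduction via Corollary~\ref{cor: spectral connection between collapsed gibbs sampler and blocked gibbs sampler when U and V are cond. indep. given W} and Proposition~\ref{prop: spectral connection between two-component deterministic-scan Gibbs and its marginal chain} to the $\mcal{W}$-marginal chain, the same transition density, and the same drift-and-minorization conclusion. The only difference is cosmetic: the paper takes the drift $V(w)=\sqrt{\abs{y-w}}$ (your $p=1/2$) and gets the contraction constant $\mbb{E}[\sqrt{\abs{t_2}}]/2^{1/4}<1$ by Jensen, whereas you verify $C_p<1$ for small $p$ via the digamma derivative at $p=0$.
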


\begin{proof}
	It suffices to show that the marginal chain of the collapsed Gibbs sampler \(P_{E_U}^{(UW)}P_{E_W}^{(UW)}\) is geometrically ergodic. From \eqref{eq: pi(u,v,w|y)}, 
	\begin{equation*}
		\pi(u,w|y) \propto \exp\bracket*{-\frac{1}{2}(1+(y-w)^2)u},
	\end{equation*}
	so \(\mcal{L}(U|W=w, Y=y) = \Gamma(1, (1+(y-w)^2)/2)\) and \(\mcal{L}(W|U=u, Y=y) = N(y, 1/u)\). Letting \(\tau_w = \sqrt{(1+(y-w)^2)/2}\), observe that the marginal chain of the Gibbs sampler with operator \(P_{E_U}^{(UW)}P_{E_W}^{(UW)}\) has transition density given by
	\begin{align*}
		k(w,w')&= \int_0^\infty \pi(w'|u,y) \pi(u|w,y) \, du \\
		&= \int_0^\infty \paren*{\sqrt{\frac{u}{2\pi}} \exp\bracket*{-\frac{u}{2}(w'-y)^2}} \paren*{\tau_w^2 \exp\bracket*{-\frac{1}{2}(1+(y-w)^2)u}}  \, du \\
		&= \frac{\tau_w^2}{\sqrt{2\pi}} \int_0^\infty \sqrt{u} \exp\bracket*{-\frac{1}{2}(1+(y-w)^2+(y-w')^2)u} \, du \\
		&= \frac{\tau_w^2}{\sqrt{2\pi}} \paren*{\frac{2}{1+(y-w)^2+(y-w')^2}}^{3/2} \frac{\sqrt{\pi}}{2} \\
		&= \frac{\tau_w^2}{(1+(y-w)^2+(y-w')^2)^{3/2}}.
	\end{align*}
	Since
	\begin{align*}
		&(1+(y-w)^2+(y-w')^2)^{-3/2} \\
		&\qquad = 2^{-3/2} \paren*{\frac{1 + (y-w)^2}{2} + \frac{(y-w')^2}{2}}^{-3/2} \\
		&\qquad = 2^{-3/2} \bracket*{\paren*{\frac{1 + (y-w)^2}{2}} \paren*{1 + \frac{(y-w')^2}{2} \paren*{\frac{2}{1+(y-w)^2}}}}^{-3/2} \\
		&\qquad = 2^{-3/2} \tau_w^{-3} \bracket*{1 + \frac{(y-w')^2}{2\tau_w^2}}^{-3/2}
	\end{align*}
	for all \(w, w' \in \mbb{R}\), it follows that
	\begin{equation}
		\label{eq: k(w,w')}
		k(w,w') = \tau_w^2 \paren*{2^{-3/2} \tau_w^{-3} \bracket*{1 + \frac{(y-w')^2}{2\tau_w^2}}^{-3/2}} = \frac{1}{\sqrt{8}\tau_w}\bracket*{1 + \frac{(y-w')^2}{2\tau_w^2}}^{-3/2}.
	\end{equation}
	Recall that a location-scale Student's \(t\) distribution with \(2\) degrees of freedom, location parameter \(\mu \in \mbb{R}\), and scale parameter \(\delta > 0\) has density given by
	\begin{equation}
		\label{eq: location-scale t_2 density}
		g(w') = \frac{1}{\sqrt{8} \delta} \bracket*{1 + \frac{(\mu - w')^2}{2\delta^2}}^{-3/2}, \quad w' \in \mbb{R}.
	\end{equation}
	Hence, for fixed \(w \in \mbb{R}\), the function \(w' \mapsto k(w,w')\) given by \eqref{eq: k(w,w')} is precisely the density of a location-scale \(t\) distribution with \(2\) degrees of freedom, location parameter \(y\), and scale parameter \(\tau_w = \sqrt{(1+(y-w)^2)/2}\). Put equivalently, if \(W_0, W_1, \ldots\) denotes the Markov chain with transition density \(k(w,w')\), then 
	\begin{equation}
		\label{eq: L(W_(n+1)|W_n)}
		\mcal{L}(W_{n+1}|W_n = w_n) = \mcal{L}\paren*{y + \sqrt{\frac{1 + (y-w_n)^2}{2}} \, t_2}, \quad n \geq 0,
	\end{equation}
	where \(\mcal{L}(t_2)\) is the Student's \(t\) distribution with 2 degrees of freedom.
	
	We now establish a standard set of drift and minorization conditions (\cite{Rosenthal:1995}, Theorem~12) from which it follows that the chain \(W_0, W_1, \ldots\) is geometrically ergodic. Define \(V(w) = \sqrt{\abs{y-w}}\) for \(w \in \mbb{R}\), and let \(\lambda = \mbb{E}[\sqrt{\abs{t_2}}]\). Since \(\mbb{E}[\abs{t_2}] = \sqrt{2}\) (see, e.g., \cite{Psarakis:Panaretos:1990}, Theorem~3.1) and since the function \(w \mapsto \sqrt{w}\) on \(\mbb{R}_+\) is strictly concave, it follows by Jensen's inequality that \(\lambda < \sqrt{\mbb{E}[\abs{t_2}]} = 2^{1/4}\). Additionally, recall that \((a + b)^{1/4} \leq a^{1/4} + b^{1/4}\) for all real numbers \(a,b \geq 0\). These facts in conjunction with \eqref{eq: L(W_(n+1)|W_n)} imply that
	\begin{align*}
		\int_\mbb{R} V(w')k(w,w') \, dw' &= \mbb{E}\paren*{\sqrt{\abs{t_2}} \paren*{\frac{1 + (y-w)^2}{2}}^{1/4}} \\
		&= \frac{\lambda}{2^{1/4}} \paren*{1 + (y-w)^2}^{1/4} \\
		&\leq \frac{\lambda}{2^{1/4}} + \frac{\lambda}{2^{1/4}} V(w)
	\end{align*}
	for all \(w \in \mbb{R}\). This establishes the drift condition.
	
	For the minorization condition, fix any \(d > (2^{3/4}\lambda)/(1 - \lambda/2^{1/4})\), and take any \(w \in \mbb{R}\) such that \(V(w) = \sqrt{\abs{y-w}} \leq d\). Then, for all \(w' \in \mbb{R}\),
	\begin{align*}
		k(w,w') &= \frac{1}{\sqrt{8}}\sqrt{\frac{2}{1+(y-w)^2}}\bracket*{1 + \frac{(y-w')^2}{2} \paren*{\frac{2}{{1+(y-w)^2}}}}^{-3/2} \\
		&\geq \frac{1}{\sqrt{8}}\sqrt{\frac{2}{1+d^2}} \bracket*{1 + \frac{(y-w')^2}{2} (2)}^{-3/2} \\
		&= \frac{1}{\sqrt{1+d^2}} \paren*{\frac{1}{\sqrt{8}} \sqrt{2} \bracket*{1 + \frac{(y-w')^2}{2} (2)}^{-3/2}} \\
		&= \frac{1}{\sqrt{1 + d^2}} g(w'),
	\end{align*} 
	where \(g(\cdot)\) is the density given by \eqref{eq: location-scale t_2 density} with \(\mu = y\) and \(\delta = \sqrt{1/2}\). This completes the proof. 
\end{proof}

This example serves an additional purpose: it shows that a full Gibbs sampler may in general fail to inherit geometric ergodicity or a spectral gap from a blocked or collapsed Gibbs sampler. While the blocked Gibbs sampler with operator \(P_{E_U}P_{E_V \intersect E_W}\) has a spectral gap, Theorem~\ref{thm: inheritance of a spectral gap from the full Gibbs sampler} implies that none of the full Gibbs samplers have a spectral gap since there exists a cycle of Gibbs steps without one, namely the blocked Gibbs sampler with operator \(P_{E_V}P_{E_U \intersect E_W}\). Moreover, by Proposition~\ref{prop: U cond. indep. of V given W iff P_(F_U) and P_(F_V) commute}, observe that \(P_{E_U} P_{E_V} P_{E_W} = P_{E_U \intersect E_V} P_{E_W}\) as \(U\) and \(V\) are conditionally independent given \(W\), so the full Gibbs sampler with operator \(P_{E_U} P_{E_V} P_{E_W} = P_{E_U \intersect E_V} P_{E_W}\) can be viewed as a two-component Gibbs sampler that does not have a spectral gap. It follows that the self-adjoint operator of its marginal chain does not have a spectral gap either (Proposition~\ref{prop: spectral connection between two-component deterministic-scan Gibbs and its marginal chain}), and so the marginal chain is not geometrically ergodic (\cite{Roberts:Tweedie:2001}). Hence, the full Gibbs sampler with operator \(P_{E_U} P_{E_V} P_{E_W} = P_{E_U \intersect E_V} P_{E_W}\) is not geometrically ergodic (\cite{Roberts:Rosenthal:2001}, \cite{Diaconis:Khare:Saloff-Coste:2008}), whereas the blocked Gibbs sampler with operator \(P_{E_U}P_{E_V \intersect E_W}\) is.

%%%%%%%%%%%%%%%%%%%%%%%%%%%%%%%%%%%%%%%%%%%%%%
%% Single Appendix:                         %%
%%%%%%%%%%%%%%%%%%%%%%%%%%%%%%%%%%%%%%%%%%%%%%
%\begin{appendix}
%\section*{???}%% if no title is needed, leave empty \section*{}.
%\end{appendix}
%%%%%%%%%%%%%%%%%%%%%%%%%%%%%%%%%%%%%%%%%%%%%%
%% Multiple Appendixes:                     %%
%%%%%%%%%%%%%%%%%%%%%%%%%%%%%%%%%%%%%%%%%%%%%%

\begin{appendix}
\section{Proof of Lemma~\ref{lem: P in P_sum has norm less than 1 for some choice of weights implies norm less than 1 for any choice of weights}}

\label{section: proof of lemma for norms related to convex combinations of projections}

Fix any \((w_1', \ldots, w_N') \in \mscr{S}_N\) with \((w_1, \ldots, w_N) \neq (w_1', \ldots, w_N')\), and let \(\gamma = \min_{i \in [N]} w_i'/w_i \in (0,1)\). Further, for \(i \in [N]\), define \(a_i = (w_i' - \gamma w_i)/(1-\gamma)\)  so that \(w_i' = (1-\gamma)a_i + \gamma w_i\). Note that \((a_1, \ldots, a_N) \in \mscr{S}_N\). By the triangle inequality, it follows that
\begin{align*}
	\norm*{\sum_{d = 1}^g w_i' P_{M_i} - P_M} &= \norm*{(1-\gamma)\bracket*{\sum_{d = 1}^g a_iP_{M_i} - P_M} + \gamma \bracket*{\sum_{d = 1}^g w_i P_{M_i} - P_M}} \\
	&\leq (1-\gamma)\norm*{\sum_{d = 1}^g a_iP_{M_i} - P_M} + \gamma \norm*{\sum_{d = 1}^g w_i P_{M_i} - P_M}.
\end{align*}
Since 
\begin{equation*}
	\norm*{\sum_{d = 1}^g a_iP_{M_i} - P_M} = \norm*{\paren*{\sum_{d = 1}^g a_i P_{M_i}} (\Id - P_M)} \leq \norm*{\sum_{d = 1}^g a_i P_{M_i}} \norm{\Id - P_M} \leq 1, 
\end{equation*}
and since \(\norm*{\sum_{d = 1}^g w_iP_{M_i} - P_M} < 1\) by assumption,  
\begin{equation*}
	\norm*{\sum_{d = 1}^g w_i' P_{M_i} - P_M} \leq (1-\gamma)\norm*{\sum_{d = 1}^g a_iP_{M_i} - P_M} + \gamma \norm*{\sum_{d = 1}^g w_i P_{M_i} - P_M} < 1,
\end{equation*}
as required.

\section{Proof of Proposition~\ref{prop: Q(x,dy) is aperiodic if Q is pi-irreducible}}

\label{section: proof of aperiodic proposition}

Suppose \(Q(x,dy)\) were periodic; then, for some integer \(m > 1\), there exists sets \(A_0, \ldots, A_{m-1} \in \mscr{B}(\mcal{X})\) with \(\pi(\bigcup_{j = 0}^{m-1} A_j) = 1\) and \(Q(x, A_{j+1}) = 1\) for all \(x \in A_j\) and all \(j = 0, \ldots, m - 1 \, \pmod m\). Define \(f \in L^2(\pi)\) by \(f(x) = \sum_{j = 0}^{m-1} (e^{2\pi i/m})^{j} \mathbf{1}_{A_j}(x)\) where \(i\) here denotes the imaginary unit, and observe that 
\begin{equation*}
	\norm{f}^2 = \sum_{j = 0}^{m-1} \abs{(e^{2\pi i/m})^j}^2 \pi(A_j) = \sum_{j = 0}^{m - 1} \pi(A_j) = 1,
\end{equation*} 
where the last equality follows from the assumptions that the sets \(A_1, \ldots, A_m\) are mutually disjoint and \(\pi(\bigcup_{j = 0}^{m-1} A_j) = 1\). Furthermore, note that
\begin{align*}
	[Qf](x) &= \sum_{j = 0}^{m-1} (e^{2\pi i/m})^{j} Q(x, A_j) \\
	&= \mathbf{1}_{A_{m-1}}(x) + \sum_{j = 1}^{m-1} (e^{2\pi i/m})^j \mathbf{1}_{A_{j-1}}(x) \\
	&= (e^{2\pi i/m}) f(x)
\end{align*}
for all \(x \in \mcal{X}\). Hence, \(f\) is an eigenfunction for \(Q\) with \(Qf = e^{2\pi i/m} f\).

Consider first the case when \(Q = P_{F_1} \cdots P_{F_g}\) is a cycle of Gibbs steps. Recall that the intersection \(\bigcap_{d = 1}^g F_d = E\) is the closed subspace of constant functions. Furthermore, for every \(d \in [g]\), 
\begin{equation*}
	1 = \norm{f}^2 = \norm{P_{F_d}f}^2 + \norm{(\Id - P_{F_d})f}^2
\end{equation*}
by the Pythagorean theorem. If \(\norm{(\Id - P_{F_d})f} = 0\) for every \(d \in [g]\), then this would mean \(f \in \bigcap_{d = 1}^g F_d = E\), which is nonsensical since \(f\) is not a constant function. Hence, there must exist some \(d' \in [g]\) for which \(\norm{P_{F_{d'}}f} < \norm{f} = 1\), and we may take \(d'\) to be the largest such integer so that \((\Id - P_{F_d})f = 0\) for any \(d > d'\). On the one hand, this implies that
\begin{align*}
	\norm{Qf} &= \norm*{P_{F_1} \cdots P_{F_g}f} = \norm{P_{F_1} \cdots P_{F_{d'}}f} \leq \norm{P_{F_1} \cdots P_{F_{d' - 1}}} \norm{P_{F_{d'}}f} < 1,
\end{align*}
while, on the other hand, \(\norm{Qf} = \norm{e^{2\pi i/m}f} = \abs{e^{2\pi i/m}} \norm{f} = 1\)---a contradiction. The case when \(Q = \sum_{d = 1}^g w_d F_{d}\) is a mixture of Gibbs steps is handled similarly: in this case, 
\begin{equation*}
	\norm{Qf} = \norm*{\sum_{d = 1}^g w_d P_{F_d}f} \leq \sum_{d= 1}^g w_d \norm{P_{F_d} f} \leq w_{d'} \norm{P_{F_{d'}}f} + \sum_{d \neq d'} w_d < 1,
\end{equation*}
which is again a contradiction. Therefore, \(Q(x,dy)\) must be aperiodic.

\end{appendix}

\bibliographystyle{imsart-nameyear} % Style BST file (imsart-number.bst or imsart-nameyear.bst)
\bibliography{refs}       % Bibliography file (usually '*.bib')

\begin{thebibliography}{38}
% BibTex style file: imsart-nameyear.bst, 2017-11-03
% Default style options (sort=1,type=nameyear).
% Used options (sort=1,type=nameyear).

\bibitem[\protect\citeauthoryear{Amit}{1991}]{Amit:1991}
\begin{barticle}[author]
\bauthor{\bsnm{Amit},~\bfnm{Yali}\binits{Y.}}
(\byear{1991}).
\btitle{On rates of convergence of stochastic relaxation for {G}aussian and
  non-{G}aussian distributions}.
\bjournal{J. Multivariate Anal.}
\bvolume{38}
\bpages{82--99}.
\bdoi{10.1016/0047-259X(91)90033-X}
\bmrnumber{1128938}
\end{barticle}
\endbibitem

\bibitem[\protect\citeauthoryear{Badea, Grivaux and
  M\"uller}{2011}]{Badea:Grivaux:Muller:2011}
\begin{barticle}[author]
\bauthor{\bsnm{Badea},~\bfnm{C.}\binits{C.}},
  \bauthor{\bsnm{Grivaux},~\bfnm{S.}\binits{S.}} \AND
  \bauthor{\bsnm{M\"uller},~\bfnm{V.}\binits{V.}}
(\byear{2011}).
\btitle{The rate of convergence in the method of alternating projections}.
\bjournal{Algebra i Analiz}
\bvolume{23}
\bpages{1--30}.
\bdoi{10.1090/S1061-0022-2012-01202-1}
\bmrnumber{2896163}
\end{barticle}
\endbibitem

\bibitem[\protect\citeauthoryear{Bakry, Gentil and
  Ledoux}{2014}]{Bakry:Gentil:Ledoux:2014}
\begin{bbook}[author]
\bauthor{\bsnm{Bakry},~\bfnm{Dominique}\binits{D.}},
  \bauthor{\bsnm{Gentil},~\bfnm{Ivan}\binits{I.}} \AND
  \bauthor{\bsnm{Ledoux},~\bfnm{Michel}\binits{M.}}
(\byear{2014}).
\btitle{Analysis and geometry of {M}arkov diffusion operators}.
\bseries{Grundlehren der mathematischen Wissenschaften [Fundamental Principles
  of Mathematical Sciences]}
\bvolume{348}.
\bpublisher{Springer, Cham}.
\bdoi{10.1007/978-3-319-00227-9}
\bmrnumber{3155209}
\end{bbook}
\endbibitem

\bibitem[\protect\citeauthoryear{Chlebicka, {\L}atuszy{\'{n}}ski and
  Miasojedow}{2025}]{Chlebicka:Latuszynski:Miasojedow:2025}
\begin{barticle}[author]
\bauthor{\bsnm{Chlebicka},~\bfnm{Iwona}\binits{I.}},
  \bauthor{\bsnm{{\L}atuszy{\'{n}}ski},~\bfnm{Krzysztof}\binits{K.}} \AND
  \bauthor{\bsnm{Miasojedow},~\bfnm{B{\l}a{\. {z}}ej}\binits{B.}}
(\byear{2025}).
\btitle{Solidarity of {G}ibbs samplers: the spectral gap}.
\bjournal{Ann. Appl. Probab.}
\bvolume{35}
\bpages{142--157}.
\bdoi{10.1214/24-aap2111}
\bmrnumber{4871706}
\end{barticle}
\endbibitem

\bibitem[\protect\citeauthoryear{Deutsch}{1992}]{Deustch:1992}
\begin{bincollection}[author]
\bauthor{\bsnm{Deutsch},~\bfnm{Frank}\binits{F.}}
(\byear{1992}).
\btitle{The method of alternating orthogonal projections}.
In \bbooktitle{Approximation theory, spline functions and applications
  ({M}aratea, 1991)}.
\bseries{NATO Adv. Sci. Inst. Ser. C: Math. Phys. Sci.}
\bvolume{356}
\bpages{105--121}.
\bpublisher{Kluwer Acad. Publ., Dordrecht}.
\bmrnumber{1165964}
\end{bincollection}
\endbibitem

\bibitem[\protect\citeauthoryear{Deutsch and
  Hundal}{2010}]{Deutsch:Hundal:2010}
\begin{barticle}[author]
\bauthor{\bsnm{Deutsch},~\bfnm{Frank}\binits{F.}} \AND
  \bauthor{\bsnm{Hundal},~\bfnm{Hein}\binits{H.}}
(\byear{2010}).
\btitle{Slow convergence of sequences of linear operators {II}: arbitrarily
  slow convergence}.
\bjournal{J. Approx. Theory}
\bvolume{162}
\bpages{1717--1738}.
\bdoi{10.1016/j.jat.2010.05.002}
\bmrnumber{2718893}
\end{barticle}
\endbibitem

\bibitem[\protect\citeauthoryear{Diaconis, Khare and
  Saloff-Coste}{2008}]{Diaconis:Khare:Saloff-Coste:2008}
\begin{barticle}[author]
\bauthor{\bsnm{Diaconis},~\bfnm{Persi}\binits{P.}},
  \bauthor{\bsnm{Khare},~\bfnm{Kshitij}\binits{K.}} \AND
  \bauthor{\bsnm{Saloff-Coste},~\bfnm{Laurent}\binits{L.}}
(\byear{2008}).
\btitle{Gibbs sampling, exponential families and orthogonal polynomials}.
\bjournal{Statist. Sci.}
\bvolume{23}
\bpages{151--178}.
\bnote{With comments and a rejoinder by the authors}.
\bdoi{10.1214/07-STS252}
\bmrnumber{2446500}
\end{barticle}
\endbibitem

\bibitem[\protect\citeauthoryear{Diaconis, Khare and
  Saloff-Coste}{2010}]{Diaconis:Khare:Saloff-Coste:2010}
\begin{barticle}[author]
\bauthor{\bsnm{Diaconis},~\bfnm{Persi}\binits{P.}},
  \bauthor{\bsnm{Khare},~\bfnm{Kshitij}\binits{K.}} \AND
  \bauthor{\bsnm{Saloff-Coste},~\bfnm{Laurent}\binits{L.}}
(\byear{2010}).
\btitle{Stochastic alternating projections}.
\bjournal{Illinois J. Math.}
\bvolume{54}
\bpages{963--979}.
\bmrnumber{2928343}
\end{barticle}
\endbibitem

\bibitem[\protect\citeauthoryear{Gelfand and Smith}{1990}]{Gelfand:Smith:1990}
\begin{barticle}[author]
\bauthor{\bsnm{Gelfand},~\bfnm{Alan~E.}\binits{A.~E.}} \AND
  \bauthor{\bsnm{Smith},~\bfnm{Adrian F.~M.}\binits{A.~F.~M.}}
(\byear{1990}).
\btitle{Sampling-based approaches to calculating marginal densities}.
\bjournal{J. Amer. Statist. Assoc.}
\bvolume{85}
\bpages{398--409}.
\bmrnumber{1141740}
\end{barticle}
\endbibitem

\bibitem[\protect\citeauthoryear{Geman and Geman}{1984}]{Geman:Geman:1984}
\begin{barticle}[author]
\bauthor{\bsnm{Geman},~\bfnm{Stuart}\binits{S.}} \AND
  \bauthor{\bsnm{Geman},~\bfnm{Donald}\binits{D.}}
(\byear{1984}).
\btitle{Stochastic relaxation, Gibbs distributions, and the Bayesian
  restoration of images}.
\bjournal{IEEE Transactions on Pattern Analysis and Machine Intelligence}
\bvolume{PAMI-6}
\bpages{721-741}.
\bdoi{10.1109/TPAMI.1984.4767596}
\end{barticle}
\endbibitem

\bibitem[\protect\citeauthoryear{Geyer}{2005}]{Geyer:2005}
\begin{bmisc}[author]
\bauthor{\bsnm{Geyer},~\bfnm{Charles~J.}\binits{C.~J.}}
(\byear{2005}).
\btitle{Markov chain Monte Carlo lecture notes}.
\bhowpublished{\url{https://www.stat.umn.edu/geyer/f05/8931/n1998.pdf}}.
\bnote{Lecture Notes, University of Minnesota}.
\end{bmisc}
\endbibitem

\bibitem[\protect\citeauthoryear{Halmos}{1951}]{Halmos:1951}
\begin{bbook}[author]
\bauthor{\bsnm{Halmos},~\bfnm{Paul~R.}\binits{P.~R.}}
(\byear{1951}).
\btitle{Introduction to {H}ilbert space and the theory of spectral
  multiplicity}.
\bpublisher{Chelsea Publishing Co., New York}.
\bmrnumber{45309}
\end{bbook}
\endbibitem

\bibitem[\protect\citeauthoryear{Halperin}{1962}]{Halperin:1962}
\begin{barticle}[author]
\bauthor{\bsnm{Halperin},~\bfnm{Israel}\binits{I.}}
(\byear{1962}).
\btitle{The product of projection operators}.
\bjournal{Acta Sci. Math. (Szeged)}
\bvolume{23}
\bpages{96--99}.
\bmrnumber{141978}
\end{barticle}
\endbibitem

\bibitem[\protect\citeauthoryear{He et~al.}{2016}]{He:De_Sa:Mitliagkas:Re:2016}
\begin{binproceedings}[author]
\bauthor{\bsnm{He},~\bfnm{Bryan~D}\binits{B.~D.}},
  \bauthor{\bsnm{De~Sa},~\bfnm{Christopher~M}\binits{C.~M.}},
  \bauthor{\bsnm{Mitliagkas},~\bfnm{Ioannis}\binits{I.}} \AND
  \bauthor{\bsnm{R\'{e}},~\bfnm{Christopher}\binits{C.}}
(\byear{2016}).
\btitle{Scan order in Gibbs sampling: models in which it matters and bounds on
  how much}.
In \bbooktitle{Advances in Neural Information Processing Systems}
(\beditor{\bfnm{D.}\binits{D.}~\bsnm{Lee}},
  \beditor{\bfnm{M.}\binits{M.}~\bsnm{Sugiyama}},
  \beditor{\bfnm{U.}\binits{U.}~\bsnm{Luxburg}},
  \beditor{\bfnm{I.}\binits{I.}~\bsnm{Guyon}} \AND
  \beditor{\bfnm{R.}\binits{R.}~\bsnm{Garnett}}, eds.)
\bvolume{29}.
\bpublisher{Curran Associates, Inc.}
\end{binproceedings}
\endbibitem

\bibitem[\protect\citeauthoryear{Jones, Roberts and
  Rosenthal}{2014}]{Jones:Roberts:Rosenthal:2014}
\begin{barticle}[author]
\bauthor{\bsnm{Jones},~\bfnm{Galin~L.}\binits{G.~L.}},
  \bauthor{\bsnm{Roberts},~\bfnm{Gareth~O.}\binits{G.~O.}} \AND
  \bauthor{\bsnm{Rosenthal},~\bfnm{Jeffrey~S.}\binits{J.~S.}}
(\byear{2014}).
\btitle{Convergence of conditional {M}etropolis-{H}astings samplers}.
\bjournal{Adv. in Appl. Probab.}
\bvolume{46}
\bpages{422--445}.
\bdoi{10.1239/aap/1401369701}
\bmrnumber{3215540}
\end{barticle}
\endbibitem

\bibitem[\protect\citeauthoryear{Kontoyiannis and
  Meyn}{2012}]{Kontoyiannis:Meyn:2012}
\begin{barticle}[author]
\bauthor{\bsnm{Kontoyiannis},~\bfnm{I.}\binits{I.}} \AND
  \bauthor{\bsnm{Meyn},~\bfnm{S.~P.}\binits{S.~P.}}
(\byear{2012}).
\btitle{Geometric ergodicity and the spectral gap of non-reversible {M}arkov
  chains}.
\bjournal{Probab. Theory Related Fields}
\bvolume{154}
\bpages{327--339}.
\bdoi{10.1007/s00440-011-0373-4}
\bmrnumber{2981426}
\end{barticle}
\endbibitem

\bibitem[\protect\citeauthoryear{Lapidus}{1981}]{Lapidus:1981}
\begin{barticle}[author]
\bauthor{\bsnm{Lapidus},~\bfnm{Michel~L.}\binits{M.~L.}}
(\byear{1981}).
\btitle{Generalization of the {T}rotter-{L}ie formula}.
\bjournal{Integral Equations Operator Theory}
\bvolume{4}
\bpages{366--415}.
\bdoi{10.1007/BF01697972}
\bmrnumber{623544}
\end{barticle}
\endbibitem

\bibitem[\protect\citeauthoryear{Liu}{1994}]{Liu:1994}
\begin{barticle}[author]
\bauthor{\bsnm{Liu},~\bfnm{Jun~S.}\binits{J.~S.}}
(\byear{1994}).
\btitle{The collapsed {G}ibbs sampler in {B}ayesian computations with
  applications to a gene regulation problem}.
\bjournal{J. Amer. Statist. Assoc.}
\bvolume{89}
\bpages{958--966}.
\bmrnumber{1294740}
\end{barticle}
\endbibitem

\bibitem[\protect\citeauthoryear{Liu, Wong and Kong}{1994}]{Liu:Wong:Kong:1994}
\begin{barticle}[author]
\bauthor{\bsnm{Liu},~\bfnm{Jun~S.}\binits{J.~S.}},
  \bauthor{\bsnm{Wong},~\bfnm{Wing~Hung}\binits{W.~H.}} \AND
  \bauthor{\bsnm{Kong},~\bfnm{Augustine}\binits{A.}}
(\byear{1994}).
\btitle{Covariance structure of the {G}ibbs sampler with applications to the
  comparisons of estimators and augmentation schemes}.
\bjournal{Biometrika}
\bvolume{81}
\bpages{27--40}.
\bdoi{10.1093/biomet/81.1.27}
\bmrnumber{1279653}
\end{barticle}
\endbibitem

\bibitem[\protect\citeauthoryear{Meyn and Tweedie}{2009}]{Meyn:Tweedie:2009}
\begin{bbook}[author]
\bauthor{\bsnm{Meyn},~\bfnm{Sean}\binits{S.}} \AND
  \bauthor{\bsnm{Tweedie},~\bfnm{Richard~L.}\binits{R.~L.}}
(\byear{2009}).
\btitle{Markov chains and stochastic stability},
\bedition{2nd} ed.
\bpublisher{Cambridge University Press, Cambridge}.
\bdoi{10.1017/CBO9780511626630}
\bmrnumber{2509253}
\end{bbook}
\endbibitem

\bibitem[\protect\citeauthoryear{Mira and Geyer}{1999}]{Mira:Geyer:1999}
\begin{btechreport}[author]
\bauthor{\bsnm{Mira},~\bfnm{Antonietta}\binits{A.}} \AND
  \bauthor{\bsnm{Geyer},~\bfnm{Charles~J.}\binits{C.~J.}}
(\byear{1999}).
\btitle{Ordering Monte Carlo Markov chains.}
\btype{Technical Report} No. \bnumber{632},
\bpublisher{School of Statistics, University of Minnesota}.
\end{btechreport}
\endbibitem

\bibitem[\protect\citeauthoryear{Murphy}{1990}]{Murphy:1990}
\begin{bbook}[author]
\bauthor{\bsnm{Murphy},~\bfnm{Gerard~J.}\binits{G.~J.}}
(\byear{1990}).
\btitle{{$C^*$}-algebras and operator theory}.
\bpublisher{Academic Press, Inc., Boston, MA}.
\bmrnumber{1074574}
\end{bbook}
\endbibitem

\bibitem[\protect\citeauthoryear{Papaspiliopoulos and
  Roberts}{2008}]{Papaspiliopoulos:Roberts:2008}
\begin{barticle}[author]
\bauthor{\bsnm{Papaspiliopoulos},~\bfnm{Omiros}\binits{O.}} \AND
  \bauthor{\bsnm{Roberts},~\bfnm{Gareth~O.}\binits{G.~O.}}
(\byear{2008}).
\btitle{Stability of the {G}ibbs sampler for {B}ayesian hierarchical models}.
\bjournal{Ann. Statist.}
\bvolume{36}
\bpages{95--117}.
\bdoi{10.1214/009053607000000749}
\bmrnumber{2387965}
\end{barticle}
\endbibitem

\bibitem[\protect\citeauthoryear{Psarakis and
  Panaretos}{1990}]{Psarakis:Panaretos:1990}
\begin{barticle}[author]
\bauthor{\bsnm{Psarakis},~\bfnm{S.}\binits{S.}} \AND
  \bauthor{\bsnm{Panaretos},~\bfnm{J.}\binits{J.}}
(\byear{1990}).
\btitle{The folded {$t$} distribution}.
\bjournal{Comm. Statist. Theory Methods}
\bvolume{19}
\bpages{2717--2734}.
\bdoi{10.1080/03610929008830342}
\bmrnumber{1086030}
\end{barticle}
\endbibitem

\bibitem[\protect\citeauthoryear{Qin, Hobert and
  Khare}{2019}]{Qin:Hobert:Khare:2019}
\begin{barticle}[author]
\bauthor{\bsnm{Qin},~\bfnm{Qian}\binits{Q.}},
  \bauthor{\bsnm{Hobert},~\bfnm{James~P.}\binits{J.~P.}} \AND
  \bauthor{\bsnm{Khare},~\bfnm{Kshitij}\binits{K.}}
(\byear{2019}).
\btitle{Estimating the spectral gap of a trace-class {M}arkov operator}.
\bjournal{Electron. J. Stat.}
\bvolume{13}
\bpages{1790--1822}.
\bdoi{10.1214/19-EJS1563}
\bmrnumber{3959873}
\end{barticle}
\endbibitem

\bibitem[\protect\citeauthoryear{Qin and Jones}{2022}]{Qin:Jones:2022}
\begin{barticle}[author]
\bauthor{\bsnm{Qin},~\bfnm{Qian}\binits{Q.}} \AND
  \bauthor{\bsnm{Jones},~\bfnm{Galin~L.}\binits{G.~L.}}
(\byear{2022}).
\btitle{Convergence rates of two-component {MCMC} samplers}.
\bjournal{Bernoulli}
\bvolume{28}
\bpages{859--885}.
\bdoi{10.3150/21-bej1369}
\bmrnumber{4388922}
\end{barticle}
\endbibitem

\bibitem[\protect\citeauthoryear{Reich}{1983}]{Reich:1983}
\begin{barticle}[author]
\bauthor{\bsnm{Reich},~\bfnm{Simeon}\binits{S.}}
(\byear{1983}).
\btitle{A limit theorem for projections}.
\bjournal{Linear and Multilinear Algebra}
\bvolume{13}
\bpages{281--290}.
\bdoi{10.1080/03081088308817526}
\bmrnumber{700890}
\end{barticle}
\endbibitem

\bibitem[\protect\citeauthoryear{Roberts and
  Rosenthal}{1997}]{Roberts:Rosenthal:1997}
\begin{barticle}[author]
\bauthor{\bsnm{Roberts},~\bfnm{Gareth~O.}\binits{G.~O.}} \AND
  \bauthor{\bsnm{Rosenthal},~\bfnm{Jeffrey~S.}\binits{J.~S.}}
(\byear{1997}).
\btitle{Geometric ergodicity and hybrid {M}arkov chains}.
\bjournal{Electron. Comm. Probab.}
\bvolume{2}
\bpages{13--25}.
\bdoi{10.1214/ECP.v2-981}
\bmrnumber{1448322}
\end{barticle}
\endbibitem

\bibitem[\protect\citeauthoryear{Roberts and
  Rosenthal}{2001}]{Roberts:Rosenthal:2001}
\begin{barticle}[author]
\bauthor{\bsnm{Roberts},~\bfnm{Gareth~O.}\binits{G.~O.}} \AND
  \bauthor{\bsnm{Rosenthal},~\bfnm{Jeffrey~S.}\binits{J.~S.}}
(\byear{2001}).
\btitle{Markov chains and de-initializing processes}.
\bjournal{Scand. J. Statist.}
\bvolume{28}
\bpages{489--504}.
\bdoi{10.1111/1467-9469.00250}
\bmrnumber{1858413}
\end{barticle}
\endbibitem

\bibitem[\protect\citeauthoryear{Roberts and
  Rosenthal}{2015}]{Roberts:Rosenthal:2015}
\begin{barticle}[author]
\bauthor{\bsnm{Roberts},~\bfnm{Gareth~O.}\binits{G.~O.}} \AND
  \bauthor{\bsnm{Rosenthal},~\bfnm{Jeffrey~S.}\binits{J.~S.}}
(\byear{2015}).
\btitle{Surprising convergence properties of some simple Gibbs samplers under
  various scans}.
\bjournal{International Journal of Statistics and Probability}
\bvolume{5}
\bpages{51--60}.
\end{barticle}
\endbibitem

\bibitem[\protect\citeauthoryear{Roberts and Sahu}{1997}]{Roberts:Sahu:1997}
\begin{barticle}[author]
\bauthor{\bsnm{Roberts},~\bfnm{G.~O.}\binits{G.~O.}} \AND
  \bauthor{\bsnm{Sahu},~\bfnm{S.~K.}\binits{S.~K.}}
(\byear{1997}).
\btitle{Updating schemes, correlation structure, blocking and parameterization
  for the {G}ibbs sampler}.
\bjournal{J. Roy. Statist. Soc. Ser. B}
\bvolume{59}
\bpages{291--317}.
\bdoi{10.1111/1467-9868.00070}
\bmrnumber{1440584}
\end{barticle}
\endbibitem

\bibitem[\protect\citeauthoryear{Roberts and
  Tweedie}{2001}]{Roberts:Tweedie:2001}
\begin{barticle}[author]
\bauthor{\bsnm{Roberts},~\bfnm{Gareth~O.}\binits{G.~O.}} \AND
  \bauthor{\bsnm{Tweedie},~\bfnm{Richard~L.}\binits{R.~L.}}
(\byear{2001}).
\btitle{Geometric {$L^2$} and {$L^1$} convergence are equivalent for reversible
  {M}arkov chains}.
\bjournal{J. Appl. Probab.}
\bvolume{38}
\bpages{37--41}.
\bdoi{10.1239/jap/1085496589}
\bmrnumber{1915532}
\end{barticle}
\endbibitem

\bibitem[\protect\citeauthoryear{Rosenthal}{1995}]{Rosenthal:1995}
\begin{barticle}[author]
\bauthor{\bsnm{Rosenthal},~\bfnm{Jeffrey~S.}\binits{J.~S.}}
(\byear{1995}).
\btitle{Minorization conditions and convergence rates for {M}arkov chain
  {M}onte {C}arlo}.
\bjournal{J. Amer. Statist. Assoc.}
\bvolume{90}
\bpages{558--566}.
\bmrnumber{1340509}
\end{barticle}
\endbibitem

\bibitem[\protect\citeauthoryear{Rosenthal and
  Rosenthal}{2015}]{Rosenthal:Rosenthal:2015}
\begin{barticle}[author]
\bauthor{\bsnm{Rosenthal},~\bfnm{Jeffrey~S.}\binits{J.~S.}} \AND
  \bauthor{\bsnm{Rosenthal},~\bfnm{Peter}\binits{P.}}
(\byear{2015}).
\btitle{Spectral bounds for certain two-factor non-reversible {MCMC}
  algorithms}.
\bjournal{Electron. Commun. Probab.}
\bvolume{20}
\bpages{1--10}.
\bdoi{10.1214/ECP.v20-4528}
\bmrnumber{3434208}
\end{barticle}
\endbibitem

\bibitem[\protect\citeauthoryear{Tan and Hobert}{2009}]{Tan:Hobert:2009}
\begin{barticle}[author]
\bauthor{\bsnm{Tan},~\bfnm{Aixin}\binits{A.}} \AND
  \bauthor{\bsnm{Hobert},~\bfnm{James~P.}\binits{J.~P.}}
(\byear{2009}).
\btitle{Block {G}ibbs sampling for {B}ayesian random effects models with
  improper priors: convergence and regeneration}.
\bjournal{J. Comput. Graph. Statist.}
\bvolume{18}
\bpages{861--878}.
\bdoi{10.1198/jcgs.2009.08153}
\bmrnumber{2598033}
\end{barticle}
\endbibitem

\bibitem[\protect\citeauthoryear{Tierney}{1994}]{Tierney:1994}
\begin{barticle}[author]
\bauthor{\bsnm{Tierney},~\bfnm{Luke}\binits{L.}}
(\byear{1994}).
\btitle{Markov chains for exploring posterior distributions}.
\bjournal{Ann. Statist.}
\bvolume{22}
\bpages{1701--1762}.
\bnote{With discussion and a rejoinder by the author}.
\bdoi{10.1214/aos/1176325750}
\bmrnumber{1329166}
\end{barticle}
\endbibitem

\bibitem[\protect\citeauthoryear{van Dyk and Park}{2008}]{van_Dyk:Park:2008}
\begin{barticle}[author]
\bauthor{\bparticle{van} \bsnm{Dyk},~\bfnm{David~A.}\binits{D.~A.}} \AND
  \bauthor{\bsnm{Park},~\bfnm{Taeyoung}\binits{T.}}
(\byear{2008}).
\btitle{Partially collapsed {G}ibbs samplers: theory and methods}.
\bjournal{J. Amer. Statist. Assoc.}
\bvolume{103}
\bpages{790--796}.
\bdoi{10.1198/016214508000000409}
\bmrnumber{2524010}
\end{barticle}
\endbibitem

\bibitem[\protect\citeauthoryear{von Neumann}{1949}]{von_Neumann:1949}
\begin{barticle}[author]
\bauthor{\bparticle{von} \bsnm{Neumann},~\bfnm{John}\binits{J.}}
(\byear{1949}).
\btitle{On rings of operators. {R}eduction theory}.
\bjournal{Ann. of Math. (2)}
\bvolume{50}
\bpages{401--485}.
\bdoi{10.2307/1969463}
\bmrnumber{29101}
\end{barticle}
\endbibitem

\end{thebibliography}

%% or include bibliography directly:
% \begin{thebibliography}{}
% \bibitem{b1}
% \end{thebibliography}

\end{document}